\renewcommand{\epsilon}{\varepsilon}
\def\a{\alpha}
\def\C{\mathbb{C}}
\def\bg{\boldsymbol{\gamma}}
\def\beq{\begin{equation}}
\def\eeq{\end{equation}}
\def\beqq{\begin{equation*}}
\def\eeqq{\end{equation*}}
\def\bs{\begin{split}}
	\def\es{\end{split}}
\def\bg{\bm{\gamma}}
\def\bl{{\boldsymbol{\lambda}}}
\def\bbg{\underline{\bm{\gamma}}}
\def\bbl{\underline{\boldsymbol{\lambda}}}
\def\bo{\boldsymbol{\omega}}
\def\bx{\boldsymbol{x}}
\def\by{\boldsymbol{y}}
\def\bbx{\underline{\boldsymbol{x}}}
\def\bby{\underline{\boldsymbol{y}}}
\def\bbt{\underline{\boldsymbol{t}}}
\def\bbz{\underline{\boldsymbol{z}}}
\def\const{{2\pi\imath}}
\def\cbk{\color{black}}
\def\Im{\operatorname{Im}}
\def\g{\gamma}
\def\gg{{\hat{g}^\ast}}
\def\i{\hat{i}}
\def\K{{K}}
\def\KK{\hat{\K}}
\def\l{\lambda}
\def\mm{\hat{\mu}}
\def\o{\omega}
\def\R{\mathbb{R}}
\def\Re{\mathrm{Re}\,}
\def\ve{\varepsilon}
\def\vf{\varphi}
\newtheorem{lemma}{Lemma}
\newtheorem{proposition}{Proposition}
\newtheorem*{proposition*}{Proposition}
\newtheorem{corollary}{Corollary}
\newtheorem*{theorem*}{Theorem}
\newtheorem{remark}{Remark}
\newcommand{\rf}[1]{(\ref{#1})}
\newcommand{\LL}[1]{\Lambda_{#1}}
\def\LLL{\hat{\Lambda}}
\def\QQ{\hat{Q}}
\begin{document}
\begin{center}
{\bf \large Baxter operators in Ruijsenaars hyperbolic system II. \\[4pt] Bispectral wave functions}
\bigskip

{\bf N. Belousov$^{\dagger\times}$, S. Derkachov$^{\dagger\times}$, S. Kharchev$^{\bullet\ast}$, S. Khoroshkin$^{\circ\ast}$
}\medskip\\
$^\dagger${\it Steklov Mathematical Institute, Fontanka 27, St. Petersburg, 191023, Russia;}\smallskip\\
$^\times${\it National Research University Higher School of Economics, Soyuza Pechatnikov 16, \\St. Petersburg, 190121, Russia;}\smallskip\\
$^\bullet${\it National Research Center ``Kurchatov Institute'', 123182, Moscow, Russia;}\smallskip\\
$^\circ${\it National Research University Higher School of Economics, Myasnitskaya 20, \\Moscow, 101000, Russia;}\smallskip\\
$^\ast${\it Institute for Information Transmission Problems RAS (Kharkevich Institute), \\Bolshoy Karetny per. 19, Moscow, 127994, Russia}
\end{center}

\begin{abstract}
\noindent In the previous paper we introduced a commuting family of Baxter $Q$-operators for the quantum Ruijsenaars hyperbolic system. In the present work we show that the wave functions of the quantum system found by M. Hallnäs  and S. Ruijsenaars also diagonalize Baxter operators. Using this property we prove the conjectured duality relation for the wave function. As a corollary, we show that the wave function solves bispectral problems for pairs of dual Macdonald and Baxter operators. Besides, we prove the conjectured symmetry of the wave function with respect to spectral variables and obtain new integral representation for it.
\end{abstract}

\tableofcontents

\section{Introduction}
\subsection{Ruijsenaars system and Baxter $Q$-operators}
 In \cite{BDKK} we defined Baxter $Q$-operators for the Ruijsenaars hyperbolic system and proved their commutativity. Now we apply the results of \cite{BDKK} to the study of the wave functions of this system.

\color{black} Denote by $T^{a}_{x_i}$ the shift operator
\begin{equation}
T^{a}_{x_i}:=e^{a\partial_{x_i}}, \qquad \left( T^{a}_{x_i} \, f \right)(x_1,\ldots,x_i,\ldots,x_n) = f(x_1,\ldots,x_i+a,\ldots,x_n)
\end{equation}
and define its products for any subset $I\subset[n] = \{1, \dots, n\}$
\begin{equation}
T^{a}_{I,x}=\prod_{i \in I} T_{x_i}^a.
\end{equation}
The Ruijsenaars system \cite{R4} is governed by commuting symmetric difference operators
\begin{equation}
\label{I2}
H_r(\bx_n;g|\bo) = \sum_{\substack{I\subset[n] \\ |I|=r}}
\prod_{\substack{i\in I \\ j\notin I}}
\frac{\sh^{\frac{1}{2}}\frac{\pi}{\o_2}\left(x_i-x_j-\imath g\right)}
{\sh^{\frac{1}{2}}\frac{\pi}{\o_2}\left(x_i-x_j\right)}
\cdot T^{-\imath\o_1}_{I,x}\cdot \prod_{\substack{i\in I \\ j\notin I}}
\frac{\sh^{\frac{1}{2}}\frac{\pi}{\o_2}\left(x_i-x_j+\imath g\right)}
{\sh^{\frac{1}{2}}\frac{\pi}{\o_2}\left(x_i-x_j\right)}.
\end{equation}
Here and in what follows we denote tuples of $n$ variables as
 \begin{equation}
	\bm{x}_n = (x_1, \dots,x_n).
\end{equation}
One can also consider gauge equivalent Macdonald operators
\begin{equation}
	\label{I2a}
	M_r(\bx_n;g|\bo) = \sum_{\substack{I\subset[n] \\ |I|=r}}
	\prod_{\substack{i\in I \\ j\notin I}}
	\frac{\sh\frac{\pi}{\o_2}\left(x_i-x_j-\imath g\right)}
	{\sh\frac{\pi}{\o_2}\left(x_i-x_j\right)}
	\cdot T^{-\imath\o_1}_{I,x}.
\end{equation}
Both families of operators are parametrized by three constants: periods $\bm{\omega}=(\omega_1, \omega_2)$ and coupling constant $g$,
 which originally are supposed to be real positive. The equivalence is established by means of the measure function
\beq\label{I5}
\mu(\bx_n)=\prod_{\substack{i,j=1 \\ i\not=j}}^n\mu(x_i-x_j),\quad\text{where}\quad
\mu(x):=\mu_g(x|\bo)=S_2(\imath x|\bo) S_2^{-1}(\imath x+g|\bo). \eeq
Here $S_2(z|\bo)$ is the double sine function, see Appendix \ref{AppA}.
Namely,
\beq\label{I4}
\sqrt{\mu (\bx_n)} \,
M_r(\bx_n;g|\bo) \, \frac{1}{\sqrt{\mu  (\bx_n)}}=
H_r(\bx_n,g|\bo).
\eeq
Ruijsenaars operators are symmetric with respect to the sesquilinear scalar product defined by the Lebesque measure $d\bx_n$, while the Macdonald operators are symmetric with respect to the scalar product related to the measure
\beq\label{measure} \mu  (\bx_n) \, d\bx_n,\qquad \bm{x}_n\in\R^n.\eeq

In this paper, as well as in \cite{BDKK} and unlike the original Ruijsenaars setting, we consider periods $\bo$ and coupling constant $g$ to be complex valued,  assuming that
\beq\label{I0a} \Re \o_1 > 0, \qquad \Re \o_2 > 0,\qquad 0< \Re g<\Re \o_1+\Re \o_2  \eeq
and
\beq\label{I0b} \nu_g=\Re\frac{ g}{\o_1\o_2}>0.\eeq
Denote by $\K(x)$ the following function of a complex variable
\beq\label{I6} \K(x):=\K_g(x|\bo)=S_2^{-1}\Bigl(\imath x +\frac{g^\ast}{2}\Big|\bo\Bigr)S_2^{-1}\Bigl(-\imath x+\frac{g^\ast}{2}\Big|\bo\Bigr)\eeq
written in terms of the double sine fuction $S_2\bigl(z\big|\bo\bigr)$ and
\beq\label{I3b} g^\ast=\o_1+\o_2-g.\eeq
We also use the products of this function
\begin{equation}\label{I6a}
\K(\bx_n,\by_m)=\prod_{i=1}^n \prod_{j = 1}^m \K(x_i-y_j).
\end{equation}
In  \cite{BDKK} we introduced a family of operators $Q_{n}(\lambda)$ parameterized by $\lambda \in \mathbb{C}$ and called Baxter $Q$-operators. These are integral operators
\begin{equation}\label{I14}
\left( Q_{n}(\lambda) f\right) (\bm{x}_n) =  d_n(g|\bo) \cbk\, \int_{\mathbb{R}^n} d\bm{y}_n \, Q(\bm{x}_n, \bm{y}_n; \lambda) f(\bm{y}_n)
\end{equation}
with the kernel
\beq\label{Qker} Q(\bx_n,\by_{n};\l)= e^{\const \l(\bbx_n-\bby_n)}
\K(\bx_n,\by_{n})\mu  (\by_{n}), \qquad x_j,y_j \in \R
\eeq
and normalizing constant
\begin{equation}\label{dconst}
d_n(g|\bo) = \frac{1}{n!} \left[ \sqrt{\omega_1 \omega_2} S_2(g|\bo) \right]^{-n}.
\end{equation}
Here and below for a tuple $\bx_n=(x_1,\ldots,x_n)$ we use the notation $\bbx_n$ for the sum of components
\beq\label{I10a} \bbx_n=x_1+\ldots+x_n.
\eeq
We remark that the constant $d_n(g|\bo)$ was absent in our previous paper. We added it here to simplify formula for $Q$-operator's eigenvalue in the next subsection. The main result of \cite{BDKK} is given by the following theorem.
\begin{theorem*} \cite{BDKK}  Under assumptions \rf{I0a}, \rf{I0b} $Q$-operators commute
\begin{equation}\label{QQcomm}
Q_{n}(\lambda) \, Q_{n}(\rho) = Q_{n}(\rho) \, Q_{n}(\lambda).
\end{equation}
The kernels of the operators in both sides of \rf{QQcomm} are analytic functions of $\l,\rho$ in the strip
\beq \label{QQcond}  |\Im(\lambda - \rho)| < \nu_g. \eeq
\end{theorem*}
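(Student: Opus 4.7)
The plan is to work directly with the kernel of the composition. Composing the two integral operators $Q_{n}(\lambda)$ and $Q_{n}(\rho)$ yields, after interchanging the order of integrations,
\begin{equation*}
K_{\lambda,\rho}(\bx_n,\by_n) = d_n(g|\bo)^2 \, \mu(\by_n) \, e^{\const (\lambda \bbx_n - \rho \bby_n)} \int_{\R^n} d\bz_n \, e^{\const (\rho - \lambda) \bbz_n} \, \K(\bx_n,\bz_n)\, \K(\bz_n,\by_n)\, \mu(\bz_n),
\end{equation*}
and an analogous expression for $K_{\rho,\lambda}$ with $\lambda$ and $\rho$ exchanged. Since $\K(x)=\K(-x)$ by (\ref{I6}), the factor $\K(\bx_n,\bz_n)\K(\bz_n,\by_n)$ is symmetric under $\bx_n\leftrightarrow\by_n$, so the task is to exchange the spectral parameters while simultaneously rearranging the exponential prefactors.

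My preferred route, following the strategy developed by Derkachov, Kharchev, Khoroshkin and co-authors for analogous Baxter operators in Gaudin- and Macdonald-type models, is to factorize
\begin{equation*}
Q_{n}(\lambda) = \Lambda_{n}(\lambda)\,\Lambda_{n-1}(\lambda)\,\cdots\,\Lambda_{1}(\lambda),
\end{equation*}
where each $\Lambda_k(\lambda)$ is a one-step integral transform (changing the number of variables by one) whose kernel is built from $\K$-factors twisted by $e^{\const\lambda(\cdot)}$. Commutativity $Q_{n}(\lambda)\,Q_{n}(\rho) = Q_{n}(\rho)\,Q_{n}(\lambda)$ then reduces, by repeated local swaps along the telescope, to a single ``star-triangle'' exchange identity between neighbouring layers carrying the parameters $\lambda$ and $\rho$. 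That identity is in turn an instance of a Barnes-type hyperbolic integral evaluation for the double sine function $S_2$.

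The analyticity statement I would establish by showing that the integrand is jointly meromorphic in all variables with $\bz_n$-poles lying off the real axis, their distance from the real contour being controlled by the first poles of $1/S_2$. Under the assumptions (\ref{I0a})--(\ref{I0b}), the asymptotic estimates of $S_2$ as $\Re z_i \to \pm\infty$ make the integrand absolutely integrable precisely when the exponential growth rate $2\pi|\Im(\lambda-\rho)|$ is dominated by the decay rate of $\K\mu$, and this is exactly the strip $|\Im(\lambda-\rho)|<\nu_g$. Absolute convergence uniform on compact subsets of that strip, together with Morera's theorem, then yields joint analyticity of $K_{\lambda,\rho}$ in $(\lambda,\rho)$.

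The main obstacle is the layer-level star-triangle identity. It is not a routine manipulation: it requires delicate contour shifts justified by the meromorphy and large-$z$ asymptotics of $S_2$, and relies on the pentagon identity for the double sine function together with the reflection $S_2(z|\bo)\,S_2(\o_1+\o_2-z|\bo)=1$. Once that identity is in hand, both the commutativity and the analyticity statement follow by essentially routine bookkeeping and standard estimates on $S_2$.
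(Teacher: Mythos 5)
First, a structural point: the theorem you are proving is not proved in this paper at all — it is quoted from the companion work [BDKK] ("Baxter operators in Ruijsenaars hyperbolic system I"), where the commutativity \eqref{QQcomm} and the convergence analysis in the strip \eqref{QQcond} are established. So there is no internal proof here to match; your attempt has to be judged on its own viability, and there it has a genuine gap. The proposed factorization $Q_{n}(\lambda)=\Lambda_{n}(\lambda)\,\Lambda_{n-1}(\lambda)\cdots\Lambda_{1}(\lambda)$ cannot hold as stated: by \eqref{I11} each $\Lambda_{k}(\lambda)$ maps functions of $k-1$ variables to functions of $k$ variables, so the right-hand side would be an operator from functions of zero variables to functions of $n$ variables, whereas $Q_{n}(\lambda)$ in \eqref{I14} preserves the number of variables. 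No factorization of $Q_n$ into raising operators with a single spectral parameter is available in this framework. Moreover, the logical order in this series of papers is the reverse of what you propose: the exchange relations between $\Lambda$-operators \eqref{ll} and between $Q$- and $\Lambda$-operators \eqref{QLcomm} are \emph{derived} from the commutativity \eqref{QQcomm} (Section \ref{sec-QL}, via a kernel identity for $Q_{n-1}(\lambda)Q_{n-1}(\rho)$ and a $z_n\to\infty$ limit), so invoking layer-swapping relations of this type to prove \eqref{QQcomm} runs a serious risk of circularity unless you produce an independent proof of the local identity. That local "star-triangle" identity — in essence the symmetry in $\lambda\leftrightarrow\rho$ of $\int_{\R^n}d\bz_n\,e^{\const(\rho-\lambda)\bbz_n}\K(\bx_n,\bz_n)\K(\bz_n,\by_n)\mu(\bz_n)$ up to the stated exponential prefactors — is exactly the non-trivial content of the theorem, and your proposal leaves it unproven; a single application of the one-dimensional hyperbolic beta integral (the Fourier transform \eqref{K-fourier}) does not suffice for the $n$-fold coupled integral, which in [BDKK] requires a genuinely multidimensional argument.

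The part of your plan concerning convergence and analyticity is sound and matches the standard estimates: using \eqref{Kmu-bound}, the $\bz_n$-integrand decays like $\exp\bigl(\pi[\,2|\Im(\lambda-\rho)|-2\nu_g\,]\|\bz_n\|\bigr)$ after the triangle inequalities, giving absolute convergence, uniform on compact subsets, precisely for $|\Im(\lambda-\rho)|<\nu_g$, whence analyticity of the composed kernel by Morera. So the analyticity statement is recoverable by your method, but the commutativity itself is reduced to an identity you have not established, via a factorization that is dimensionally inconsistent; as it stands the central claim is unproved.
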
 
\begin{remark} The product of two $Q$-operators is a well-defined integral operator on the domain of fast decreasing functions, see Proposition 5 and remark after it in \cite{BDKK}. This is also true for the various products of $Q$- and $\Lambda$-operators below.
\end{remark} 

We also proved that these operators commute with Macdonald operators
\begin{equation}\label{QMcomm}
Q_{n}(\lambda) \, M_r(\bm{x}_n; g|\bo) = M_r(\bm{x}_n; g|\bo) \, Q_{n}(\lambda), \qquad r=1, \ldots, n
\end{equation}
under condition
\begin{equation}\label{I13a}
\Re g < \Re \o_2.
\end{equation}
This suggests that they should have common eigenfunctions.

\subsection{Wave functions and their symmetries}

Denote by $\LL{n}(\l)$ the integral operator similar to the Baxter $Q$-operator
\beq\label{I11}\begin{split} \left(\LL{n}(\l)f\right)(\bx_n)=d_{n - 1}(g|\bo) \int_{\R^{n-1}}d\by_{n-1} \, \Lambda(\bx_n,\by_{n-1};\l) f(\by_{n-1})
\end{split}\eeq
with the kernel
\beq\label{Lker} \Lambda(\bx_n,\by_{n-1};\l)= e^{\const \l(\bbx_n-\bby_{n-1})}
\K(\bx_n,\by_{n-1})\mu  (\by_{n-1}), \qquad x_j,y_j \in \R
\eeq
and constant $d_{n - 1}(g|\bo)$ given by the formula \eqref{dconst}.

M. Hallnäs  and S. Ruijsenaars proved  in \cite{HR1} that for real positive periods $\bo$ and the coupling constant
$g$,  $0<\Re g<\o_2$,  the function
\beq \label{I12}\Psi_{\bl_n}(\bx_n):=\Psi_{\bl_n}(\bx_n; g|\bo)=\LL{n}(\l_n) \, \LL{n-1}(\l_{n-1})\cdots \LL{2}(\l_2) \, e^{\const \l_1x_1}
\eeq
is a joint eigenfunction of Macdonald operators. Similar arguments show that the same statement holds for complex valued periods $\bo$ and the coupling constant $g$.
\begin{restatable}{thm}{theor}
	\label{th0} Under the conditions \rf{I0a},  \rf{I0b} and \rf{I13a} we have the equality
\beq \label{I13} M_r(\bx_n;g|\bo)\,\Psi_{\bl_n}(\bx_n)= e_r \bigl(e^{{2\pi \lambda_1}{\o_1}}, \dots, e^{{2\pi \lambda_n}{\o_1}}\bigr) \cbk \Psi_{\bl_n}(\bx_n).
\eeq
\end{restatable}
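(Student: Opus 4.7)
The plan is to proceed by induction on $n$, exploiting the recursive form $\Psi_{\bl_n}(\bx_n) = \LL{n}(\lambda_n) \, \Psi_{\bl_{n-1}}(\by_{n-1})$ built into the definition \rf{I12}. The base case $n = 1$ is a direct computation: since $M_1(x_1; g|\bo) = T^{-\imath\o_1}_{x_1}$ with no interaction coefficient, one has $T^{-\imath\o_1}_{x_1} e^{\const \lambda_1 x_1} = e^{2\pi\lambda_1\o_1} \, e^{\const \lambda_1 x_1}$, which matches $e_1(e^{2\pi\lambda_1\o_1})$.

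The crux of the inductive step is a Pieri-type intertwining relation
\begin{equation}\label{intertw}
M_r(\bx_n; g|\bo) \, \LL{n}(\lambda) = \LL{n}(\lambda) \, \Bigl( M_r(\by_{n-1}; g|\bo) + e^{2\pi\lambda\o_1} \, M_{r-1}(\by_{n-1}; g|\bo) \Bigr),
\end{equation}
with the conventions $M_0 = \Id$ and $M_n(\by_{n-1}) = 0$. Granting \eqref{intertw}, apply it to $\Psi_{\bl_{n-1}}(\by_{n-1})$ and invoke the inductive hypothesis; the Pascal-type identity $e_r(z_1, \dots, z_n) = e_r(z_1, \dots, z_{n-1}) + z_n \, e_{r-1}(z_1, \dots, z_{n-1})$ evaluated at $z_i = e^{2\pi\lambda_i\o_1}$ then yields the theorem.

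To establish \eqref{intertw} I would verify it on the integral kernel $\Lambda(\bx_n, \by_{n-1}; \lambda)$. The action of $T^{-\imath\o_1}_{I,x}$ on $\K(\bx_n, \by_{n-1})$ produces, via the basic difference equation $S_2(z + \o_1 | \bo) = 2 \sinh(\pi z / \o_2) \, S_2(z | \bo)$, hyperbolic-sine ratios which combine with the trigonometric prefactors of $M_r(\bx_n)$. The crucial observation is that after this recombination, the terms with $n \notin I$ reorganise (after transferring $\by_{n-1}$-shifts onto the test function) into $M_r(\by_{n-1})$ applied through $\LL{n}(\lambda)$, while the terms with $n \in I$ peel off one factor $T^{-\imath\o_1}_{x_n}$ acting on the plane-wave $e^{\const\lambda\bbx_n}$ to extract the scalar $e^{2\pi\lambda\o_1}$, and the remaining $r - 1$ indices assemble into $M_{r-1}(\by_{n-1})$. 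The presence of $\mu(\by_{n-1})$ in the kernel is precisely what makes this gauge rearrangement work, mirroring the equivalence \rf{I4} between $M_r$ and $H_r$.

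The main obstacle is the justification of the contour shifts which implement the transfer of shift operators from the kernel onto the test function. The kernel $\K(x_i - y_j)$ has poles in $y_j$ located at distances determined by $g$ and $g^\ast = \o_1 + \o_2 - g$ from the real axis; under the assumptions \rf{I0a}, \rf{I0b} and especially \rf{I13a}, the strip $0 \le \Im y_j \le \o_1$ is pole-free, so all necessary contour deformations are legitimate and no residue contributions spoil the identity. This is the same kind of analytic input exploited in \cite{BDKK} to make products of $Q$- and $\Lambda$-operators well-defined, and in particular explains why \rf{I13a} is required for the commutation \rf{QMcomm} to hold. A subsidiary check is the interchange of the finite-difference operator $M_r(\bx_n)$ with the $\by_{n-1}$-integration in \rf{I11}, which follows from the double-sine decay estimates recalled in Appendix \ref{AppA} once the test function is taken from a suitable fast-decrease class.
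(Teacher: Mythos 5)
Your overall strategy coincides with the paper's: induction on $n$, a Pieri-type intertwining relation $M_r(\bx_n;g|\bo)\,\LL{n}(\lambda)=\LL{n}(\lambda)\bigl(M_r+e^{2\pi\lambda\o_1}M_{r-1}\bigr)$ in the $\by_{n-1}$ variables, and the Pascal rule for elementary symmetric functions. The problem is that the heart of the matter — the intertwining relation itself — is asserted rather than proved. Your ``crucial observation'' that, after acting with $T^{-\imath\o_1}_{I,x}$ on $\K(\bx_n,\by_{n-1})$, the terms with $n\notin I$ reorganise into $M_r(\by_{n-1})$ and the terms with $n\in I$ into $e^{2\pi\lambda\o_1}M_{r-1}(\by_{n-1})$ is precisely the nontrivial content. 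What is actually needed is the kernel-function identity
\begin{equation*}
M_r(\bx_n;g|\bo)\,\K(\bx_n,\by_{n-1})=\Bigl[M_r(-\by_{n-1};g|\bo)+M_{r-1}(-\by_{n-1};g|\bo)\Bigr]\K(\bx_n,\by_{n-1}),
\end{equation*}
which the paper derives from the Kajihara--Noumi/Ruijsenaars trigonometric identity \rf{I9} degenerated as $y_n\to\imath\infty$ (\rf{I9b}, \rf{I8a}). That is a Cauchy-determinant-type identity mixing all subsets; it does not arise from a term-by-term recombination split according to whether $n\in I$, and you give neither a proof nor a citation for it. Moreover, even granting this identity, the dual Macdonald operators act on the $\by$-arguments of the kernel (with reflected sign), and converting them into operators acting on $\Psi_{\bl_{n-1}}(\by_{n-1})$ requires the symmetry of Macdonald operators with respect to the $\mu$-weighted bilinear pairing \rf{BM3a}, \rf{BM6}, together with the plane-wave bookkeeping that converts the factor $e^{2\pi r\l_n\o_1}$ coming from $e^{\const\l_n\bbx_n}$ into the single factor $e^{2\pi\l_n\o_1}$ in front of $M_{r-1}$; the remark that ``the presence of $\mu(\by_{n-1})$ makes the gauge rearrangement work'' does not supply this step.

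The analytic justification is also misstated. Under \rf{I13a} it is not true that the strip $0\le\Im y_j\le\o_1$ is pole-free: the kernel poles nearest to the contour lie at distance $\Re g^\ast/2$, see \rf{t04}, and $\Re\o_1<\Re g^\ast$ (which is what \rf{I13a} gives) does not imply $\Re\o_1<\Re g^\ast/2$. What must be checked is that the contour shifts by $\imath\o_1$ implementing the adjointness \rf{BM6} do not destroy the separation of the two pole series \rf{t04} — this is exactly the condition $\Re\o_1<\Re g^\ast$, i.e. \rf{I13a} — and, separately, that the measure poles \rf{t04a} which may be crossed are harmless because they are cancelled by zeros of the coefficients of the Macdonald operators. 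So the analytic part of your argument needs repair as well, though the decisive gap is the missing kernel identity.
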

Here $e_r(z_1,\ldots,z_n)$ is $r$-th elementary symmetric function,
\beqq e_r(z_1,\ldots,z_n)=\sum_{1\leq i_1<i_2<\ldots< i_r\leq n}z_{i_1}\cdots z_{i_r}, \eeqq

 The analytical properties of the function $\Psi_{\bl_n}(\bx_n)$ with respect to $x_j, \lambda_j$ (in the case of positive periods) are investigated in detail in \cite{HR1}. Here we only remark that from definition \eqref{I12} and formula for the kernel \eqref{Lker} the variables $x_j$ determine the positions of poles of functions $K(x_j-y_i)$, see \eqref{S-zeros}. In the case $x_j \in \R$ the sequences of poles are separated by the integration contours. The closest poles to the contours are $y_i = x_j \pm \imath g^*/2$.

Therefore, the wave function $\Psi_{\bl_n}(\bx_n)$ extends to the analytic function of $x_j \in \mathbb{C}$ in the domain
\begin{equation}
| \Im x_j | < \frac{g^*}{2}, \qquad j=1, \dots,n.
\end{equation}
From the convergence of integral representation \eqref{I12} we also have the  analyticity of the wave function with respect to the variables $\lambda_j \in \mathbb{C}$ in the domain
\begin{equation}
| \Im(\lambda_j - \lambda_k) | <\theta, \qquad j,k =1, \dots,n
\end{equation}
with some positive $\theta$, see Proposition \ref{Prop-Qconv} in Section \ref{sec-Qeigen}.

The short proof of Theorem \ref{th0}, combining the arguments by M. Hallnäs and S. Ruijsenaars \cite{HR1} and by I. Macdonald \cite{M} is given in Section \ref{sec-Macdonald}.  See also \cite{KK2} for an analogous proof in the case of hyperbolic Calogero-Sutherland model, which represents a non-relativistic limit of the Ruijsenaars system. \cbk

\begin{remark}
Note that the double sine function is invariant under permutation of $\omega_1, \omega_2$, therefore the wave function $\Psi_{\bm{\lambda}_n}(\bx_n)$ is invariant too. This implies that under condition
\begin{equation}
0 < \Re g < \min(\Re \omega_1, \Re \omega_2)
\end{equation}
it is also an eigenfunction of difference Macdonald operators $M_r(\bm{x}_n; g | \omega_2, \omega_1)$ related to the shift of variables by $\imath \omega_2$, and thus satisfies two pairs of compatible difference equations.
\end{remark}

The formula \eqref{I12} also can be written recursively
\begin{equation}\label{Psi-rec}
\Psi_{\bm{\lambda}_n}(\bm{x}_n) = \Lambda_{n}(\lambda_n) \, \Psi_{\bm{\lambda}_{n - 1}}(\bm{x}_{n - 1}), \qquad \Psi_{\lambda_1}(x_1) = e^{{2\pi \imath}{} \lambda_1 x_1}.
\end{equation}
Hence, we call $\Lambda_{n}(\lambda)$ raising operator.

 Our approach to study the wave functions in Ruijseenars system is based on the detailed investigation of the relations between the various products of $Q$-operators \rf{I14} and raising operators \rf{I11} which follow from commutativity relation of $Q$-operators~\rf{QQcomm}.

Firstly, an immediate consequence of \rf{QQcomm} is the exchange relation for the raising operators, its short proof is given in Section \ref{sec-QL}.
\begin{restatable}{thm}{theoremll}\label{theoremll}
Under assumptions \rf{I0a}, \rf{I0b} raising operators satisfy relation
\begin{equation}\label{ll}
	\Lambda_{n}(\lambda) \, \Lambda_{n - 1}(\rho) = \Lambda_{n }(\rho) \, \Lambda_{n - 1}(\lambda).
\end{equation}
The kernels of the operators in both sides of \rf{ll} are analytic functions of $\l,\rho$ in the strip
\beq |\Im(\lambda - \rho)| < \nu_g. \eeq
\end{restatable}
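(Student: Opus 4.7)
The plan is to reduce the exchange \eqref{ll} to a reflection identity for the $(n-1)$-dimensional integral sitting inside the $\LL{n}\LL{n-1}$-kernel, and to derive that identity from the $Q$-commutativity Theorem applied \emph{one level lower}, at index $n-1$, using the evenness of the function $\K$.

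First, using \eqref{I11}--\eqref{Lker} and pulling out the factors that do not depend on the integration variable, the kernel of $\LL{n}(\lambda)\LL{n-1}(\rho)$ equals, up to an overall constant independent of $\lambda,\rho,\bx_n,\by_{n-2}$,
\[
e^{\const \lambda \bbx_n - \const \rho \bby_{n-2}}\, \mu(\by_{n-2})\, I(\rho - \lambda;\bx_n,\by_{n-2}),
\]
where
\[
I(\sigma;\bx_n,\by_{n-2}) := \int_{\R^{n-1}} d\bw_{n-1}\, e^{\const \sigma \bbw_{n-1}}\, \K(\bx_n,\bw_{n-1})\, \K(\bw_{n-1},\by_{n-2})\, \mu(\bw_{n-1}).
\]
Equating this with its $\lambda\leftrightarrow\rho$ swap and cancelling common factors shows that \eqref{ll} is equivalent to the single functional equation $I(\sigma) = e^{\const \sigma (\bbx_n + \bby_{n-2})}\, I(-\sigma)$. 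A completely parallel computation for $Q_{n-1}(\lambda) Q_{n-1}(\rho)$ shows that the $Q$-commutativity at level $n-1$ is equivalent to the same type of reflection identity $J(\sigma) = e^{\const \sigma(\bbu_{n-1} + \bbv_{n-1})}\, J(-\sigma)$ for
\[
J(\sigma;\bu_{n-1},\bv_{n-1}) := \int_{\R^{n-1}} d\bw_{n-1}\, e^{\const \sigma \bbw_{n-1}}\, \K(\bu_{n-1},\bw_{n-1})\, \K(\bw_{n-1},\bv_{n-1})\, \mu(\bw_{n-1}).
\]

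The main step is then to realize $I$ as a specialization of $J$. Split $\bx_n = (\bx_{n-1}, x_n)$ and promote the last input coordinate to an extra output, setting $\bv_{n-1} := (\by_{n-2}, x_n) \in \R^{n-1}$. Since $\K$ is manifestly even in its argument (see \eqref{I6}), one has $\prod_{j=1}^{n-1} \K(x_n - w_j) = \prod_{j=1}^{n-1} \K(w_j - x_n)$, and this is precisely the factor that converts the ``extra input column'' of $\K(\bx_n,\bw_{n-1})$ into the ``extra output column'' of $\K(\bw_{n-1},\bv_{n-1})$. Hence $I(\sigma;\bx_n,\by_{n-2}) = J\bigl(\sigma;\bx_{n-1},(\by_{n-2},x_n)\bigr)$, and because $\bbx_{n-1} + \bby_{n-2} + x_n = \bbx_n + \bby_{n-2}$, the reflection identity for $J$ immediately yields the one for $I$.

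The main obstacle is the dimension mismatch: naively one might reach for $Q_n$-commutativity, but that produces an $n$-fold intermediate integral, while the $\LL{n}\LL{n-1}$-kernel has only $n-1$ intermediate variables. Dropping one level to $Q_{n-1}$ and transferring one input coordinate to the output side by evenness of $\K$ is precisely what makes the proof ``immediate''. Analyticity of both sides of \eqref{ll} in the strip $|\Im(\lambda - \rho)| < \nu_g$ then follows directly from the corresponding analyticity statement in the cited $Q$-commutativity theorem at level $n-1$.
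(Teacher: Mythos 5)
Your argument is correct and is essentially the paper's own proof: the paper likewise strips the exponential prefactors and the outer measure from the $\Lambda_n\Lambda_{n-1}$ kernel, and then recognizes the resulting $(n-1)$-fold integral identity as the kernel form of $Q_{n-1}$-commutativity by relabeling $x_n$ as the extra output variable $z_{n-1}$ (your explicit use of the evenness of $\K$ is exactly what makes that relabeling legitimate, left implicit in the paper). The analyticity claim is inherited from the $Q$-commutativity theorem at level $n-1$ in both treatments, so nothing is missing.
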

\begin{corollary}\label{corsym}
	The wave function $\Psi_{\bm{\lambda}_n}(\bm{x}_n)$ is invariant under permutations of $\bm{x}_n$ and $\bm{\lambda}_n$ components.
\end{corollary}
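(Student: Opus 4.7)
The plan is to separate the two symmetries. For permutations of $\bm{x}_n$, I would argue directly from the explicit form of the outermost raising operator in the recursion \eqref{Psi-rec}: the kernel \eqref{Lker} of $\Lambda_n(\lambda_n)$ depends on $\bm{x}_n$ only through the symmetric combinations $\bbx_n = x_1+\cdots+x_n$ in the exponent and $\K(\bm{x}_n,\bm{y}_{n-1})=\prod_{i,j}\K(x_i-y_j)$, while the remaining factors $\mu(\bm{y}_{n-1})$ and $e^{-\const\lambda_n\bby_{n-1}}$ do not involve $\bm{x}_n$ at all. Hence every image of $\Lambda_n(\lambda_n)$, and in particular $\Psi_{\bm{\lambda}_n}(\bm{x}_n)$, is manifestly symmetric in $\bm{x}_n$. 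This step is essentially an inspection of the kernel and requires no nontrivial input.

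The substantive content is the symmetry in $\bm{\lambda}_n$, which I would obtain by induction on $n$, using the exchange relation of Theorem \ref{theoremll} to realize the adjacent transpositions that generate $S_n$. The base $n=1$ is trivial. In the induction step, recursion \eqref{Psi-rec} transports the inductive symmetry of $\Psi_{\bm{\lambda}_{n-1}}$ in $\lambda_1,\ldots,\lambda_{n-1}$ to $\Psi_{\bm{\lambda}_n}$ for free, since $\Lambda_n(\lambda_n)$ does not involve those parameters. It then remains to swap $\lambda_{n-1}\leftrightarrow\lambda_n$. Expanding one more level of \eqref{Psi-rec} and applying Theorem \ref{theoremll} to the outermost pair of raising operators yields
\begin{equation*}
\Psi_{\bm{\lambda}_n} = \Lambda_n(\lambda_n)\,\Lambda_{n-1}(\lambda_{n-1})\,\Psi_{\bm{\lambda}_{n-2}} = \Lambda_n(\lambda_{n-1})\,\Lambda_{n-1}(\lambda_n)\,\Psi_{\bm{\lambda}_{n-2}} = \Psi_{\lambda_1,\ldots,\lambda_{n-2},\lambda_n,\lambda_{n-1}},
\end{equation*}
which is the required transposition.

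The only bookkeeping point is the base $n=2$, where the seed $e^{\const\lambda_1 x_1}$ is not yet displayed as $\Lambda_k(\lambda_1)$ applied to anything. I would handle this uniformly by interpreting the seed as $\Lambda_1(\lambda_1)\cdot 1$ via the empty-product specialization of \eqref{I11} (so that $d_0=1$ and the kernel collapses to $e^{\const\lambda x_1}$); Theorem \ref{theoremll} with $n=2$ then yields $\Lambda_2(\lambda_2)\Lambda_1(\lambda_1)\cdot 1 = \Lambda_2(\lambda_1)\Lambda_1(\lambda_2)\cdot 1$, which is exactly $\Psi_{\lambda_1,\lambda_2}=\Psi_{\lambda_2,\lambda_1}$. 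Alternatively, this case can be checked by hand via the substitution $y\mapsto x_1+x_2-y$ using evenness of $\K$. I do not expect any real obstacle here: all analytic substance has been absorbed into Theorem \ref{theoremll}, so the corollary is a purely combinatorial consequence of that exchange relation together with the manifest $\bm{x}_n$-symmetry of the outermost kernel.
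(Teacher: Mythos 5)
Your route is essentially the paper's: the $\bm{x}_n$-symmetry is read off from the kernel \eqref{Lker} of the outermost operator in \eqref{Psi-rec}, and the $\bm{\lambda}_n$-symmetry is generated by adjacent transpositions via the exchange relation of Theorem \ref{theoremll}; your handling of the seed at $n=2$ (either the $\Lambda_1(\lambda_1)\cdot 1$ reading or, more safely, the direct substitution $y\mapsto x_1+x_2-y$ using evenness of $\K$) is a legitimate piece of bookkeeping that the paper leaves implicit.

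The one genuine gap is your closing claim that ``all analytic substance has been absorbed into Theorem \ref{theoremll}'' and that the corollary is purely combinatorial. Theorem \ref{theoremll} is an identity between \emph{kernels of composed operators}, i.e.\ between two specific integrals over the intermediate variables with the external variables held fixed. To use it inside $\Psi_{\bm{\lambda}_n}(\bm{x}_n)$ you must first regroup the iterated integral \eqref{I12} so that the integration attached to $\Lambda_{n-1}$ is performed before the inner ones (equivalently, so that $\Lambda_n(\lambda_n)\,\Lambda_{n-1}(\lambda_{n-1})$ may be replaced by its composed kernel while the remaining integrations wait outside). This interchange of integration order is not automatic: it is justified by the absolute convergence of the full multiple integral, which is exactly Proposition \ref{Prop-Qconv} and holds only for $\lambda_j$ in a sufficiently small strip (in particular for real $\lambda_j$, with the general case then recovered by analyticity). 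The paper's proof of the corollary consists precisely of this Fubini-type remark; without citing it (or reproving the convergence bound), your induction step asserts an equality of nested integrals that has not been licensed by the kernel identity alone. Everything else in your argument stands once this convergence input is added.
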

\begin{proof}
	The symmetry with respect to permutations of $x_j$ follows immediately from the defining formula \eqref{Psi-rec}. The symmetry over spectral variables $\l_j$ is a consequence of Theorem \ref{theoremll}. To apply it we use absolute convergence of the corresponding multiple integral~\eqref{I12} (see Proposition \ref{Prop-Qconv} in Section \ref{sec-Qeigen}) that allows us to change integration order in this integral.
\end{proof}
The next exchange relation appears in the limit of the $Q$-operators commutation relation \eqref{QQcomm} as one of the external kernel parameters $z_n$ tends to infinity. Set
\beq\label{I27a} \hat{a}=\frac{a}{\o_1\o_2}\eeq
for any $a\in\C$, so that
\beq \hat{\bo}=\left(\frac{1}{\o_2},\frac{1}{\o_1}\right) ,
\qquad \hat{g}=\frac{g}{\o_1\o_2},\qquad  \gg=\hat{\o}_1 + \hat{\o}_2 - \hat{g} = \frac{g^\ast}{\o_1\o_2}.
\eeq
Also introduce the function
\beq\label{I6c} \KK(\l):=\K_{\gg}(\l|\hat{\bo})=S_2^{-1}\Bigl(\imath \l +\frac{\hat{g}}{2}\Big|\hat{\bo}\Bigr)S_2^{-1}\Bigl(-\imath \l+\frac{\hat{g}}{2}\Big|\hat{\bo}\Bigr).\eeq
\begin{restatable}{thm}{QLambda}\label{theoremQL}
	The operator identity
	\begin{equation}\label{QLcomm}
		Q_{n}(\lambda) \, \Lambda_{n}(\rho) = \KK(\lambda - \rho) \; \Lambda_{n}(\rho) \, Q_{n-1}(\lambda)
	\end{equation}
	holds true for $\lambda, \rho \in \mathbb{C}$ such that
	\beq \label{QLcond} | \Im(\lambda - \rho) | < \frac{\nu_g}{2}.\eeq
\end{restatable}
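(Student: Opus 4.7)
The strategy is to obtain the exchange relation \rf{QLcomm} by taking an appropriate limit of the $Q$-operator commutativity \rf{QQcomm}, letting one external variable, say $y_n$, tend to $+\infty$ on the real axis. This is the idea alluded to in the remark preceding the theorem statement.

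I begin by writing \rf{QQcomm} in kernel form as an equality of integrals over the intermediate variable (call it $\bz_n$ on the LHS and $\bw_n$ on the RHS), indexed by the external pair $(\bx_n,\by_n)$. The outer kernel $Q(\bz_n,\by_n;\sigma)$ on each side—with $\sigma=\rho$ on the LHS and $\sigma=\lambda$ on the RHS—depends on $y_n$, which I separate using the elementary decompositions
\begin{equation*}
\mu(\by_n)=\mu(\by_{n-1})\prod_{j<n}\mu(y_j-y_n)\mu(y_n-y_j), \qquad K(\bz_n,\by_n)=K(\bz_n,\by_{n-1})\prod_{i=1}^{n}K(z_i-y_n),
\end{equation*}
so that
\begin{equation*}
Q(\bz_n,\by_n;\sigma)=e^{-\const \sigma\, y_n}\Big[\prod_{i=1}^{n}K(z_i-y_n)\Big]\Big[\prod_{j<n}\mu(y_j-y_n)\mu(y_n-y_j)\Big]\Lambda(\bz_n,\by_{n-1};\sigma).
\end{equation*}

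Using the known asymptotic of the double sine function from Appendix \ref{AppA}, the universal $y_n$-dependent factor—from the $\mu(y_j-y_n)\mu(y_n-y_j)$ products and the parts of $\prod_i K(z_i-y_n)$ that are independent of the interior integration—can be pulled outside the integral. After dividing both sides of \rf{QQcomm} by this common factor, the LHS limits directly to the kernel of $Q_n(\lambda)\Lambda_n(\rho)(\bx_n,\by_{n-1})$, because the outer $Q_n(\rho)$ has degenerated into $\Lambda_n(\rho)$. On the RHS, the outer $Q$ there is $Q_n(\lambda)$, so there is a residual exponential mismatch $e^{\const(\lambda-\rho)y_n}$ which must be absorbed by the innermost $w_n$-integration: changing variables $w_n=y_n+v$ and again using the asymptotics of $K$ and $\mu$, this integration reduces to a one-dimensional integral of Fourier type in the variable $v$, conjugate to $\lambda-\rho$. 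By the integral identity for the function $K_g(x|\bo)$ under the modular transform $\bo\mapsto\hat\bo$, $g\mapsto\gg$, this residual integral evaluates precisely to $\KK(\lambda-\rho)$, while the remaining $(n-1)$-dimensional $\bw_{n-1}$-integral assembles into $\Lambda_n(\rho)Q_{n-1}(\lambda)(\bx_n,\by_{n-1})$.

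The main obstacle is the careful asymptotic analysis required to justify interchanging the $y_n\to+\infty$ limit with the multidimensional integrations, and to evaluate the residual one-dimensional integral to the claimed form $\KK(\lambda-\rho)$. The sharper condition \rf{QLcond}—namely $\nu_g/2$ in place of $\nu_g$—reflects the loss of one integration dimension in the limit: effectively, one pair of kernel poles of $K(z_n-y_n)$ at $z_n-y_n=\pm\imath g^\ast/2$ recedes to infinity together with the contour and merges asymptotically with an existing pair, halving the strip of analyticity in $\lambda-\rho$ relative to the original estimate \rf{QQcond}.
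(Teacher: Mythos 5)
Your strategy is indeed the one used in the paper: write \rf{QQcomm} in kernel form with external pair $(\bx_n,\by_n)$, renormalize and send the external $y_n\to+\infty$, identify the limit of one side with the kernel of $Q_n(\lambda)\Lambda_n(\cdot)$, and on the other side shift the distinguished intermediate variable by $y_n$ so that a one-dimensional integral separates and is evaluated by the Fourier transform \eqref{K-fourier}, producing $\KK$. However, there is a genuine gap in your treatment of the second side, and it is not just the deferred ``careful asymptotic analysis''. The integrand there is symmetric in the intermediate variables $w_1,\dots,w_n$, and as $y_n\to+\infty$ the surviving contribution comes from $n$ distinct regions, one for each $w_j$ escaping to infinity together with $y_n$. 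If you shift only $w_n$ and pass to the pointwise limit under the integral, you capture exactly one of these regions and the answer is off by a factor of $n$; with the normalization \eqref{dconst} and the constant $\sqrt{\omega_1\omega_2}\,S_2(g)$ coming out of \eqref{K-fourier}, that factor is precisely what is needed to get the clean identity \rf{QLcomm}, so omitting it would yield $Q_n(\lambda)\Lambda_n(\rho)=\tfrac1n\,\KK(\lambda-\rho)\Lambda_n(\rho)Q_{n-1}(\lambda)$. Equivalently, dominated convergence genuinely fails for the unrestricted shifted integral: the factors $\mu(w_n+y_n-w_j)\,\mu(w_j-w_n-y_n)$ admit a $y_n$-independent bound only on the set where $w_n+y_n\ge w_j$. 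The paper's fix is a symmetrization step you are missing: decompose $\R^n$ into the domains $D_j$ where $w_j$ is maximal, use symmetry to rewrite the integral as $n$ times the integral over $D_n$, and only then shift $w_n\to w_n+y_n$; the resulting indicator function is exactly what makes an integrable dominating function exist and what produces the factor $n$.

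Two smaller points. First, your claim that the $y_n$-dependent factor ``can be pulled outside the integral'' is imprecise: $\prod_iK(w_i-y_n)$ depends on the intermediate variables, so one can only divide by a factor built from external data (the paper's $r(\bm z_n;\rho)$, in its notation) and must then show the remaining $y_n$-dependent pieces tend to $1$ with uniform domination; moreover this renormalization unavoidably carries an $e^{\pi\hat g(\cdot)}$ piece that shifts the spectral parameter, so the limit first gives the identity with $\Lambda_n(\rho-\imath\hat g/2)$ and $\KK(\lambda-\rho+\imath\hat g/2)$, after which one redefines $\rho$. Second, this very shift — not a pole-merging heuristic — is the actual origin of the halved strip \rf{QLcond}: the dominating bounds force the one-sided condition $\Im(\lambda-\rho)\in(-\nu_g,0)$, of width $\nu_g$, which becomes the symmetric strip $|\Im(\lambda-\rho)|<\nu_g/2$ after recentering; your pole-counting picture does not substitute for deriving this condition from explicit integrable bounds.
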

The  proof is given in Section \ref{sec-QL}.  Note that the strip for $\lambda,\rho$ \eqref{QLcond} is twice narrower than in the $Q$-commutativity relation \eqref{QQcond}. Using the iterative representation of the wave function \eqref{I12} and the relation \rf{QLcomm} we arrive at the spectral description of the Baxter $Q$-operator. Its proof with the necessary convergence arguments is given in Section~\ref{sec-Qeigen}.
\begin{restatable}{thm}{QPsi}\label{thm-QPsi}
	The wave function $\Psi_{ \bm{\lambda}_n }(\bm{x}_n)$ is a joint eigenfunction of the commuting family of operators $Q_{n}(\lambda)$
	\begin{equation}\label{QPsi}
		Q_{n}(\lambda) \, \Psi_{ \bm{\lambda}_n }(\bm{x}_n) = \prod_{j = 1}^n \KK(\lambda-\lambda_j) \, \Psi_{ \bm{\lambda}_n }(\bm{x}_n).
	\end{equation}
	The integrals in both sides of \rf{QPsi} converge if
	\begin{equation}
		| \Im(\lambda - \lambda_n) | < \frac{\nu_g}{2}(1-\ve), \qquad | \Im (\lambda_k - \lambda_j) | <\theta(\ve), \qquad k, j = 1, \dots, n
	\end{equation}
	for any $\ve \in[0,1)$ and
	\beq \theta(\ve) = \frac{\nu_g }{4 (n - 1)! e}\ve. \eeq
\end{restatable}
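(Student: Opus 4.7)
The plan is to prove the eigenvalue relation by induction on $n$, using the iterative definition \eqref{Psi-rec} of the wave function together with the $Q\Lambda$ exchange relation of Theorem \ref{theoremQL}.

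For the base case $n=1$, the wave function is $\Psi_{\lambda_1}(x_1)=e^{2\pi\imath\lambda_1 x_1}$ and the measure $\mu(\bx_1)$ is an empty product equal to $1$, so the statement reduces to evaluating a Fourier-type integral
\beqq
Q_1(\lambda) e^{\const \lambda_1 x_1} \;=\; d_1(g|\bo)\,e^{\const \lambda_1 x_1}\int_{\R} du\, e^{\const(\lambda-\lambda_1)u}\K(u),
\eeqq
which equals $\KK(\lambda-\lambda_1) e^{\const \lambda_1 x_1}$ by the standard Fourier transform formula for $\K$ (this is the $n=1$ specialisation already used in \cite{BDKK} when normalising $d_n(g|\bo)$). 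Absolute convergence of this integral holds precisely when $|\Im(\lambda-\lambda_1)| < \nu_g/2$, which matches the stated strip.

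For the inductive step I assume the statement for $n-1$, i.e.\ $Q_{n-1}(\lambda)\Psi_{\bl_{n-1}}(\bx_{n-1}) = \prod_{j=1}^{n-1} \KK(\lambda-\lambda_j)\,\Psi_{\bl_{n-1}}(\bx_{n-1})$ under suitable conditions on $\lambda,\lambda_1,\ldots,\lambda_{n-1}$. Applying first $Q_n(\lambda)$ to the recursion $\Psi_{\bl_n} = \Lambda_n(\lambda_n)\Psi_{\bl_{n-1}}$ and then Theorem \ref{theoremQL}, I get the formal chain
\beqq
Q_n(\lambda)\,\Lambda_n(\lambda_n)\,\Psi_{\bl_{n-1}}
= \KK(\lambda-\lambda_n)\,\Lambda_n(\lambda_n)\,Q_{n-1}(\lambda)\,\Psi_{\bl_{n-1}}
= \prod_{j=1}^n \KK(\lambda-\lambda_j)\,\Psi_{\bl_n},
\eeqq
which is the desired identity. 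The conceptual content is entirely the operator relation $Q_n\Lambda_n = \KK(\lambda-\rho)\Lambda_n Q_{n-1}$ acting on the ``lower'' wave function, together with the induction hypothesis.

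The main obstacle is not this algebraic chain but the justification of each step as an identity of absolutely convergent multiple integrals, which is what dictates the explicit domain $|\Im(\lambda-\lambda_n)|<\nu_g(1-\ve)/2$ and $|\Im(\lambda_k-\lambda_j)|<\theta(\ve)$ with the sharp-looking factor $\theta(\ve)=\nu_g\ve/(4(n-1)!e)$. Theorem \ref{theoremQL} allows the interchange of $Q_n$ and $\Lambda_n$ only in the strip $|\Im(\lambda-\rho)|<\nu_g/2$, and to apply it with $\rho=\lambda_n$ inside the iterated integral defining $\Psi_{\bl_n}$ one needs Fubini-type control of the full $n$-fold integrand $Q(\bx_n,\by_n;\lambda)\,\Lambda(\by_n,\bz_{n-1};\lambda_n)\,\Psi_{\bl_{n-1}}(\bz_{n-1})$. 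Concretely I would: (i) record from Proposition \ref{Prop-Qconv} of Section \ref{sec-Qeigen} the estimate on $\Psi_{\bl_{n-1}}$ which gives its decay in the Macdonald-type integration variables in terms of the imaginary parts of $\lambda_1,\ldots,\lambda_{n-1}$; (ii) combine it with the known asymptotic of $\K$ to show that the joint integrand is in $L^1(\R^{2n-1})$ throughout the stated strips, which justifies the interchange of the $Q\Lambda$ composition with the integrations producing $\Psi_{\bl_{n-1}}$; (iii) apply Theorem \ref{theoremQL} under the integral sign, then use the induction hypothesis (again valid in the same region) and repackage the remaining integrations as $\Lambda_n(\lambda_n)$ acting on the eigen-relation. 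The factorial $1/(n-1)!$ in $\theta(\ve)$ comes exactly from the accumulation, at each inductive step, of the shrinkage needed to keep the imaginary parts of all $\lambda_j - \lambda_k$ inside the convergence region that Proposition \ref{Prop-Qconv} provides; I would track this bookkeeping explicitly at the end of the argument.
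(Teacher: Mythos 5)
Your proposal is correct and follows essentially the same route as the paper: induction on $n$ with the base case given by the Fourier transform of $\K$, the inductive step by applying Theorem \ref{theoremQL} inside the iterated integral, and the interchange of integrations justified by the absolute convergence of the multiple integrals $I_{\lambda,\bm{\lambda}_n}$ and $J_{\lambda,\bm{\lambda}_n}$ from Proposition \ref{Prop-Qconv} (which is exactly where the factor $\theta(\ve)=\nu_g\ve/(4(n-1)!e)$ originates, via the constants $c_n$). The only cosmetic difference is that the paper invokes Fubini for the full $n(n+1)/2$-fold integrals rather than bounding $\Psi_{\bm{\lambda}_{n-1}}$ separately and working in $\R^{2n-1}$, but this is the same convergence bookkeeping packaged once in Proposition \ref{Prop-Qconv}.
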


The kernels \rf{I14} and \rf{Lker} of the $Q$-operator and raising $\Lambda$-operator are related in two ways. Firstly, the kernel  $\Lambda(\bx_n,\by_{n-1};\l)$ appears in the certain limit of the kernel $Q(\bx_n,\by_{n};\l)$. Besides, the kernel $\Lambda(\bx_n,\by_{n-1};\l)$ equals the kernel $Q(\bx_{n-1},\by_{n-1};\l)$ multiplied by the function
\begin{equation}
e^{\const \l x_n} \prod_{j=1}^{n - 1}K(x_n - y_j)
\end{equation}
regular on the integration cycle. Then the  inductive use of Theorem \ref{thm-QPsi} allows to relate two integral representations for the wave function: the representation iterative with respect to variables $x_j$ and another one iterative with respect to spectral variables $\l_j$. This can be formulated as duality relation of the wave function conjectured by M. Hallnäs and S.~Ruijsenaars in \cite{HR1}.

For this purpose we introduce the counterparts of basic functions \rf{I5}, \rf{I6a} and operators \rf{I14}, \rf{I11}, but in the space of spectral variables $\l_j$. Namely, similarly to the function \eqref{I6c} we define the function
 \beq\label{I5d}
 \mm(\l):=\mu_{\hat{g}^*}(\l|\hat{\bo})=S_2(\imath \l|\hat{\bo}) S_2^{-1}(\imath \l+\gg|\hat{\bo}) \eeq
and their products
\begin{equation}\label{I6d}
\mm(\bl_n)=\prod_{\substack{i,j = 1\\i\not=j}}^n \mm(\l_i-\l_j),\qquad	\KK(\bl_n,\bg_m)=\prod_{i=1}^n \prod_{j = 1}^m \KK(\l_i-\g_j).
\end{equation}
Next we introduce counterparts of the $Q$-operator and raising $\Lambda$-operator that act on functions of spectral variables $\lambda_j$
\begin{equation}\label{I15d}\begin{aligned}
	\bigl( \QQ_n(x) f\bigr) (\bm{\l}_n)& = d_n(\hat{g}^*|\hat{\bo})\, \int_{\mathbb{R}^n} d\bm{\gamma}_n \, \QQ(\bm{\l}_n, \bm{\gamma}_n; x) f(\bm{\gamma}_n),\\[7pt]
	 \bigl(\LLL_{n}(x)f\bigr)(\bl_n)&=d_{n - 1}(\hat{g}^*|\hat{\bo}) \int_{\R^{n-1}}d\bg_{n-1} \, \LLL(\bl_n,\bg_{n-1};x) f(\bg_{n-1})
	\end{aligned}
\end{equation}
with the kernels
\beq\label{I14d}\begin{split} \QQ(\bl_n,\bg_{n};x)&= e^{\const x (\bbl_n-\bbg_n)}	\KK(\bl_n,\bg_{n})\hat{\mu}(\bg_{n}),\\[7pt]
\LLL(\bl_n,\bg_{n-1};x)&= e^{\const x (\bbl_n-\bbg_{n-1} )} \KK(\bl_n,\bg_{n-1}) \hat{\mu}  (\bg_{n-1}).
\end{split}
\eeq
 Additionally, in what follows we assume the condition
\beq \nu_{g^*}:= \Re \hat{g}^* = \Re \frac{g^*}{\o_1 \o_2} > 0.\eeq
Then together with the conditions \eqref{I0a}, \eqref{I0b} we have two pairs of double inequalities for the coupling constant
\begin{equation}\label{gg-ineq}
0<\Re g < \Re\o_1 + \Re\o_2, \qquad 0< \Re \hat{g}^* < \Re\hat{\o}_1 + \Re\hat{\o}_2.
\end{equation}
Note that for the real periods they are equivalent.  Clearly, with these conditions introduced operators enjoy relations analogous to \rf{ll}, \eqref{QLcomm} and can be used for iterative construction of the function
\begin{equation}\label{Psi-rec2}
	\hat{\Psi}_{\bx_n}(\bl_n) := \Psi_{\bx_n}(\bl_n;\gg|\hat{\bo}) = \LLL_{n}(x_n) \, \hat{\Psi}_{\bx_{n - 1}}(\bl_{n - 1}), \qquad \hat{\Psi}_{x_1}(\l_1) = e^{{2\pi \imath}{} \lambda_1 x_1}.
\end{equation}
The following statement was conjectured in \cite{HR1}.
\begin{restatable}{thm}{duality}\label{thm-duality}
	The wave function $\Psi_{\bm{\lambda}_n}(\bm{x}_n)$ satisfies duality relation
	\begin{equation}\label{duality}
		\Psi_{\bm{\lambda}_n}(\bm{x}_n; g|\bo) = \Psi_{\bm{x}_n}(\bm{\lambda}_n; \gg|\hat{\bo}).
	\end{equation}
\end{restatable}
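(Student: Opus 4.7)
The plan is to prove \rf{duality} by induction on $n$, combining the recursions \rf{Psi-rec}, \rf{Psi-rec2} with Theorem \ref{thm-QPsi} applied both in the direct variables $(g,\bo)$ and the dual ones $(\gg,\hat{\bo})$. The base case $n=1$ is immediate since both $\Psi_{\lambda_1}(x_1)$ and $\hat{\Psi}_{x_1}(\lambda_1)$ equal $e^{\const\lambda_1 x_1}$. For the inductive step, expand $\Psi_{\bl_n}(\bx_n)=(\LL{n}(\lambda_n)\Psi_{\bl_{n-1}})(\bx_n)$ using the factorization mentioned in the text,
\beqq
\Lambda(\bx_n,\by_{n-1};\lambda_n)=e^{\const\lambda_n x_n}\prod_{j=1}^{n-1}\K(x_n-y_j)\,Q(\bx_{n-1},\by_{n-1};\lambda_n),
\eeqq
so that
\beqq
\Psi_{\bl_n}(\bx_n)=e^{\const\lambda_n x_n}\,d_{n-1}(g|\bo)\int d\by_{n-1}\,Q(\bx_{n-1},\by_{n-1};\lambda_n)\prod_{j=1}^{n-1}\K(x_n-y_j)\,\Psi_{\bl_{n-1}}(\by_{n-1}).
\eeqq

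The induction hypothesis $\Psi_{\bl_{n-1}}(\by_{n-1};g|\bo)=\hat{\Psi}_{\by_{n-1}}(\bl_{n-1};\gg|\hat{\bo})$ combined with the dual form of Theorem \ref{thm-QPsi}, which uses that the map $(g,\bo)\mapsto(\gg,\hat{\bo})$ is an involution so that $\hat{\KK}=\K$, yields the key identity
\beqq
\prod_{j=1}^{n-1}\K(x_n-y_j)\,\Psi_{\bl_{n-1}}(\by_{n-1})=d_{n-1}(\gg|\hat{\bo})\int d\bg_{n-1}\,\QQ(\bl_{n-1},\bg_{n-1};x_n)\,\Psi_{\bg_{n-1}}(\by_{n-1}).
\eeqq
Substituting this produces a double integral in $\by_{n-1}$ and $\bg_{n-1}$; after exchanging the order of integration, the inner $\by_{n-1}$-integral becomes $(Q_{n-1}(\lambda_n)\Psi_{\bg_{n-1}})(\bx_{n-1})$, which by Theorem \ref{thm-QPsi} evaluates to $\prod_{j=1}^{n-1}\KK(\lambda_n-\gamma_j)\,\Psi_{\bg_{n-1}}(\bx_{n-1})$. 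The remaining kernel factor
\beqq
e^{\const\lambda_n x_n}\prod_{j=1}^{n-1}\KK(\lambda_n-\gamma_j)\,\QQ(\bl_{n-1},\bg_{n-1};x_n)
\eeqq
coincides with $\hat\Lambda(\bl_n,\bg_{n-1};x_n)$ by the same factorization applied in the dual setting, and a second appeal to the induction hypothesis $\Psi_{\bg_{n-1}}(\bx_{n-1})=\hat{\Psi}_{\bx_{n-1}}(\bg_{n-1})$ identifies the result with $(\LLL_n(x_n)\hat{\Psi}_{\bx_{n-1}})(\bl_n)=\hat{\Psi}_{\bx_n}(\bl_n)$, closing the induction.

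The main obstacle is justifying the Fubini interchange together with the two intermediate uses of Theorem \ref{thm-QPsi}, all of which require simultaneous absolute convergence of the mixed integrand $Q(\bx_{n-1},\by_{n-1};\lambda_n)\QQ(\bl_{n-1},\bg_{n-1};x_n)\Psi_{\bg_{n-1}}(\by_{n-1})$. Under \rf{I0a}, \rf{I0b} and \rf{gg-ineq} the required estimates will come from the exponential decay of the double sine function and from the bounds of Proposition \ref{Prop-Qconv} and its dual counterpart, presumably forcing the identity to be established first on a restricted strip in the spectral and position variables and then extended to the full claimed domain by analyticity in $\bl_n$ and $\bx_n$.
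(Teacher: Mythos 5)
Your inductive skeleton is exactly the paper's: the factorization of the $\Lambda$-kernel through the $Q$-kernel (formulas \rf{LQ}, \rf{LQ'}), the induction hypothesis used twice to trade the product of $\K$-functions for the action of $\QQ_{n-1}(x_n)$ and back, the Fubini swap of the $\by_{n-1}$- and $\bm{\gamma}_{n-1}$-integrations, and the eigenvalue property of $Q_{n-1}(\lambda_n)$ in between. So the algebraic part of your argument is correct and is not a different route.

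The genuine gap is the justification of the Fubini step, which you delegate to ``the bounds of Proposition \ref{Prop-Qconv} and its dual counterpart''. Those bounds do not suffice. In the double integral the only decay in $\bm{\gamma}_{n-1}$ comes from $\KK(\bl_{n-1},\bm{\gamma}_{n-1})$, at rate $(n-1)\pi\nu_{\hat g^*}\|\bm{\gamma}_{n-1}\|$, while the dual measure $\hat{\mu}(\bm{\gamma}_{n-1})$ grows at the same rate along the ``split'' directions where half of the $\gamma_j$ go to $+\infty$ and half to $-\infty$; on top of that, Proposition \ref{Prop-Qconv} controls convergence of the iterated integral defining $\Psi_{\bm{\gamma}_{n-1}}(\by_{n-1})$ (uniformly only on compacts and in small imaginary strips of the spectral variables), but gives no control of the growth of $\Psi_{\bm{\gamma}_{n-1}}(\by_{n-1})$ as $\bm{\gamma}_{n-1}\to\infty$. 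Hence absolute-value estimates built from the double sine asymptotics alone are at best borderline and do not yield an integrable dominating function for $n\geq 4$. The paper closes this gap by splitting $\mu=\mu'\,\mu'(-\cdot)$, $\hat{\mu}=\hat{\mu}'\,\hat{\mu}'(-\cdot)$ and invoking the Halln\"as--Ruijsenaars estimate \rf{Psi-HRbound} (Theorem 2.5 of \cite{HR3}), which says that $\mu'(\by_{n-1})\hat{\mu}'(\bm{\gamma}_{n-1})\Psi_{\bm{\gamma}_{n-1}}(\by_{n-1})$ is bounded by a polynomial (with the dual version of this bound used when $\Re g>\max(\o_1,\o_2)$); only then does the remaining factor, estimated as in \rf{QQ'bound}, decay exponentially. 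Since that external bound is available only for real positive periods, the correct structure of the proof is: establish \rf{duality} for real $\o_1,\o_2$, then extend to complex periods by analyticity of both sides in the \emph{periods} (Remark \ref{remark-3}), not merely by analytic continuation in $\bx_n$ and $\bl_n$ as you propose — continuation in the variables alone cannot remove the real-period restriction under which your dominating bound would have to be proved.
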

The proof of this theorem is given in Section \ref{sec-duality}. It has several direct corollaries.
\begin{corollary} \label{corQQ} The wave function  $	\Psi_{\bm{\lambda}_n}(\bm{x}_n)$ is an eigenfunction of both operators $Q_{n}(\lambda)$ and $\QQ_{n}(x)$
\begin{equation}\label{QQPsi}
	\begin{aligned}
	Q_{n}(\lambda) \, \Psi_{ \bm{\lambda}_n }(\bm{x}_n) &= \prod_{j = 1}^n \KK(\lambda-\lambda_j) \, \Psi_{ \bm{\lambda}_n }(\bm{x}_n),\\
	\QQ_{n}(x) \, \Psi_{ \bm{\lambda}_n }(\bm{x}_n) &= \prod_{j = 1}^n \K(x-x_j) \, \Psi_{ \bm{\lambda}_n }(\bm{x}_n).
	\end{aligned}
\end{equation}
\end{corollary}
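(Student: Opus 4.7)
The plan is to obtain both eigenvalue equations with essentially no new computation. The first line of \eqref{QQPsi} is literally the statement of Theorem \ref{thm-QPsi}, so the substance of the corollary lies entirely in the second line, and that will follow by combining Theorem \ref{thm-QPsi} with the duality relation of Theorem \ref{thm-duality}.

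First I would record that the map $(g, \bo) \mapsto (\hat{g}^*, \hat{\bo})$ is an involution on the parameter space: applying it twice returns $(g,\bo)$, and it swaps $\K \leftrightarrow \KK$, $\mu \leftrightarrow \mm$, and the operators $Q_n \leftrightarrow \QQ_n$, $\LL{n} \leftrightarrow \LLL_n$. The symmetric pair of inequalities \eqref{gg-ineq} is precisely what makes the hypotheses of Theorem \ref{thm-QPsi} invariant under this involution, so that theorem applies equally well with either choice of parameters.

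Next I would apply Theorem \ref{thm-QPsi} to the hatted parameters $(\hat{g}^*, \hat{\bo})$. Under the involution the eigenvalue factor $\KK$ turns into $\K$, the recursion \eqref{Psi-rec2} plays the role of \eqref{Psi-rec}, and the conclusion reads $\QQ_n(x) \, \hat{\Psi}_{\bm{x}_n}(\bm{\lambda}_n) = \prod_{j=1}^n \K(x - x_j) \, \hat{\Psi}_{\bm{x}_n}(\bm{\lambda}_n)$. Finally I would invoke Theorem \ref{thm-duality} to identify $\hat{\Psi}_{\bm{x}_n}(\bm{\lambda}_n)$ with $\Psi_{\bm{\lambda}_n}(\bm{x}_n)$ on both sides, producing the second line of \eqref{QQPsi}.

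There is no serious obstacle here beyond bookkeeping; the only thing to verify carefully is that the strip of convergence for the spectral argument of $Q_n$ in Theorem \ref{thm-QPsi} translates correctly into a strip for $x$ under the involution, which is automatic since $\nu_g$ and $\nu_{g^*}$ are both positive and \eqref{gg-ineq} is symmetric under swapping $(g,\bo) \leftrightarrow (\hat{g}^*,\hat{\bo})$.
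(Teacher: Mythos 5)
Your proposal is correct and matches the paper's own route: the first line is Theorem \ref{thm-QPsi}, and the second line is obtained exactly as in Section \ref{sec-duality}, where the dual eigenvalue relation \eqref{Q'Psi} is noted to be ``just a matter of renaming'' $x_j \leftrightarrow \lambda_j$, $g \leftrightarrow \gg$, $\bo \leftrightarrow \hat{\bo}$ (your involution), and is then combined with the duality identity $\hat{\Psi}_{\bm{x}_n}(\bm{\lambda}_n) = \Psi_{\bm{\lambda}_n}(\bm{x}_n)$ of Theorem \ref{thm-duality}. The only cosmetic remark is that for the corollary as stated (real $x_j,\lambda_j$) the convergence strips of Theorem \ref{thm-QPsi} and its dual are trivially satisfied, so your final paragraph is indeed mere bookkeeping.
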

\begin{corollary}\label{bispectral} Under the condition $\Re g < \Re\o_2$ \rf{I13a} and analogous one $\Re \hat{g}^* < \Re \hat{\o}_2$  the wave function  $\Psi_{\bm{\lambda}_n}(\bm{x}_n)$ is an eigenfunction of both families of Macdonald operators $M_r(\bx_n;g|\bo)$ and $M_s(\bl_n;\hat{g}^*|\hat{\bo})$
	\beq\label{MM}\begin{split}
	 M_r(\bx_n;g|\bo)\,\Psi_{\bl_n}(\bx_n)&=e_r \bigl(e^{{2\pi \lambda_1\o_1}}, \dots, e^{{2\pi \lambda_n\o_1}} \bigr)\Psi_{\bl_n}(\bx_n),\\[5pt]
	 M_s(\bl_n;\hat{g}^*|\hat{\bo})\, \Psi_{\bl_n}(\bx_n)&=e_s \bigl(e^{\frac{2\pi x_1}{\o_2}}, \dots, e^{\frac{2\pi x_n}{\o_2}} \bigr)\Psi_{\bl_n}(\bx_n).
\end{split}
\eeq
\end{corollary}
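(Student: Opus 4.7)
The plan is to deduce both parts of \eqref{MM} directly from Theorem~\ref{th0} combined with the duality Theorem~\ref{thm-duality}, with essentially no new computation.

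The first equation in \eqref{MM} is literally the statement of Theorem~\ref{th0}, valid under \eqref{I0a}, \eqref{I0b} together with the hypothesis $\Re g<\Re\o_2$ of the corollary. For the second equation, I would first invoke Theorem~\ref{thm-duality} to rewrite
\begin{equation*}
\Psi_{\bl_n}(\bx_n;g|\bo)=\Psi_{\bx_n}(\bl_n;\gg|\hat{\bo}).
\end{equation*}
The right-hand side is, by construction, the Hallnäs--Ruijsenaars wave function associated to the parameter triple $(\gg,\hat{\o}_1,\hat{\o}_2)$, in which $\bl_n$ now plays the role of spatial coordinates and $\bx_n$ the role of spectral parameters. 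Thus, applying Theorem~\ref{th0} to this dual wave function in the variables $\bl_n$ yields an eigenvalue equation for $M_s(\bl_n;\hat{g}^*|\hat{\bo})$.

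To apply Theorem~\ref{th0} in the dual setting one must verify that the triple $(\gg,\hat{\o}_1,\hat{\o}_2)$ satisfies the counterparts of \eqref{I0a}, \eqref{I0b}, \eqref{I13a}. These reduce to $\Re\hat{\o}_j>0$ (immediate from $\Re\o_j>0$), $0<\Re\gg<\Re\hat{\o}_1+\Re\hat{\o}_2$ (the second inequality in \eqref{gg-ineq}), $\Re\gg/(\hat{\o}_1\hat{\o}_2)=\Re g^*>0$ (from the upper bound on $\Re g$ in \eqref{I0a}), and finally $\Re\gg<\Re\hat{\o}_2$, which is precisely the extra hypothesis $\Re\hat{g}^*<\Re\hat{\o}_2$ of the corollary. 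Theorem~\ref{th0} then delivers
\begin{equation*}
M_s(\bl_n;\hat{g}^*|\hat{\bo})\,\Psi_{\bx_n}(\bl_n;\gg|\hat{\bo})=e_s\bigl(e^{2\pi x_1\hat{\o}_1},\ldots,e^{2\pi x_n\hat{\o}_1}\bigr)\,\Psi_{\bx_n}(\bl_n;\gg|\hat{\bo}),
\end{equation*}
and using $\hat{\o}_1=1/\o_2$ and the duality identity to return to $\Psi_{\bl_n}(\bx_n)$ produces the second line of \eqref{MM}.

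Because Theorem~\ref{thm-duality} is already in hand, there is no genuine obstacle here: the argument is pure bookkeeping in the dual parameters, and the only thing to check is that the condition \eqref{I13a} translated to the dual system reads exactly $\Re\hat{g}^*<\Re\hat{\o}_2$.
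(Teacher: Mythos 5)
Your proposal is correct and is exactly the paper's (implicit) argument: the corollary is stated there as a direct consequence of Theorem \ref{th0} applied once in the original variables and once, via the duality relation \eqref{duality}, to the dual wave function $\Psi_{\bx_n}(\bl_n;\gg|\hat{\bo})$ with the parameter triple $(\gg,\hat{\o}_1,\hat{\o}_2)$. Your bookkeeping of the dual conditions (in particular that the dual of \eqref{I13a} is $\Re\hat{g}^*<\Re\hat{\o}_2$ and that $\hat{\o}_1=1/\o_2$ gives the stated eigenvalue) is precisely what the paper leaves to the reader.
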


In other words, the function $\Psi_{\bm{\lambda}_n}(\bm{x}_n)$ solves bispectral problems for pairs of dual \break $Q$-operators and for a pair of dual Macdonald operators.  In non-relativistic limit the bispectrality property of the wave function of Sutherland model was established in \cite{KK1,KK2}. The  duality \rf{MM} \cbk provides us with two integral representations \eqref{Psi-rec}, \eqref{Psi-rec2}, recursive with respect to $x_j$ and $\lambda_j$ variables correspondingly.
\begin{equation}
\Psi_{\bm{\lambda}_n}(\bm{x}_n) = \Lambda_{n}(\lambda_n) \, \Psi_{\bm{\lambda}_{n - 1}}(\bm{x}_{n - 1}) = \LLL_{n}(x_n) \, \Psi_{\bm{\lambda}_{n - 1}}(\bm{x}_{n - 1}).
\end{equation}
For real periods these two recursive procedures can be combined into the third one, symmetric with respect to coordinates and spectral parameters, see Section \ref{sec-duality} for details.
\setcounter{crl}{3}
\begin{restatable}{crl}{PsiQQ}
 	 For real positive periods $\o_1, \o_2$ the wave function admits the iterative integral representation
 	\begin{equation}
 		\begin{aligned}
 			\Psi_{\bm{\lambda}_n}(\bm{x}_n) &= e^{\const \lambda_n x_n} Q_{n - 1}(\lambda_n) \, \QQ_{n - 1}(x_n) \, \Psi_{\bm{\lambda}_{n - 1}}(\bm{x}_{n - 1} ) \\[6pt]
 			&= Q_{n - 1}(\lambda_n) \, \QQ_{n - 1}(x_n) \cdots Q_{1}(\lambda_2) \, \QQ_{1}(x_2) \, e^{\const (\lambda_1 x_1 + \ldots + \lambda_n x_n)}.
 		\end{aligned}
 	\end{equation}
\end{restatable}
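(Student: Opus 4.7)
The plan is to first establish the single-step recursion (the first equality) and then obtain the iterated form by induction on $n$. For the single step, I would start from the defining recursion $\Psi_{\bl_n}(\bx_n) = \Lambda_n(\lambda_n) \Psi_{\bl_{n-1}}(\bx_{n-1})$ of \eqref{Psi-rec} and factor the $\Lambda$-kernel of \eqref{Lker} as
\[
\Lambda(\bx_n, \by_{n-1}; \lambda_n) = e^{\const \lambda_n x_n} \prod_{j=1}^{n-1} \K(x_n - y_j) \cdot Q(\bx_{n-1}, \by_{n-1}; \lambda_n),
\]
which is immediate once one splits off the $x_n$-terms in $e^{\const \lambda_n \bbx_n}$ and in $\K(\bx_n, \by_{n-1})$. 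Substituting into \eqref{Psi-rec} and pulling the $x_n$-dependent factors outside the $\by_{n-1}$-integral gives
\[
\Psi_{\bl_n}(\bx_n) = e^{\const \lambda_n x_n}\, Q_{n-1}(\lambda_n) \Bigl[\prod_{j=1}^{n-1} \K(x_n - x_j)\, \Psi_{\bl_{n-1}}(\bx_{n-1})\Bigr].
\]
Corollary \ref{corQQ}, applied at level $n-1$ with $\QQ_{n-1}(x_n)$ acting on the spectral variables $\bl_{n-1}$, identifies the bracket as $\QQ_{n-1}(x_n) \Psi_{\bl_{n-1}}(\bx_{n-1})$, yielding the single-step recursion.

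The iteration is then a clean induction on $n$, starting from $\Psi_{\lambda_1}(x_1) = e^{\const \lambda_1 x_1}$. Assuming the iterated formula at level $n-1$ and plugging it into the single-step recursion produces the correct operator string $Q_{n-1}(\lambda_n)\QQ_{n-1}(x_n) \cdots Q_1(\lambda_2) \QQ_1(x_2)$ with an extra factor $e^{\const \lambda_n x_n}$ on the outside and the exponential $e^{\const \sum_{j=1}^{n-1} \lambda_j x_j}$ already at the innermost position. The crucial observation is that $e^{\const \lambda_n x_n}$ depends only on $\lambda_n$ and $x_n$, neither of which appears in any of the operators $Q_k(\lambda_{k+1})$ or $\QQ_k(x_{k+1})$ with $k \leq n-1$: these act on $\bx_k$ and $\bl_k$ respectively, disjoint from $\{x_n, \lambda_n\}$. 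Hence $e^{\const \lambda_n x_n}$ commutes past the entire operator string and merges with the existing exponential into $e^{\const \sum_{j=1}^n \lambda_j x_j}$, completing the induction.

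The main technical point, and the reason the statement is restricted to real positive periods, is ensuring absolute convergence of all the iterated integrals so that Fubini is available to pull the $e^{\const \lambda_n x_n}$ factor out of the $\by_{n-1}$-integral and to justify the operator compositions. Applicability of Corollary \ref{corQQ} at each iteration and absolute convergence of the multiple integrals follow uniformly in this regime from Proposition \ref{Prop-Qconv} together with the duality Theorem \ref{thm-duality}; no further ingredient is required.
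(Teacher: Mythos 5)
Your proposal is correct and follows essentially the same route as the paper: the paper obtains exactly this relation (its formula \eqref{PsiQQ'}) inside the proof of Theorem \ref{thm-duality}, using the same factorization $\Lambda_{n}(\lambda)=e^{\const\lambda x_n}Q_{n-1}(\lambda)\prod_{j<n}\K(x_n-x_j)$ together with the $\QQ_{n-1}(x_n)$-eigenvalue property of the (dual) wave function, whereas you simply invoke the already-established Corollary \ref{corQQ} after the fact — a packaging difference, not a different argument. One small inaccuracy: pulling $e^{\const\lambda_n x_n}$ out of the $\by_{n-1}$-integral needs no Fubini (it is constant in those variables), and the genuine Fubini step (with the Halln\"as--Ruijsenaars bound) belongs to the duality proof you cite rather than to this corollary itself.
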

An analogous representation is presented in \cite{GLO} for the open Toda wave function. The wave function of open Toda chain also admits two integral representations iterative with respect to coordinates $x_j$ and spectral variables $\l_j$ correspondingly. In \cite{GLO} their equivalence is established in a similar way with the help of two dual $Q$-operators and their spectral descriptions analogous to \eqref{QQPsi}. But these spectral properties are proved differently.

Finally, we note that in the works of M. Hallnäs and S. Ruijsenaars \cite{HR1}, \cite{HR2}, \cite{HR3}, where only the case of real periods $\o_1, \o_2$ is considered, the wave function $\Phi_{\bl_n}(\bx_n)$ differs from ours by renormalization of spectral parameters, that is
\begin{equation}
\Phi_{\bl_n}(\bx_n) = \Psi_{ \bl_n / \o_1 \o_2 }(\bx_n).
\end{equation}
Therefore, due to homogeneity of the double sine function \eqref{S-hom} (with $\gamma = \o_1 \o_2$), the duality relation \eqref{duality} for this function looks as
\begin{equation}\label{HR-duality}
\Phi_{\bl_n}(\bx_n; g | \bo) = \Phi_{\bx_n}(\bl_n; g^*|\bo ).
\end{equation}
However, for the complex periods the scaling by $\o_1 \o_2$ rotates integration contours in the dual integral representation. Note that for $n=2,3$ the relation \eqref{HR-duality}  was proved  by different method in \cite{HR2} .

\setcounter{equation}{0}
\section{Exchange relations}\label{sec-QL}
In this section we derive exchange relations between $Q$-operators and raising operators,
as well as between raising operators themselves, starting from basic commutativity property of $Q$-operators. We start with exchange relation between raising operators.
\theoremll*
\begin{proof}
	In terms of kernels \eqref{Lker} the identity \rf{ll} is equivalent to the relation
	\begin{equation}
		\Lambda(\bm{x}_n, \bm{z}_{n - 2}; \lambda, \rho) = e^{\const (\rho - \lambda) (\bbx_n + \bbz_{n - 2} ) } \, \Lambda(\bm{x}_n, \bm{z}_{n - 2}; \rho, \lambda),
	\end{equation}
	where
	\begin{equation}
		\Lambda(\bm{x}_n, \bm{z}_{n - 2}; \lambda, \rho) = \int_{\mathbb{R}^{n - 1}} d\bm{y}_{n - 1} \, \mu  (\bm{y}_{n - 1}) \, e^{ \const (\rho - \lambda) \bby_{n-1} } \, \K(\bm{x}_n, \bm{y}_{n - 1}) \, \K(\bm{y}_{n - 1}, \bm{z}_{n - 2}).
	\end{equation}
	Denoting $x_n := z_{n - 1}$ we arrive at $Q$-operators commutation relation \eqref{QQcomm}
	\begin{equation}
		Q_{n - 1}(\lambda) \, Q_{n - 1}(\rho) = Q_{n - 1}(\rho) \, Q_{n - 1}(\lambda)
	\end{equation}
	written in terms of kernels.
\end{proof}

The following theorem plays the crucial role in further derivations of the duality properties of the wave functions studied in this paper.
\QLambda*		
\begin{proof}
Recall $Q$-operators commutativity relation \eqref{QQcomm}
\begin{equation}\label{QQcomm2}
Q_{n}(\lambda) \, Q_n(\rho) = Q_n(\rho) \, Q_n(\lambda).
\end{equation}
The main idea of the current proof is to take the limit of the identity \eqref{QQcomm2} written in terms of kernels
\begin{equation}\label{QQker}
Q(\bm{x}_n, \bm{z}_n; \lambda, \rho) = Q_{}(\bm{x}_n, \bm{z}_n; \rho, \lambda)
\end{equation}
as $z_n \rightarrow \infty$, where
\begin{equation}
Q(\bm{x}_n, \bm{z}_n; \lambda, \rho) = \int_{\mathbb{R}^n} d\bm{y}_n \, Q(\bm{x}_n, \bm{y}_n; \lambda) \,Q(\bm{y}_n, \bm{z}_n; \rho), \qquad x_j, z_j \in \mathbb{R}.
\end{equation}
Here we canceled constants $d_n^2(g|\bo)$ from both sides. However, to obtain an appropriate limit we should first multiply \eqref{QQker} by the function
\begin{equation}
r(\bm{z}_n; \rho) = \exp\Bigl( \pi \hat{g} \bigl[\bbz_{n - 1}+(2-n)z_n\bigr] + {2 \pi}\imath \rho  z_n \Bigr),
\end{equation}
where we recall the notation \eqref{I27a}
$$\hat{g} = \frac{g}{\o_1 \o_2}.$$
We note that the integral \eqref{QQker} is absolutely convergent for $\lambda, \rho \in \mathbb{C}$ such that
\beq \label{IV8a} | \Im(\lambda - \rho) | < \nu_g = \Re \hat{g}, \eeq
see \cite[Proposition 5]{BDKK}. So, in what follows we assume this condition for $\l,\rho$.

Consider the left-hand side of \eqref{QQker}. Multiplying by the function $r(\bm{z}_n; \rho)$, we can rewrite it as
\begin{equation}\label{IV13}
r(\bm{z}_n; \rho) \, Q_{}(\bm{x}_n, \bm{z}_n; \lambda, \rho) = \int_{\mathbb{R}^n} d\bm{y}_n F(\bm{x}_n, \bm{y}_n, \bm{z}_n; \lambda, \rho),
\end{equation}
where
\begin{equation}\label{F}
\begin{aligned}
F &= e^{{2 \pi \imath}  \lambda (\bbx_n - \bby_n) + 2\pi\imath\left( \rho - \frac{\imath \hat{g}}{2} \right) ( \bby_n - \bbz_{n - 1} )  } \\[10pt]
&\times \K (\bm{x}_n, \bm{y}_n) \, \mu  (\bm{y}_n) \, \K (\bm{y}_n, \bm{z}_{n - 1}) \, \mu  (\bm{z}_{n - 1}) \\[5pt]
&\times \prod_{j = 1}^n e^{ \pi\hat{g} (z_n - y_j) } \K (y_j - z_n) \; \prod_{j = 1}^{n - 1} e^{ 2\pi \hat{g} (z_j - z_n) } \mu  (z_n - z_j) \mu(z_j - z_n).
\end{aligned}
\end{equation}
The variable $z_n$ is contained only in the last line. The function $r(\bm{z}_n; \rho)$ was chosen such that products in the last line have a pointwise limit
\begin{equation}
\lim_{z_n \rightarrow \infty} \, \prod_{j = 1}^n e^{ \pi\hat{g} (z_n - y_j) } \K (y_j - z_n) \; \prod_{j = 1}^{n - 1} e^{ 2\pi \hat{g} (z_j - z_n) } \mu  (z_n - z_j) \mu(z_j - z_n) = 1.
\end{equation}
To evaluate this limit we use asymptotic formulas \eqref{Kmu-asymp}
\begin{equation}\label{IV8}
\mu(x) \sim e^{\pi \hat{g} | x | \pm \imath \frac{\pi \hat{g} g^*}{2} }, \qquad K(x) \sim e^{- \pi \hat{g} |x|}, \qquad x\rightarrow \pm \infty.
\end{equation}
And for sufficiently large $z_n $, such that $z_n > z_j$ for all $j = 1, \dots, n - 1$, it is bounded from above
\begin{equation}\label{IV15}\left|
\prod_{j = 1}^n e^{ \pi\hat{g} (z_n - y_j) } \K (y_j - z_n) \; \prod_{j = 1}^{n - 1} e^{ 2\pi \hat{g} (z_j - z_n) } \mu  (z_n - z_j) \mu(z_j - z_n) \right| \leq C_1
\end{equation}
with some $C_1$ which depends on $g$ and $\bo$. For this we use bounds \eqref{Kmu-bound}
\begin{equation}\label{IV16}
| \mu  (x) | \leq C e^{\pi\nu_g |x| }, \qquad |\K (x)| \leq C e^{- \pi\nu_g |x| }, \qquad x \in \mathbb{R}
\end{equation}
where again we denoted
\beq \nu_g=\Re \hat{g} >0 \eeq
and $C$ is a constant which depends on $g,\bo$.

The next step is to derive a bound for the whole function $F$ in order to use dominated convergence theorem and evaluate $z_n \rightarrow \infty$ limit of the integral \eqref{IV13}. For all functions in the second line \eqref{F} use bounds \eqref{IV16} together with triangle inequalities
\begin{equation}\label{Kmu-bounds2}
\begin{aligned}
|\K (x_i - y_j)| &\leq C \, e^{-\pi\nu_g |x_i - y_j|} \leq C \, e^{-\pi\nu_g (|y_j| - |x_i|)}, \\[6pt]
|\mu  (y_i - y_j)| &\leq C \, e^{\pi\nu_g |y_i - y_j|} \leq C \, e^{\pi\nu_g(|y_i| + |y_j|)}.
\end{aligned}
\end{equation}
Hence, using also bound \eqref{IV15} for the whole function $F$ we obtain
\begin{equation}
| F | \leq C_2(g, \bm{\omega}, \bm{x}_n, \bm{z}_{n - 1}) \; \exp \Bigl( \bigl[ \, | 2 \pi \Im(\lambda - \rho) + \pi\nu_g | - \pi\nu_g \bigr] \| \bm{y}_n \| \Bigr)
\end{equation}
with some $C_2$, where by $\|\bm{y}_n\|$ we mean $L^1$-norm
\begin{equation}
\| \bm{y}_n \| = \sum_{j = 1}^n |y_j|.
\end{equation}
Function from the right is integrable when
\beq \label{IV19} \Im(\lambda - \rho) \in (-\nu_g, 0). \eeq
Note that this condition is stronger than the one assumed before \eqref{IV8a}. So, assuming it we use dominated convergence theorem to evaluate the limit of the integral \eqref{IV13}
\begin{equation}
\begin{aligned}
\lim_{z_n \rightarrow \infty} \, r(\bm{z}_n; \rho) \, Q_{}(\bm{x}_n, \bm{z}_n; \lambda, \rho) &= \int_{\mathbb{R}^n} d\bm{y}_n \; e^{2 \pi \imath \left[ \lambda (\bbx_n - \bby_n) + \left( \rho - \frac{\imath \hat{g}}{2} \right) ( \bby_n - \bbz_{n - 1} ) \right] } \\[10pt]
&\times \K (\bm{x}_n, \bm{y}_n) \, \mu  (\bm{y}_n) \, \K (\bm{y}_n, \bm{z}_{n - 1}) \, \mu  (\bm{z}_{n - 1}).
\end{aligned}
\end{equation}
The result coincides with the kernel of operator $Q_n(\lambda) \, \Lambda_{n}(\rho -  \imath \hat{g}/2)$ modulo multiplication by constant~$d_n(g|\bo) d_{n - 1}(g|\bo)$.

The next step is to take the limit of the right-hand side of equality \eqref{QQker} multiplied by the function $r(\bm{z}_n; \rho)$
\begin{equation}
r(\bm{z}_n; \rho) \, Q_{}(\bm{x}_n, \bm{z}_n; \rho, \lambda) = \int_{\mathbb{R}^n} d\bm{y}_n \; G(\bm{x}_n, \bm{y}_n, \bm{z}_n; \rho, \lambda).
\end{equation}
In contrast with the function $F$ appeared on the left-hand side, the function $G$ doesn't have a pointwise limit as $z_n \rightarrow \infty$. The trick is to perform a certain shift of integration variable $y_n$ and then take the limit.

But before we also need to slightly rewrite this integral. Notice that the function $G$ is symmetric with respect to variables $y_j$. Define the domain $D_j$ as
\begin{equation}
D_j = \{ \bm{y}_n \in \mathbb{R}^n \colon y_j \geq y_k, \; \forall k \in [n] \setminus \{j\} \}.
\end{equation}
We have the equality
\begin{equation}
\bm{1}_{D_1} + \bm{1}_{D_2} + \ldots + \bm{1}_{D_n} = 1,
\end{equation}
where $\bm{1}_{D_j}$ is the indicator function of the domain $D_j$. Therefore using symmetry of $G$ we can rewrite the last integral
\begin{equation}
r(\bm{z}_n; \rho) \, Q_{}(\bm{x}_n, \bm{z}_n; \rho, \lambda) = n \int_{\mathbb{R}^n} d\bm{y}_n \; \bm{1}_{D_n} \, G(\bm{x}_n, \bm{y}_n, \bm{z}_n; \rho, \lambda),
\end{equation}
where we gained factor of $n$ due to the equality above.

Now we should shift the integration variable $y_n \rightarrow y_n + z_n$. Let us write the result after the shift
\begin{equation}\label{IV24}
\begin{aligned}
r(& \bm{z}_n; \rho) \, Q_{}(\bm{x}_n, \bm{z}_n; \rho, \lambda) = n \int_{\mathbb{R}^n} d\bm{y}_n \; e^{2 \pi \imath \left[ \left( \rho - \frac{\imath \hat{g}}{2} \right) (\bbx_n - \bby_n) + \lambda ( \bby_n - \bbz_{n - 1} ) \right] } \\[10pt]
&\times \K (\bm{x}_n, \bm{y}_{n - 1}) \, \mu  (\bm{y}_{n - 1}) \, \K (\bm{y}_{n - 1}, \bm{z}_{n - 1}) \, \K (y_n) \, \mu  (\bm{z}_{n - 1}) \, R(\bm{x}_n, \bm{y}_n, \bm{z}_n),
\end{aligned}
\end{equation}
where by $R$ we denoted all factors that contain $z_n$
\begin{equation}
\begin{aligned}
R&(\bm{x}_n, \bm{y}_n, \bm{z}_n) =  \prod_{j = 1}^n e^{ \pi \hat{g} (y_n + z_n - x_j) } \K (x_j - y_n - z_n) \prod_{j = 1}^{n - 1} e^{ 2\pi \hat{g} (y_{jn} - z_n) } \mu   (y_{nj} + z_n) \mu(y_{jn} - z_n) \\[3pt]
&\times\prod_{j = 1}^{n - 1} e^{ \pi\hat{g}(z_{nj} + y_{nj} + z_n) } \K (y_n + z_{nj}) \, \K (y_j - z_n) \; \prod_{j = 1}^{n - 1} e^{ 2 \pi \hat{g}z_{jn}}  \mu  (z_n - z_j) \mu(z_j - z_n) \cdot \bm{1}_{D_n'}
\end{aligned}
\end{equation}
Here, for brevity, we introduced notation $y_{nj} = y_n - y_j$. Domain of the indicator function has changed after the shift
\begin{equation}
D'_n = \{ \bm{y}_n \in \mathbb{R}^n \colon y_n + z_n \geq y_k, \; \forall k \in [n-1] \}.
\end{equation}
Due to the asymptotics \eqref{IV8} it has a pointwise limit
\begin{equation}
\lim_{z_n \rightarrow \infty} R(\bm{x}_n, \bm{y}_n, \bm{z}_n) = 1.
\end{equation}
Hence, to use the dominated convergence theorem we need to write a bound independent of $z_n$ for it. Note that in the presence of the indicator function we have equality
\begin{equation}
y_{nj} + z_n = |y_{nj} + z_n|, \qquad j = 1, \ldots, n - 1.
\end{equation}
Therefore, for a factor from the second product in $R$ we can write
\begin{equation}
\left|e^{ 2\pi \hat{g} (y_{jn} - z_n) } \mu   (y_{nj} + z_n) \mu(y_{jn} - z_n)\right| \leq C^2,
\end{equation}
where we used the bound for $\mu $ \eqref{IV16}. Factors from three other products can be estimated similarly using bounds on $\K $ and $\mu$ \eqref{IV16} and assuming  that $z_n$ is sufficiently large, such that $z_n > z_j$ for all $j = 1, \dots, n - 1$. We conclude that $R$ is bounded
\begin{equation}
|R(\bm{x}_n, \bm{y}_n, \bm{z}_n)| \leq C_3(g, \bm{\omega})
\end{equation}
with some $C_3$.

Now using this fact together with bounds \eqref{Kmu-bounds2} we estimate the integrand in \eqref{IV24}
\begin{equation}
\begin{aligned}
\bigl| \bm{1}_{D'_n} \, &G(\bm{x}_n, \bm{y}_{n - 1}, y_n + z_n, \bm{z}_n; \rho, \lambda) \bigr| \\[6pt]
&\leq C_4 \exp \Bigl( \bigl[ \, \left| 2\pi \Im(\rho - \lambda) - \pi\nu_g \right| - 3\pi\nu_g \bigr] \| \bm{y}_{n - 1} \| \\[6pt]
&\hspace{1.5cm}+ \bigl[ \, \left| 2 \pi \Im(\rho - \lambda) - \pi\nu_g \right| - \pi\nu_g \bigr] | y_n |  \Bigr)
\end{aligned}
\end{equation}
with some $C_4(g, \bm{\omega}, \bm{x}_n, \bm{z}_{n - 1})$. Clearly, function from the right doesn't depend on $z_n$ and is integrable for $\Im(\rho - \lambda) \in (0, \nu_g)$. The same condition appeared when we considered the left-hand side of the equality \eqref{IV19}. Thus, assuming it we use dominated convergence theorem to evaluate the limit $z_n \rightarrow \infty$ of the integral \eqref{IV24}
\begin{equation}
\begin{aligned}
\lim_{z_n \rightarrow \infty} \, r( \bm{z}_n; \rho) \, Q_{}(&\bm{x}_n, \bm{z}_n; \rho, \lambda) = n \int_{\mathbb{R}^n} d\bm{y}_n \; e^{{2 \pi \imath}{} \left[ \left( \rho - \frac{\imath \hat{g}}{2} \right) (\bbx_n - \bby_n) + \lambda ( \bby_n - \bbz_{n - 1} ) \right] } \\[10pt]
&\times \K (\bm{x}_n, \bm{y}_{n - 1}) \, \mu  (\bm{y}_{n - 1}) \, \K (\bm{y}_{n - 1}, \bm{z}_{n - 1}) \, \K (y_n) \, \mu  (\bm{z}_{n - 1}).
\end{aligned}
\end{equation}
The integral over $y_n$ has separated and represents Fourier transform of function $\K $ \eqref{K-fourier}
\begin{equation}
\int_{\mathbb{R}} dy_n \; e^{{2 \pi \imath}{} \left( \lambda - \rho + \frac{\imath \hat{g}}{2} \right) y_n }  \K (y_n) = \sqrt{\omega_1 \omega_2} \, S(g) \, \KK \Bigl(\lambda - \rho + \frac{\imath \hat{g}}{2} \Bigr).
\end{equation}
The rest part of the integral coincides with the kernel of operator $\Lambda_{n}(\rho - \imath \hat{g}/2) \, Q_{n - 1}(\lambda)$ up to constant $d^2_{n - 1}(g|\bo)$.

So, starting with the commutativity of $Q$-operators in the limit $z_n \rightarrow \infty$ we obtain the equality
\begin{equation}
Q_n(\lambda) \, \Lambda_{n}\Bigl(\rho - \frac{\imath \hat{g}}{2} \Bigr) = \KK \Bigl(\lambda - \rho + \frac{\imath \hat{g}}{2} \Bigr) \Lambda_{n}\Bigl(\rho - \frac{\imath \hat{g}}{2} \Bigr) Q_{n-1}(\lambda),
\end{equation}
assuming $\Im(\rho - \lambda) \in (0, \nu_g) = (0, \Re \hat{g})$. Shifting the parameter $\rho \rightarrow \rho +  \imath \hat{g}/2$ we arrive at the identity stated in the theorem \eqref{QLcomm}.
\end{proof}
\begin{remark}
Exchange relation \eqref{QLcomm} at the edge of convergence strip $\Im(\lambda - \rho) = \pm \nu_g/2$ could be also obtained, although it is beyond the scope of the present paper. In this case one should think of the corresponding kernels (that won't be represented by absolutely convergent integrals anymore) as generalized functions.
\end{remark}

\setcounter{equation}{0}
\section{Eigenfunctions}

\subsection{Eigenfunctions of Macdonald operators}\label{sec-Macdonald}
Here we  prove that the wave functions \rf{I12} are eigenfunctions of Macdonald operators for complex valued periods $\bo$ and coupling constant $g$ assuming the conditions \rf{I0a},  \rf{I0b} and \rf{I13a}. The proof differs from that of \cite{HR1} by complex valued periods and the use of bilinear scalar product \rf{BM3a}.
\theor*
\begin{proof} We prove \rf{I13} by induction over $n$. It is clearly correct for $n=1$. For $n>1$ under the conditions  \rf{I0a} and  \rf{I0b} the function $\Psi_{\bl_n}(\bx_n)$ with
 \beq | \Im x_j | < \frac{g^*}{2}, \qquad | \Im(\l_j - \l_k) | < \theta, \qquad j,k=1,\ldots,n\eeq
with some positive $\theta$  is defined by absolutely converging multiple integral \rf{I12} (see Proposition \ref{Prop-Qconv} in Section \ref{sec-Qeigen}), which can be written via recursive procedure
	\begin{equation}\label{Psi-rec2a}
		\Psi_{\bm{\lambda}_n}(\bm{x}_n) = \Lambda_{n}(\lambda_n) \, \Psi_{\bm{\lambda}_{n - 1}}(\bm{x}_{n - 1}), \qquad \Psi_{\lambda_1}(x_1) = e^{{2\pi \imath}{} \lambda_1 x_1}.
	\end{equation}
	 In the space of functions $\vf(\by_{n-1})$ analytical in a small strip around the real plane
	introduce the symmetric bilinear pairing
	\beq\label{BM3a} (\vf,\psi)= \int_{\R^n}d\by_n \, \mu(\by_n) \, \vf(\by_n)\psi(-\by_n)\eeq
	assuming that it is defined when the corresponding integral converges.
	Denote by $\tau_y$ the operator that changes the sign of argument in a function
	\beqq  \tau_y \vf(\by_n)=\vf(-\by_n). \eeqq
	Then we can rewrite this pairing as
	\beq\label{BM4} (\vf,\psi)= \int_{\R^n}d\by_n \, \mu(\by_n) \, \vf(\by_n) \,\tau_y \bigl[\psi(\by_n) \bigr].\eeq
	In this notation, the inductive procedure \rf{Psi-rec2a} can be written as
	\beq\label{BM3b}	\Psi_{\bm{\lambda}_n}(\bm{x}_n)=e^{\const \l_n\bbx_n}\left(K(\bx_n,\by_{n-1}),e^{\const \l_n\bby_{n-1}}\,\tau_y \Psi_{\bm{\lambda}_{n-1}}(\bm{y}_{n-1})\right).\eeq
	Note the important property of the integration contour $C= \R^n$ in the integral \rf{BM3b}:
	it separates two series of poles of the kernel function:
	\beq\label{t04}
	\imath y_i=\imath x_j+\frac{g^\ast}{2} +m\o_1+k\o_2,\qquad \text{and}\qquad
	\imath y_i= \imath x_j-\frac{g^\ast}{2} -m\o_1-k\o_2, \qquad \ m,k\geq 0,
	\eeq
	and two series of poles of the measure function
	\beq \label{t04a}\imath y_i=\imath y_j+g  +m\o_1+k\o_2, \qquad\text{and}\qquad
	\imath y_i=\imath y_j-g  -m\o_1-k\o_2,\qquad \ m,k\geq 0.
	\eeq
		The kernel function $K(\bx_n,\by_n)$ satisfies the relation \cite{R2}
	\beq \label{I8} \left(M_r(\bx_n;g|\bo)- M_r(-\by_n;g|\bo)\right)K(\bx_n,\by_n)=0\eeq
	which is the corollary of the trigonometric version of the kernel function identity \cite{R2,KN}, valid for any tuples $\bx_n$ and $\by_n$ of $n$ complex variables and arbitrary parameter~$\a$:
	\beq\begin{split}\label{I9}
		\sum_{\substack{I_r\subset[n] \\ |I_r|=r}}\prod_{i\in I_r}\left(\prod_{j\in [n]\setminus I_r}\frac{\sin(x_i-x_j-\a)}{\sin(x_i-x_j)}\prod_{a=1}^{n}\frac{\sin(x_i-y_a+\a)}{\sin(x_i-y_a)}\right)
		=\\ \sum_{\substack{A_r\subset[n] \\ |A_r|=r}}\prod_{a\in A_r}\left(\prod_{b\in [n]\setminus A_r}
		\frac{\sin(y_a-y_b+\a)}{\sin(y_a-y_b)}\prod_{i=1}^{n}\frac{\sin(x_i-y_a+\a)}{\sin(x_i-y_a)}\right)
	\end{split}\eeq
In the limit $y_n$ tends to $\imath\infty$ the identity   \rf{I9} degenerates to
\beq\label{I9b}
\begin{split}
	\sum_{\substack{I_r\subset[n] \\ |I_r|=r}}\prod_{i\in I_r}\left(\prod_{j\in [n]\setminus I_r}\frac{\sin(x_i-x_j-\a)}{\sin(x_i-x_j)}\prod_{a=1}^{n-1}\frac{\sin(x_i-y_a+\a)}{\sin(x_i-y_a)}\right)
	=\\ \sum_{\substack{A_r\subset[n-1] \\  |A_r|=r}}\prod_{a\in A_r}\left(\prod_{b\in [n-1]\setminus A_r}
	\frac{\sin(y_a-y_b+\a)}{\sin(y_a-y_b)}\prod_{i=1}^{n}\frac{\sin(x_i-y_a+\a)}{\sin(x_i-y_a)}\right)+\\
	\sum_{\substack{A_{r-1}\subset[n-1] \\ |A_{r-1}|=r-1}}\prod_{a\in A_{r-1}}\left(\prod_{b\in [n-1]\setminus A_{r-1}}
	\frac{\sin(y_a-y_b+\a)}{\sin(y_a-y_b)}\prod_{i=1}^{n}\frac{\sin(x_i-y_a+\a)}{\sin(x_i-y_a)}\right)	
\end{split}\eeq
which implies the relation
\beq \label{I8a}
\begin{aligned}
	M_r(\bx_n;g|&\bo)K(\bx_n,\by_{n-1}) \\[5pt]
	&= \Bigl[M_r(-\by_{n-1};g|\bo)+M_{r-1}(-\by_{n-1};g|\bo) \Bigr] K(\bx_n,\by_{n-1}).
\end{aligned}
\eeq
Besides, Macdonald operators $M_r(\by_{n-1};g|\bo)$ are symmetric with respect to the bilinear pairing
	\rf{BM3a},	 see \cite[Chapter VI, §9, eq.(9.4)]{M},
	\beq\label{BM6} \Bigl(M_r(\by_{n-1};g|\bo)\vf(\by_{n-1}),\psi(\by_{n-1})\Bigr)=
	\Bigl(\vf(\by_{n-1}),M_r(\by_{n-1};g|\bo)\psi(\by_{n-1})\Bigr).
	\eeq
	Acting with Macdonald operator $M_r(\bx_n;g|\bo)$ on the wave function \eqref{BM3b} we have
	\beq\label{BM7}\begin{split} & M_r(\bx_n;g|\bo) \, e^{\const \l_n\bbx_n}\left(K(\bx_n,\by_{n-1}),e^{\const \l_n\bby_{n-1}}\,\tau_y \Psi_{\bm{\lambda}_{n-1}}(\bm{y}_{n-1})\right)\\[5pt]
		&= e^{{2\pi \imath \l_n}\bbx_n + 2\pi r \l_n \o_1} \Bigl(M_r(\bx_n;g|\bo) K(\bx_n,\by_{n-1}),\ e^{{2\pi \imath \l_n}\bby_{n-1}} \, \tau_y \Psi_{\bl_{n-1}}(\by_{n-1})\Bigr).\end{split}	\eeq
	Using \rf{I8a} we rewrite the right hand side of \rf{BM7} as
	\beq\label{BM7a}\begin{split}
	&e^{{2\pi \imath \l_n}\bbx_n + 2\pi r \l_n \o_1} \biggl[\Bigl(\tau_y M_r(\by_{n-1};g|\bo)\tau_y K(\bx_n,\by_{n-1}),\ e^{{2\pi \imath \l_n}\bby_{n-1}} \, \tau_y \Psi_{\bl_{n-1}}(\by_{n-1}) \Bigr)\\[4pt]
	&\qquad+\Bigl(\tau_y M_{r-1}(\by_{n-1};g|\bo)\tau_y K(\bx_n,\by_{n-1}),\ e^{{2\pi \imath \l_n}\bby_{n-1}} \, \tau_y \Psi_{\bl_{n-1}}(\by_{n-1}) \Bigr)\biggr]\\[4pt]
	=\, &e^{{2\pi \imath \l_n}\bbx_n + 2\pi r \l_n \o_1} \biggl[\Bigl(M_r(\by_{n-1};g|\bo)\tau_y K(\bx_n,\by_{n-1}),\ e^{-{2\pi \imath \l_n}\bby_{n-1}} \, \Psi_{\bl_{n-1}}(\by_{n-1}) \Bigr)\\[4pt]
	&\qquad+\Bigl (M_{r-1}(\by_{n-1};g|\bo)\tau_y K(\bx_n,\by_{n-1}),\ e^{-{2\pi \imath \l_n}\bby_{n-1}} \,  \Psi_{\bl_{n-1}}(\by_{n-1}) \Bigr)\biggr].
	\end{split}\eeq
Then using the symmetricity of Macdonald operators \rf{BM6} we rewrite \rf{BM7a} as
\beq\label{BM5}\begin{split}
	e^{{2\pi \imath \l_n}\bbx_n + 2\pi r \l_n \o_1} \biggl[\Bigl(\tau_y K(\bx_n,\by_{n-1}),\  M_r(\by_{n-1};g|\bo)e^{{-2\pi \imath \l_n}\bby_{n-1}} \, \Psi_{\bl_{n-1}}(\by_{n-1})  \Bigr)\\ + \Bigl( \tau_y K(\bx_n,\by_{n-1}),\ M_{r-1}(\by_{n-1};g|\bo)e^{{-2\pi \imath \l_n}\bby_{n-1}} \, \Psi_{\bl_{n-1}}(\by_{n-1})  \Bigr)\biggr]
		\end{split}\eeq
	under the assumption that the shifts of the integration contour implemented in the equality \rf{BM6} do not spoil separation of the poles \rf{t04} and \rf{t04a} property. For \rf{t04} this happens when
	 \beq\label{BM5a} \Re \o_1<\Re g^\ast,\eeq
	 which is equivalent to \eqref{I13a}. The conditions \rf{t04a} could be violated in the right hand side of the adjointness relation. However, in this case the nearest poles $\imath y_a=\imath y_b+g$ are canceled by zeroes of the coefficients $\sh\frac{\pi}{\o_2} (y_a-y_b-\imath g)$ of Macdonald operators, see \cite[Section 2]{BDKK}, and we do not get here additional restrictions.
	  Finally we rewrite  \rf{BM5} as
	  \beq\label{BM5c}\begin{split}
	  	&e^{2\pi \imath \l_n\bbx_n} \biggl[ \Bigl( \tau_y K(\bx_n,\by_{n-1}),\  e^{-2\pi \imath \l_n\bby_{n-1}} \,M_r(\by_{n-1};g|\bo) \Psi_{\bl_{n-1}}(\by_{n-1})  \Bigr)\\
	  	&\qquad+\Bigl( \tau_y K(\bx_n,\by_{n-1}),e^{{-2\pi \imath \l_n}\bby_{n-1}+2\pi \l_n\o_1} M_{r-1}(\by_{n-1};g|\bo) \, \Psi_{\bl_{n-1}}(\by_{n-1}) \Bigr)\biggr]
	  	\\
		=\,&e^{2\pi \imath \l_n\bbx_n} \biggl[ \Bigl(  K(\bx_n,\by_{n-1}),\  e^{2\pi \imath \l_n\bby_{n-1}} \,\tau_y M_r(\by_{n-1};g|\bo) \Psi_{\bl_{n-1}}(\by_{n-1})  \Bigr)\\
		&\qquad+\Bigl( K(\bx_n,\by_{n-1}),e^{{2\pi \imath \l_n}\bby_{n-1}+2\pi \l_n\o_1} \tau_y M_{r-1}(\by_{n-1};g|\bo) \, \Psi_{\bl_{n-1}}(\by_{n-1}) \Bigr)\biggr].
	  \end{split}\eeq
  By induction assumption,
 \beq\label{BM5d}  M_{s}(\by_{n-1};g|\bo)\Psi_{\bl_{n-1}}(\by_{n-1}) =e_s\big(e^{2\pi  \l_1\o_1},\ldots, e^{2\pi  \l_{n-1}\o_1}\big) \Psi_{\bl_{n-1}}(\by_{n-1}).\eeq
  Then \rf{BM5c} implies the relation
	\beq\label{BM5e} \begin{split}&M_{r}(\bx_{n};g|\bo)\Psi_{\bl_{n}}(\bx_{n}) = \\[6pt]
	&\quad  \Bigl[e_r\big(e^{2\pi \l_1\o_1},\ldots, e^{2\pi  \l_{n-1}\o_1}\big)+ e^{\const \l_n\o_1}e_{r-1}\big(e^{2\pi  \l_1\o_1},\ldots, e^{2\pi  \l_{n-1}\o_1}\big)\Bigr]\Psi_{\bl_{n}}(\bx_{n})\end{split}
	\eeq
which in turn implies the theorem statement \rf{I13} due to the relation  	
\beqq e_r(z_1,\ldots, z_n)=e_r(z_1,\ldots, z_{n-1})+z_ne_{r-1}(z_1,\ldots, z_{n-1}).\eeqq
\end{proof}

\subsection{Eigenfunctions of $Q$-operators}\label{sec-Qeigen}
In this section we show that the iterated integrals \rf{I12} represent as well joint eigenfunctions of the commutative family of operators $Q_n(\l)$. Modulo convergence of the appearing integrals, it is a formal consequence of Theorem \ref{theoremQL}. However, these convergence details are not trivial.
\QPsi*
Proof of this theorem follows from the iterative representation of the wave function
\begin{equation}\label{Psi-iter}
\Psi_{\bl_n}(\bx_n; g)=\LL{n}(\l_n) \, \LL{n-1}(\l_{n-1})\cdots \LL{2}(\l_2) \, e^{\const \l_1x_1}
\end{equation}
and the exchange relation given by Theorem \ref{theoremQL}
\begin{equation}
Q_n(\lambda) \, \Lambda_{n}(\rho) = \KK(\lambda - \rho) \; \Lambda_{n}(\rho) \, Q_{n-1}(\lambda).
\end{equation}
The subtle point is the possibility to switch integration order inside of multiple integrals appearing during this calculation. It is indeed possible due to the absolute convergence of these integrals, which we prove in the following proposition. After it we give a proof of Theorem \ref{thm-QPsi}.

\begin{proposition}\label{Prop-Qconv} Multiple integrals
\begin{align}\label{int1}
I_{\lambda, \bm{\lambda}_n} &= Q_n(\lambda) \, \LL{n}(\l_n) \, \LL{n-1}(\l_{n-1})\cdots \LL{2}(\l_2) \, e^{\const \l_1x_1}, \\[7pt]
\label{int2}
J_{\lambda, \bm{\lambda}_n} &= \Lambda_{n}(\lambda_n) \, Q_{n - 1}(\lambda) \, \LL{n-1}(\l_{n-1})\cdots \LL{2}(\l_2) \, e^{\const \l_1x_1}, \\[7pt]
\label{int3}
\Psi_{\bl_n} &= \LL{n}(\l_n) \, \LL{n-1}(\l_{n-1})\cdots \LL{2}(\l_2) \, e^{\const \l_1x_1}.
\end{align}
are absolutely convergent for $\lambda, \lambda_j \in \mathbb{C}$ such that
\begin{equation}
	 |\Im (\l-\l_n)|<\frac{\nu_g}{2}(1-\ve),\qquad
|  \Im(\lambda_k - \lambda_j)| < \theta(\ve), \qquad k,j = 1, \dots, n
\end{equation}
for any $\ve \in [0,1)$ and
\beq \theta(\ve) = \frac{\nu_g}{4 (n - 1)! e} \, \ve. \eeq
\end{proposition}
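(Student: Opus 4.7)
The plan is to prove absolute convergence of the three iterated integrals \eqref{int1}--\eqref{int3} by induction on the depth of nesting. The basic building blocks are the pointwise exponential bounds $|\mu(x)| \le C e^{\pi \nu_g |x|}$ and $|\K(x)| \le C e^{-\pi \nu_g |x|}$ from \eqref{IV16}, together with triangle estimates of the type \eqref{Kmu-bounds2}, which are already the basis of the convergence analysis in the proof of Theorem \ref{theoremQL}.

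The central inductive statement I would aim at is a pointwise exponential bound
\begin{equation*}
|\Psi_{\bm{\lambda}_k}(\bm{y}_k)| \le C_k(g,\bm{\omega},\bm{\lambda}_k)\, \exp\Bigl(\beta_k \sum_{j=1}^k |y_j|\Bigr),
\end{equation*}
where the rate $\beta_k$ depends linearly on $\max_{i,j}|\Im(\lambda_i-\lambda_j)|$ with an explicit constant. The base case $k=1$ is immediate from $|\Psi_{\lambda_1}(y_1)| = e^{-2\pi\Im\lambda_1 \, y_1}$. For the inductive step I would represent $\Psi_{\bm{\lambda}_k}(\by_k)$ via \eqref{Psi-rec} as an integral over $\by_{k-1}$, use the symmetry of the integrand in $\by_{k-1}$ to restrict to a single Weyl chamber $y_1 \le \cdots \le y_{k-1}$ (gaining a factor $(k-1)!$), and then bound the four ingredients separately: the $\K$-product, the measure $\mu$, the exponential $e^{2\pi\imath\lambda_k(\bbx_k-\bby_{k-1})}$, and the inductive bound on $\Psi_{\bm{\lambda}_{k-1}}$. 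After telescoping the double sum $\sum_{i<j}(y_j-y_i)$ in the chamber, balancing these contributions yields a recursion $\beta_k \le \beta_{k-1} + c(|\Im\lambda_k|)$, which when unrolled produces the factorial denominator in $\theta(\ve) = \nu_g/(4(n-1)!e) \cdot \ve$.

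With the wave function itself bounded, convergence of $I_{\lambda,\bm{\lambda}_n}$ and $J_{\lambda,\bm{\lambda}_n}$ reduces to one additional estimate of the same kind for the outermost operator $Q_n(\lambda)$, respectively $\Lambda_n(\lambda_n)\,Q_{n-1}(\lambda)$. The condition $|\Im(\lambda-\lambda_n)| < \nu_g(1-\ve)/2$ on the external parameter is twice stricter than the internal one for the same reason as in the proof of Theorem \ref{theoremQL} (compare condition \eqref{IV19}): the outer layer of integration has to absorb both the growth of $\Psi_{\bm{\lambda}_n}$ propagated through the induction and the additional exponential factor $e^{2\pi\imath\lambda(\bbx_n-\bby_n)}$ along both half-lines of each coordinate simultaneously.

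The main obstacle I anticipate is the delicate interplay between the decay from the $k(k-1)$ factors of $\K$ and the growth from the $(k-1)(k-2)$ factors of $\mu$: a naive term-by-term bound already fails for moderate $k$, because along a single coordinate direction the $\mu$-product grows at rate $\sim 2(k-2)\pi\nu_g$ while the $\K$-product decays only at rate $k\pi\nu_g$. Closing this gap requires exploiting the chamber ordering so that the sum $\sum_{i<j}(y_j-y_i)$ telescopes and the total $\mu$-growth is measured by the \emph{diameter} of the configuration rather than the $L^1$-norm of $\by_{k-1}$, leaving enough slack to absorb a bounded perturbation coming from the spectral parameters. This is precisely the step that dictates both the size of the admissible strip $\theta(\ve)$ and the presence of the factorial in its denominator. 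Once this estimate is established, the dominated convergence theorem yields absolute convergence of all three integrals and, in particular, justifies the interchanges of the order of integration used both in Corollary \ref{corsym} and in the proof of Theorem \ref{thm-QPsi}.
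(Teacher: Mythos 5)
Your overall toolkit (the exponential bounds on $\K$ and $\mu$, ordering the variables in a chamber, dominated convergence) matches the paper's, and an induction over the nesting depth could in principle be organized; but as written the proposal is missing the quantitative heart of the argument, and the mechanism you give for the factorial cannot be right. The paper's proof hinges on the combinatorial inequalities \eqref{Sest2} and \eqref{Test} for the recursively defined functions $S_n$ and $T_n$: to extract a decay $\tfrac{\ve}{c_n}\|\bm{y}_k\|$ at an inner level one must give up an amount proportional to $\ve$ at the level above, and these losses compound \emph{multiplicatively} through the recursion $c_n=(n-1)(c_{n-1}+1)$, which is exactly where $c_n\sim (n-1)!\,e$ and hence the factorial in $\theta(\ve)$ comes from. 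Your claimed recursion $\beta_k\le\beta_{k-1}+c(|\Im\lambda_k|)$ is additive; unrolled it gives growth linear in $k$ and cannot produce a factorial denominator, so the step you describe as "dictating the size of $\theta(\ve)$" does not actually yield the stated strip. Relatedly, your inductive hypothesis $|\Psi_{\bm{\lambda}_k}(\bm{y}_k)|\le C_k e^{\beta_k\|\bm{y}_k\|}$ with $\beta_k$ controlled by $\max_{i,j}|\Im(\lambda_i-\lambda_j)|$ cannot hold in that crude form: shifting all $\lambda_j$ by a common imaginary constant multiplies $\Psi$ by a factor of modulus $e^{-2\pi c\,\overline{\bm{y}}_k}$, so the bound must retain the \emph{signed} linear term in $\overline{\bm{y}}_k\Im\lambda_k$ (as in the paper's bound after Lemma \ref{Slemma}) rather than an absolute-value rate; otherwise the cancellation between consecutive levels that reduces everything to differences $\Im(\lambda_{k-1}-\lambda_k)$ is lost.

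A second concrete gap is the integral \eqref{int2}. You treat it as "one additional estimate of the same kind," but there the external parameter couples to the combination $\sum_j\bigl(t_j-y_j^{(n-1)}\bigr)$, and to allow $|\Im(\lambda-\lambda_n)|$ up to $\tfrac{\nu_g}{2}(1-\ve)$ \emph{independently of $n$} one must keep a term $-r\,\bigl|\sum_j(t_j-y_j^{(n-1)})\bigr|$ in the estimate; this is the content of the separate inequality \eqref{Test}, which does not follow from the $S_n$-type bound plus triangle inequalities. Without it, the naive bound forces the admissible external strip to shrink with $n$. So to complete your plan you would need to prove (or import) statements equivalent to Lemma \ref{Slemma}, its corollary \eqref{Sest2}, and the $T_n$ inequality \eqref{Test}; the chamber/telescoping heuristic you sketch is in the right spirit but does not by itself close the gap between the $\mu$-growth and the $\K$-decay with the correct, $n$-uniform external strip and the $(n-1)!$-small internal strip.
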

In particular, the wave function $\Psi_{\bl_n}(\bx_n)$ is given by a multiple integral, converging when all the variables are in small strips around the real lines.
\begin{proof}  We present the proof for $n\geq 2$. For $n=1$ it is an elementary check.
	
Consider the first integral \eqref{int1}. It has $n$ levels of integration variables. Denoting integration variables of $k$-th level by
\begin{equation}
\bm{y}_{k} = \bigl( y_1^{(k)} , \dots, y_k^{(k)} \bigl),
\end{equation}
we write integral in its full form
\begin{equation}
\begin{aligned}
I_{\lambda, \bm{\lambda}_n} (\bm{x}_n) = C_I \int_{\mathbb{R}^{n(n+1)/2} } d\bm{y}_n & d\bm{y}_{n - 1} \dots d\bm{y}_1 \; e^{ {2\pi \imath}{} \left[ \lambda \bbx_n + (\lambda_n - \lambda) \bby_{n} \right] } \, \K  (\bm{x}_n, \bm{y}_{n}) \\[5pt]
& \hspace{0.6cm} \times  \prod_{k = 2}^n e^{{2\pi \imath}{}(\lambda_{k - 1} - \lambda_{k}) \bby_{k - 1}} \, \mu ( \bm{y}_{k} ) \, \K  (\bm{y}_{k}, \bm{y}_{k - 1} ).
\end{aligned}
\end{equation}
Here $C_I$ contains all constants $d_k$ from kernels of operators. Denote the integrand as $F$.
Assume that
\beq\label{kh1}|\Im (\l-\l_n)| \leq \delta_Q\frac{\nu_g}{2},\qquad |\Im (\l_{k - 1}-\l_{k})| \leq \delta_\Lambda\frac{\nu_g}{2},\qquad 2<k\leq n\eeq
where $\delta_Q$ and $\delta_{\Lambda}$ are some positive constants. Using bounds \eqref{Kmu-bound}
\begin{equation}\label{Kmu-bounds}
| \mu  (x) | \leq C e^{ \pi\nu_g |x| }, \qquad |\K (x)| \leq C e^{ - \pi\nu_g |x| }, \qquad x \in \mathbb{R},
\end{equation}
together with triangle inequalities $|x_i - y_j| \geq |y_j| - |x_i|$ we get
\begin{equation}\label{int1b}
| F | \leq C_1 \exp \pi\nu_g\biggl( \bigl[\delta_Q - n   \bigr]  \|\bm{y}_n \| +  S_n (\bm{y}_1, \dots,  \bm{y}_n ) +\delta_{\Lambda}\sum_{k=1}^{n-1}\|\bm{y}_k\|\biggr),
\end{equation}
with some $C_1(g, \bm{\omega}, \bm{x}_n)$ and the function $S_n$ in exponent is defined by recurrence relation
\begin{equation}\label{Srec}
\begin{aligned}
S_n (\bm{y}_1, \dots, \bm{y}_n ) = \sum_{\substack{i, j = 1 \\ i \not= j}}^n \bigl| y_i^{(n)} - y_j^{(n)} \bigr| &- \sum_{i = 1}^{n} \sum_{j = 1}^{n - 1} \, \bigl| y_i^{(n)} - y_j^{(n - 1)} \bigr| \\
&+ S_{n - 1} (\bm{y}_1, \dots,  \bm{y}_{n - 1} )
\end{aligned}
\end{equation}
with $S_1 = 0$. As before, $\| \bm{y}_k \|$ denotes $L^1$-norm.  In appendix we prove the inequality~\eqref{Sest2}
\begin{equation}
S_n \leq (n - 1 +  \epsilon) \| \bm{y}_n \| - \frac{\ve}{c_n} \, \sum_{k = 1}^{n - 1} \| \bm{y}_k \|
\end{equation}
for any $\epsilon \in \left[ 0,2(n - 1) \right]$, where $c_n$ is defined by the recurrence relation
\beq\label{kh2}c_n = (n - 1)(c_{n-1} + 1)\qquad\text{with}\qquad c_1 = 0.\eeq
Clearly,
\beq\label{kh5} (n-1)! \leq c_n<(n-1)!e, \qquad n\geq 2.\eeq
Inserting it in \eqref{int1b} we obtain
\begin{equation}\label{kh3}
|F| \leq C_1 \exp \pi\nu_g \biggl( \Bigl[ ( -1+\delta_Q + \ve)  \Bigr]  \|\bm{y}_n \|  +\left(\delta_{\Lambda}-\frac{\ve}{c_n}\right)  \sum_{k = 1}^{n - 1} \| \bm{y}_k \| \biggr).
\end{equation}
We have an exponential decay in the right hand side of \rf{kh3}, if
\beq\label{kh4} \delta_Q<1-\ve,\qquad \delta_{\Lambda}<\frac{\ve}{(n-1)!e}<\frac{\ve}{c_n}.
\eeq
This proves the statement of the proposition for the integral \rf{int1}.

Next, consider the second integral \eqref{int2}. It also has $n$ levels of integration variables. We will denote by $\bm{t}_{n - 1} = (t_1, \dots, t_{n - 1})$ variables associated with the operator $\Lambda_{n}$ and the rest levels of variables, as before, will be denoted by $\bm{y}_{k}$ ($k = 1, \dots, n - 1$). Full form of the integral
\begin{align} \nonumber
J_{\lambda, \bm{\lambda}_n} (\bm{x}_n) &= C_J \int  d\bm{t}_{n - 1} d\bm{y}_{n - 1} \ldots d\bm{y}_1 \, e^{ {2\pi \imath}{} \left[ \lambda_n \bbx_n + (\lambda - \lambda_n) \bbt_{n - 1} + (\lambda_{n - 1} - \lambda) \bby_{n - 1} \right] } \, \K (\bm{x}_n, \bm{t}_{n - 1})  \\[4pt]
& \times \mu  (\bm{t}_{n - 1}) \, \K (\bm{t}_{n - 1}, \bm{y}_{n - 1} ) \prod_{k = 2}^{n - 1} e^{{2\pi \imath}{}  (\lambda_{k - 1} - \lambda_{k}) \bby_{k - 1}} \, \mu   ( \bm{y}_{k}) \, \K  (\bm{y}_{k}, \bm{y}_{k - 1}).
\end{align}
Again, $C_J$ contains all unimportant constants. The integration goes over $\mathbb{R}^{(n - 1)(n + 2)/2}$. Denote the integrand as $G$. Assuming the conditions \rf{kh1}, using the bounds \eqref{Kmu-bounds} and triangle inequalities we arrive at
\begin{equation}\label{int2b}
| G | \leq  C_2 \exp \pi\nu_g\biggl(\delta_Q \, \left| \sum\limits_{j = 1}^{n - 1} \bigl( t_j - y_j^{(n-1)} \bigr) \right|+\delta_\Lambda \sum_{k = 1}^{n - 1} \| \bm{y}_k \|- n   \| \bm{t}_{n - 1} \| +  T_{n - 1} \biggr) ,
\end{equation}
with some $C_2(g, \bm{\omega}, \bm{x}_n)$, where the function $T_{n - 1}$ is defined as
\begin{equation}
\begin{aligned}
T_{n - 1}(\bm{y}_1, \dots, \bm{y}_{n - 1}, \bm{t}_{n - 1}) = \sum_{\substack{i, j = 1 \\ i \not= j}}^{n - 1} | t_i - t_j | &- \sum_{i, j = 1}^{n - 1} \bigl| t_i - y_j^{(n - 1)} \bigr| \\[4pt]
&+ S_{n - 1}(\bm{y}_1, \dots, \bm{y}_{n - 1}).
\end{aligned}
\end{equation}
In appendix we prove the inequality \eqref{Test} (here we replace $r$ by $1-\ve$)
\begin{equation}\label{kh6}
\begin{aligned}
T_{n - 1} \leq ( n -  \ve ) \| \bm{t}_{n - 1} \| 
- \frac{ \ve}{2 (n - 1) c_{n - 1}} \sum_{k = 1}^{n - 1} \| \bm{y}_k \|
- (1-\ve) \, \biggl| \sum_{j = 1}^{n - 1} \bigl( t_j - y_j^{(n - 1)} \bigr) \biggr|
\end{aligned}
\end{equation}
for any $\ve \in [0, 1]$. Using it we have the bound
\beqq\begin{aligned}
| G | \leq C_2 \exp \pi\nu_g\biggl(   -\ve \| \bm{t}_{n - 1} \| &+\left(\delta_\Lambda - \frac{ \ve}{2 (n - 1) c_{n - 1}}\right) \sum\limits_{k = 1}^{n - 1} \| \bm{y}_k \| \\[5pt]
 &+ (\delta_Q-1+ \ve) \biggl| \sum\limits_{j = 1}^{n - 1} \bigl( t_j - y_j^{(n-1)} \bigr) \biggr| \, \biggr) .
\end{aligned}\eeqq
We thus see that the integral \eqref{int2} converges once
\beq\label{kh7}
\delta_Q<1-\ve, \qquad \delta_{\Lambda}< \frac{ \ve}{2 (n-1)! e}   < \frac{ \ve}{2 (n - 1) c_{n - 1}}.
\eeq

Finally, the third integral \eqref{int3} in its full form
\begin{equation}
\begin{aligned}
\Psi_{\bm{\lambda}_n} (\bm{x}_n) = C_\Psi \int_{\mathbb{R}^{n(n-1)/2} } & d\bm{y}_{n - 1} \dots d\bm{y}_1 \; e^{ {2\pi \imath}{} \left[ \lambda_n \bbx_n + (\lambda_{n - 1} - \lambda_{n}) \bby_{n - 1} \right] } \K (\bm{x}_n, \bm{y}_{n - 1}) \\[5pt]
& \hspace{0.6cm} \times  \prod_{k = 2}^{n - 1} e^{{2\pi \imath}{}(\lambda_{k - 1} - \lambda_{k}) \bby_{k - 1}} \, \mu   ( \bm{y}_{k} ) \, \K  (\bm{y}_{k}, \bm{y}_{k - 1} )
\end{aligned}
\end{equation}
almost coincides with the first integral $I_{\lambda_n, \bm{\lambda}_{n - 1}}$ up to additional functions
\begin{equation}
e^{{2\pi \imath}{}  \lambda_n x_n} \,  \prod_{j = 1}^{n - 1} \K (x_n - y_j)
\end{equation}
that only improve convergence.
Its integrand, denote it by $H$, can be bounded in a similar way as
\begin{equation}\label{kh8}
	|H| \leq C_3 \exp \pi\nu_g \biggl( \Bigl[ ( -2+\delta_\Lambda + \ve)  \Bigr]  \|\bm{y}_n \|  +\left(\delta_{\Lambda}-\frac{\ve}{c_n}\right)  \sum_{k = 1}^{n - 1} \| \bm{y}_k \| \biggr).
\end{equation}
 It converges if
\beq\label{kh9} \delta_{\Lambda}<2-\ve,\qquad \delta_{\Lambda}<\frac{\ve}{(n-1)!e}
\eeq
for some $\ve >0$. Assuming $\ve \in [0, 1)$ the first condition follows from the second one.

So, all three integrals converge under assumptions
\begin{equation}
|\Im(\l - \l_n) | < \frac{\nu_g}{2}(1 - \ve), \qquad |\Im(\l_j - \l_k)|< \frac{\nu_g}{2} \cdot  \frac{\ve}{2(n - 1)! e}
\end{equation}
with any $\ve \in [0,1)$.
\end{proof}

\begin{remark}\label{remark-3}
Since all constants that appear in the bounds are uniform on compact sets of parameters $\bo, g$ preserving the conditions \eqref{I0a}, \eqref{I0b}, the considered multiple integrals are analytic functions of the parameters $\bo, g$, as well as of the variables $x_i$ and $\l_j$. In the case of real periods $\o_1, \o_2$ convergence of the third integral \eqref{int3}, that is the wave function representation, was first proved in~\cite{HR1}.
\end{remark}

\begin{proof}[Proof of Theorem \ref{thm-QPsi}]
In the notation of the previous proposition the theorem states that
\begin{equation}
I_{\lambda, \bm{\lambda}_n} = \prod_{j = 1}^n \KK (\lambda - \lambda_j) \, \Psi_{\bm{\lambda}_n}.
\end{equation}
The proof goes by induction. In the case $n = 1$ this integral
\begin{equation}
I_{\lambda, \lambda_1} = d_1(g|\bo) \int_{\mathbb{R}} dy_1 \, e^{ {2\pi \imath}{} \left[ \lambda x_1 + (\lambda_1 - \lambda) y_1  \right] } \K (x_1 - y_1) = \KK(\lambda - \lambda_1) \, e^{ {2\pi \imath}{} \lambda_1 x_1 }
\end{equation}
coincides with the Fourier transform of the function $\K $ \eqref{K-fourier}. Next, consider the case of arbitrary $n$. The main ingredient of the proof is the local relation \eqref{QLcomm}
\begin{equation}\label{QL}
Q_n(\lambda) \, \Lambda_{n}(\rho) = \KK(\lambda - \rho) \;\Lambda_{n}(\rho) \, Q_{n-1}(\lambda).
\end{equation}
Consider the function in question
\begin{equation}
I_{\lambda, \bm{\lambda}_n} = Q_n (\lambda) \, \Psi_{\bm{\lambda}_n} = Q_n(\lambda) \,\Lambda_{n} (\lambda_n) \cdots \Lambda_{2}(\lambda_2) \,  \Psi_{\lambda_1}.
\end{equation}
This integral has $n$ levels of integration variables that we denoted by
\begin{equation}
\bm{y}_k = \bigl(y_1^{(k)}, \dots, y_k^{(k)} \bigr), \qquad k = 1, \dots, n.
\end{equation}
By Proposition \ref{Prop-Qconv} the integral is absolutely convergent. Therefore, we can use Fubini's theorem and evaluate integrals in any order. Take integral over $\bm{y}_n$ to be the first in this order. Then we use the local relation \eqref{QL} and obtain
\begin{equation}
\begin{aligned}
I_{\lambda, \bm{\lambda}_n} &= \KK(\lambda - \lambda_n) \;\Lambda_{n}(\lambda_n) \, Q_{n - 1}(\lambda) \, \Lambda_{n - 1}(\lambda_{n - 1}) \cdots  \Lambda_{2}(\lambda_2) \, \Psi_{\lambda_1} \\[4pt]
&= \KK(\lambda - \lambda_n) \; J_{\lambda, \bm{\lambda}_n}.
\end{aligned}
\end{equation}
Obtained integral coincides with the function $J_{\lambda, \bm{\lambda}_n}$ from Proposition \ref{Prop-Qconv}. This integral has $n$ levels of integration variables that we denoted by $\bm{t}_{n - 1}$ (variables associated with the operator $\Lambda_{n}$ from the left) and $\bm{y}_{k}$ (the rest ones, $k = 1, \dots, n - 1$). Since by proposition it is also absolutely convergent we can change order of integrals and make the integral over $\bm{t}_{n - 1}$ to be the last one. Then the multiple integral we should take before it
\begin{equation}
I_{\lambda, \bm{\lambda}_{n - 1}} = Q_{n - 1}(\lambda) \, \Lambda_{n - 1}(\lambda_{n - 1}) \cdots  \Lambda_{2}(\lambda_2) \, \Psi_{\lambda_1}
\end{equation}
represents the function on the previous step of induction. So, to conclude the proof we use induction assumption and obtain desired formula.
\end{proof}

\setcounter{equation}{0}
\section{Duality}\label{sec-duality}
In this section, we prove the equivalence of two integral representations of the wave function
$\Psi_{\bm{\lambda}_{n}}(\bm{x}_{n})$ encoded in duality relation \rf{duality}. This will be proved in two steps. First, using the spectral properties of operators $Q_n(\l)$ and $\QQ_n(x)$ proved in the previous section and the bound on the wave function from \cite{HR3}, we establish the relation \rf{duality} for the case of real positive periods $\o_1,\o_2$. Then using analiticity of the wave function with respect to the periods (see Remark \ref{remark-3} in the previous section) we extend this result to complex periods.
\duality*

 We note that functions from both sides exist under assumptions \eqref{gg-ineq}
\begin{equation}
0< \Re g < \Re \o_1 + \Re \o_2, \qquad 0 < \Re \hat{g}^* < \Re \hat{\o}_1 + \Re \hat{\o}_2.
\end{equation}\cbk
 We also assume $x_j, \l_j \in \R$; the duality relation for complex variables then follows from the analyticity arguments.  To prove this theorem, firstly, note that kernels of the $Q$-operator \eqref{Qker} and raising $\Lambda$-operator \eqref{Lker} are connected by the formula
\begin{equation}
\Lambda(\bm{x}_n, \bm{y}_{n - 1}; \lambda) = e^{ {2 \pi \imath}{} \lambda x_n } \, Q(\bm{x}_{n - 1}, \bm{y}_{n - 1}; \lambda) \, \prod_{j = 1}^{n - 1} \K (x_n - y_j).
\end{equation}
Therefore, the operators themselves are connected in the following way
\begin{equation}\label{LQ}
\Lambda_{n}(\lambda) = e^{{2 \pi \imath}{} \lambda x_n } \, Q_{n-1}(\lambda) \, \prod_{j = 1}^{n - 1} \K (x_n - x_j).
\end{equation}

Secondly, the function 	$\hat{\Psi}_{\bm{x}_n}(\bm{\lambda}_n)=\Psi_{\bm{x}_n}(\bm{\lambda}_n; \gg|\hat{\bo})	$ from the right-hand side of duality relation \eqref{duality} admits an analogous integral representation
\begin{equation}
\hat{\Psi}_{\bm{x}_n}(\bm{\lambda}_n) = \LLL_n(x_n) \, \hat{\Psi}_{\bm{x}_{n - 1}}(\bm{\lambda}_{n - 1}),
\end{equation}
iterative with respect to $\lambda_j$ variables
\begin{equation}\label{d1}
\bigl(\LLL_n(x) f\bigr)(\bm{\lambda}_n) = d_{n - 1}(\hat{g}^*|\hat{\bo}) \int_{\mathbb{R}^{n - 1}} d\bm{\gamma}_{n - 1} \, \LLL(\bm{\lambda}_n, \bm{\gamma}_{n - 1}; x) f(\bm{\gamma}_{n - 1})
\end{equation}
where the kernel
\begin{equation}
\LLL(\bm{\lambda}_n, \bm{\gamma}_{n - 1}; x) = e^{\const x (\bbl_n - \bbg_{n - 1})} \KK(\bl_n, \bg_{n - 1}) \hat{\mu}(\bg_{n - 1}).
\end{equation}
It is also an eigenfunction of the $Q$-operator acting on functions of $\lambda_j$
\begin{equation}\label{d2}
\bigl(\QQ_n(x) f\bigr)(\bm{\lambda}_n) = d_{n}(\hat{g}^*|\hat{\bo}) \int_{\mathbb{R}^n} d\bm{\gamma}_{n} \, \QQ(\bm{\lambda}_n, \bm{\gamma}_{n}; x) f(\bm{\gamma}_{n})
\end{equation}
with the kernel
\begin{equation}
\hat{Q}(\bm{\lambda}_n, \bm{\gamma}_{n}; x) = e^{\const x (\bbl_n - \bbg_{n})} \KK(\bl_n, \bg_{n}) \hat{\mu}(\bg_{n})
\end{equation}
and the corresponding eigenvalue
\begin{equation}\label{Q'Psi}
\QQ_{n}(x) \, \hat{\Psi}_{ \bm{x}_n }(\bm{\lambda}_n) = \prod_{j = 1}^n \K(x-x_j) \, \hat{\Psi}_{ \bm{x}_n }(\bm{\lambda}_n ).
\end{equation}
All of this is just a matter of renaming the variables $\lambda_j \leftrightarrow x_j$ and parameters $\bo\leftrightarrow \hat{\bo}$, $g \leftrightarrow \gg$, see the definitions of the functions $\KK$ \eqref{I6c} and $\hat{\mu}$ \eqref{I5d}. The  operators \rf{d1} and \rf{d2} \cbk are also connected by the formula analogous to \eqref{LQ}
\begin{equation}\label{LQ'}
\LLL_{n}(x) = e^{ {2 \pi \imath}{} x \lambda_n } \, \QQ_{n - 1}(x) \, \prod_{j = 1}^{n - 1} \KK (\lambda_n - \lambda_j).
\end{equation}

Thirdly, in the proof we use the bound derived in \cite[Theorem 2.5]{HR3}. This bound holds when $x_j \in \mathbb{C}$ and $\Re g \in (0, \max(\omega_1, \omega_2)]$, but we only need its specialization for the case of real $x_j$. Denote
\begin{equation}\label{mu'}
\mu'(\bx_n) = \prod_{\substack{i,j=1 \\ i< j}}^n \mu(x_i - x_j), \qquad \hat{\mu}'(\bl_n) = \prod_{\substack{i,j=1 \\ i< j}}^n \hat{\mu}(\l_i - \l_j),
\end{equation}
so that comparing with \eqref{I5} and \eqref{I6d} we have
\begin{equation}\label{mumu'}
\mu(\bm{x}_n) = \mu'(\bx_n) \mu'(-\bx_n), \qquad  \hat{\mu}(\bl_n) = \hat{\mu}'(\bl_n) \hat{\mu}'(-\bl_n).
\end{equation}
\begin{theorem*}
\cite[Specialization of Theorem 2.5]{HR3} Assuming $\Re g \in (0, \max(\omega_1, \omega_2)]$, the wave function admits the bound
\begin{equation}\label{Psi-HRbound}
\Bigl| \mu'(\bm{x}_n)\, \hat{\mu}'(\bm{\lambda}_n) \, \Psi_{ \bm{\lambda}_n }(\bm{x}_n) \Bigr| \leq P(\bm{x}_n; g| \bo),
\end{equation}
where $P$ is a polynomial with $\deg P \leq n(n - 1)/2$.
\end{theorem*}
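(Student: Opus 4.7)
I would proceed by induction on $n$, leveraging the recursive representation $\Psi_{\bl_n}(\bx_n) = \Lambda_n(\lambda_n) \Psi_{\bl_{n-1}}(\bx_{n-1})$ from \eqref{Psi-rec}. The base case $n=1$ is trivial: $|\Psi_{\lambda_1}(x_1)| = 1$ for real $x_1, \lambda_1$, while $\mu'(\bx_1)$ and $\hat{\mu}'(\bl_1)$ are empty products, so the bound holds with $P \equiv 1$.

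For the inductive step I would decompose
\begin{equation*}
\mu'(\bx_n) = \mu'(\bx_{n-1}) \prod_{j=1}^{n-1} \mu(x_j - x_n), \qquad \hat{\mu}'(\bl_n) = \hat{\mu}'(\bl_{n-1}) \prod_{j=1}^{n-1} \hat{\mu}(\lambda_j - \lambda_n),
\end{equation*}
substitute the integral kernel of $\Lambda_n(\lambda_n)$, and use the identity $\mu(\by_{n-1}) = \mu'(\by_{n-1}) \mu'(-\by_{n-1})$ from \eqref{mumu'} to extract the combination $\mu'(\by_{n-1}) \hat{\mu}'(\bl_{n-1}) \Psi_{\bl_{n-1}}(\by_{n-1})$ from the integrand. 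Applying the induction hypothesis (with the $\bx_n$-independent factor $|\prod_{j<n} \hat{\mu}(\lambda_j - \lambda_n)|$ absorbed into the constants) reduces the claim to an integral estimate
\begin{equation*}
|\mu'(\bx_{n-1})| \prod_{j<n} |\mu(x_j - x_n)| \int_{\R^{n-1}} |\K(\bx_n, \by_{n-1})| \, |\mu'(-\by_{n-1})| \, P_{n-1}(\by_{n-1}) \, d\by_{n-1} \leq P_n(\bx_n),
\end{equation*}
where $P_{n-1}$ has degree at most $(n-1)(n-2)/2$.

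The main obstacle is the combinatorial bookkeeping that makes the exponentials balance. The bounds $|\K(x)| \leq C e^{-\pi\nu_g|x|}$ and $|\mu(x)| \leq C e^{\pi\nu_g|x|}$ from \eqref{Kmu-bound} dominate the integrand by $\exp(\pi\nu_g \Phi(\bx_n, \by_{n-1})) \, P_{n-1}(\by_{n-1})$, where $\Phi$ is a signed sum of absolute values of differences. I would adapt the recursive estimate for the auxiliary function $S_n$ from \eqref{Srec} (established in the appendix for Proposition \ref{Prop-Qconv}) to show that $\Phi$ decomposes into a strictly negative linear combination $-c\|\by_{n-1}\|$ plus a contribution in $\bx_n$ alone: the $n-1$ exponentially growing factors $|\mu(x_j - x_n)|$ are pairwise offset by the decay of $|\K(x_i - y_j)|$ through triangle inequalities such as $|x_j - x_n| \leq |x_j - y_k| + |x_n - y_k|$ after a suitable pairing of $y$-variables with $x$-pairs, leaving a polynomial prefactor of degree at most $n-1$ in $\bx_n$ arising from crossover regions where $|x_i - x_j|$ is comparable to the integration variables. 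Summing the degree increments $\deg P_n - \deg P_{n-1} \leq n - 1$ yields total degree $\sum_{k=1}^n (k-1) = n(n-1)/2$, matching the stated bound. The hardest point---worked out in detail in \cite{HR3}---is ensuring that this polynomial control holds uniformly across all regimes where pairs $x_i - x_j$ range from small to large, which requires delicate handling of the singularities of $\mu$ near the real axis together with their asymptotic growth.
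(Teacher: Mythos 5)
First, a point of comparison: the paper does not prove this statement at all --- it is imported verbatim as a specialization of Theorem 2.5 of \cite{HR3} (the ``factorized asymptotics'' paper of Halln\"as and Ruijsenaars) and used as an external input in the proof of Theorem \ref{thm-duality}. So your sketch is competing with the argument of \cite{HR3}, not with anything in this paper.

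Your inductive strategy has a gap that I believe is fatal. The asserted bound $P(\bx_n;g|\bo)$ is independent of $\bl_n$, i.e.\ uniform over $\bl_n\in\R^n$, yet the left-hand side carries the factor $\hat{\mu}'(\bl_n)$, which by the dual version of \eqref{Kmu-bound} grows like $\exp\bigl(\pi\nu_{\hat{g}^*}\sum_{i<j}|\lambda_i-\lambda_j|\bigr)$. In your inductive step you propose to absorb $\prod_{j<n}\bigl|\hat{\mu}(\lambda_j-\lambda_n)\bigr|$ ``into the constants''; this is impossible, because that product is unbounded in $\bl_n$. Worse, the moment you pass absolute values under the integral sign, the oscillating factor $e^{\const\lambda_n(\bbx_n-\bby_{n-1})}$ has modulus one for real $\lambda_n$, so the resulting majorant of the integral carries no dependence on $\lambda_n$ at all, and nothing remains to offset the exponential growth of the $\hat{\mu}$ prefactor. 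What the absolute-value-plus-triangle-inequality method actually delivers is precisely the last Corollary of Appendix B of this paper: an \emph{exponential} (not polynomial) bound on $\bigl|\mu'(\bx_n)\Psi_{\bl_n}(\bx_n)\bigr|$, and only \emph{without} the factor $\hat{\mu}'(\bl_n)$. To dominate $\hat{\mu}'(\bl_n)$ one must extract genuine decay of the iterated integral in the separations $|\lambda_i-\lambda_j|$; this is an oscillatory-integral phenomenon (morally, the Fourier transform \eqref{K-fourier} converting $\K$ into the exponentially decaying $\KK$), and establishing it uniformly with only polynomial loss in $\bx_n$ is exactly the hard content of Theorem 2.5 of \cite{HR3}. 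Your degree bookkeeping $\sum_{k=1}^{n}(k-1)=n(n-1)/2$ is consistent with the statement, but the mechanism producing the polynomial is not the ``crossover regions'' in the $x$-variables you describe; it is the degeneration of the spectral decay when components of $\bl_n$ collide.
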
 \cbk

\begin{proof}[Proof of Theorem \ref{thm-duality}]
At first, assume $\o_1, \o_2$ are real positive. We prove the theorem by induction. The case $n = 1$
\begin{equation}
\Psi_{\lambda_1}(x_1) = \hat{\Psi}_{x_1}(\lambda_1) = e^{ {2 \pi \imath}{} \lambda_1 x_1 }
\end{equation}
is obvious. Suppose we proved the relation for $n - 1$ case
\begin{equation}\label{Psi-ind}
\Psi_{\bm{\lambda}_{n - 1}}(\bm{x}_{n - 1}) = \hat{\Psi}_{\bm{x}_{n - 1}}(\bm{\lambda}_{n - 1}).
\end{equation}
Using formula \eqref{LQ} we write the wave function recursively with the help of $Q$-operator
\begin{equation}
\Psi_{\bm{\lambda}_n}(\bm{x}_n) = e^{ {2 \pi \imath}{} \lambda_n x_n } \, Q_{n-1}(\lambda_n) \, \prod_{j = 1}^{n - 1} \K (x_n - x_j) \, \Psi_{\bm{\lambda}_{n - 1}}(\bm{x}_{n - 1}).
\end{equation}
By induction assumption \eqref{Psi-ind} we can represent $\K $ functions here as action of the operator $\QQ_{n-1}(x_n)$, see eigenvalue identity \eqref{Q'Psi},
\begin{equation}\label{PsiQQ'}
\Psi_{\bm{\lambda}_n}(\bm{x}_n) = e^{ {2 \pi \imath}{} \lambda_n x_n } \, Q_{n-1}(\lambda_n) \, \QQ_{n - 1}(x_n) \, \Psi_{\bm{\lambda}_{n - 1}}(\bm{x}_{n - 1}).
\end{equation}
Since operators $Q_{n-1}(\lambda_n)$ and $\QQ_{n-1}(x_n)$ act on different types of variables ($\bm{x}_{n - 1}$ and $\bm{\lambda}_{n - 1}$ correspondingly), they commute: switching their order is equivalent to switching order of the corresponding integrals. Let us postpone convergence arguments to the end of this proof and assume that we can switch their order. Then we arrive at
\begin{equation}
\Psi_{\bm{\lambda}_n}(\bm{x}_n) = e^{ {2 \pi \imath}{} \lambda_n x_n } \, \QQ_{n - 1}(x_n) \, Q_{n-1}(\lambda_n)  \, \Psi_{\bm{\lambda}_{n - 1}}(\bm{x}_{n - 1}).
\end{equation}
Next step is to use the eigenvalue identity \eqref{QPsi} for the operator $Q_{n-1}(\lambda_n)$
\begin{equation}
\Psi_{\bm{\lambda}_n}(\bm{x}_n) = e^{ {2 \pi \imath}{} \lambda_n x_n } \, \QQ_{n - 1}(x_n) \, \prod_{j =1}^{n - 1} \KK (\lambda_n - \lambda_j)  \, \Psi_{\bm{\lambda}_{n - 1}}(\bm{x}_{n - 1}).
\end{equation}
Finally, using induction assumption \eqref{Psi-ind} and formula \eqref{LQ'} we obtain function $\hat{\Psi}_{\bm{x}_n}(\bm{\lambda}_n)$ from the right.

It is left to prove that we can switch integration order in representation \eqref{PsiQQ'}. Let us write it explicitly
\begin{equation}
\begin{aligned}
\Psi_{\bm{\lambda}_n}(\bm{x}_n) = e^{ {2 \pi \imath}{} \lambda_n x_n } \int_{\mathbb{R}^{n - 1}} d\bm{y}_{n - 1} \, \mu  (\bm{y}_{n - 1}) \, e^{ {2 \pi \imath}{} \lambda_n (\bm{x}_{n - 1} - \bm{y}_{n - 1} ) } \, \K (\bm{x}_{n - 1}, \bm{y}_{n - 1}) \\[5pt]
\times \int_{\mathbb{R}^{n - 1}} d\bm{\gamma}_{n - 1} \, \hat{\mu}_{}(\bm{\gamma}_{n - 1}) \, e^{ {2 \pi \imath}{} x_n (\bm{\lambda}_{n - 1} - \bm{\gamma}_{n - 1} ) } \, \KK (\bm{\lambda}_{n - 1}, \bm{\gamma}_{n - 1}) \; \Psi_{\bm{\gamma}_{n - 1}} (\bm{y}_{n - 1}).
\end{aligned}
\end{equation}
To switch order of integrals over $\bm{y}_{n - 1}$ and $\bm{\gamma}_{n - 1}$ we bound integrand and use Fubini's theorem. First, factor measure functions
\begin{equation}
\mu(\by_{n - 1}) = \mu'(\by_{n - 1}) \mu'(-\by_{n - 1}), \qquad \hat{\mu}(\bg_{n - 1}) = \hat{\mu}'(\bg_{n - 1}) \hat{\mu}'(-\bg_{n - 1}),
\end{equation}
where $\mu'$ and $\hat{\mu}'$ are defined in \eqref{mu'}. Then using bounds \eqref{Kmu-bound} and analogous ones for the dual functions
\begin{align}
&| \mu  (y) | \leq C e^{ \pi\nu_g |y| }, \qquad \; \, |\K (y) |\leq C e^{ - \pi\nu_g|y| }, \qquad \;\, y \in \mathbb{R},\\[5pt]
&| \hat{\mu}  (\g) | \leq C e^{ \pi \nu_{\hat{g}^*} |\g| },  \qquad |\KK (\g) |\leq C e^{ - \pi\nu_{\hat{g}^*} |\g| }, \qquad \g \in \mathbb{R},
\end{align}
where
\beq
\nu_g = \Re  \hat{g}>0, \qquad \nu_{\hat{g}^*} = \Re g^*>0,
\eeq
together with triangle inequalities
\begin{equation}
|x_i - y_j | \geq |y_j| - |x_i|, \qquad |\lambda_i - \gamma_j| \geq |\gamma_j| - |\lambda_i|
\end{equation}
we bound part of the integrand by exponentially decreasing function
\begin{equation}\label{QQ'bound}
\begin{aligned}
\Bigl| \mu' (-\bm{y}_{n - 1}) \, \hat{\mu}'(-\bm{\gamma}_{n - 1}) &\, \K (\bm{x}_{n - 1}, \bm{y}_{n - 1}) \, \KK (\bm{\lambda}_{n - 1}, \bm{\gamma}_{n - 1}) \Bigr| \\[5pt]
&\leq C_q \exp \Bigl( - \pi\nu_g \| \bm{y}_{n - 1} \| -  \pi\nu_{\hat{g}^*} \| \bm{\gamma}_{n - 1} \| \Bigr)
\end{aligned}
\end{equation}
with some $C_q(g, \bo, \bm{x}_{n - 1}, \bm{\lambda}_{n - 1})$.

For the rest part of the integrand (besides surely bounded oscillating exponents) we use estimate \eqref{Psi-HRbound}. If $\Re g\in (0, \max(\omega_1, \omega_2)]$ we use it, as it is written. If $\Re g \in (\max(\omega_1, \omega_2), \omega_1 + \omega_2)$, then $\gg \in (0, \min(\hat{\o}_1,\hat{\o}_2))$. In this case we use an analogous to \eqref{Psi-HRbound} estimate for the dual wave function $\hat{\Psi}_{ \bm{y}_{n - 1} }(\bm{\gamma}_{n - 1})$
\begin{equation}
\Bigl| \mu'  (\bm{y}_{n - 1})\, \hat{\mu}'(\bm{\gamma}_{n - 1})  \, \hat{\Psi}_{ \bm{y}_{n - 1} }(\bm{\gamma}_{n - 1}) \Bigr| \leq P(\bm{\gamma}_{n - 1}; \hat{g}^*|\hat{\bo}),
\end{equation}
since wave functions are equal by induction assumption \eqref{Psi-ind}. In both cases exponent in \eqref{QQ'bound} suppresses the polynomial growth. Thus, the whole integrand is absolutely integrable. This gives the proof of Theorem \ref{thm-duality} for real positive periods.

Since the estimate for multiple integral \rf{int3} presenting the wave function in Proposition \ref{Prop-Qconv} is uniform on compact subsets of the periods, the wave function $\Psi_{\bl_n}(\bx_n)$ depends analitically on the periods $\bo$. The same is true for the function  $\hat{\Psi}_{\bx_n}(\bl_n)$. Hence, their coincidence for real periods implies their coincidence for complex periods. This ends the proof of Theorem \ref{thm-duality}.
\end{proof}

Note that during the proof  we established the third, mixed integral representation of the wave function.
\PsiQQ*

\section*{Acknowledgments}
The work of N. Belousov and S. Derkachov (Sections 1, 3, Appendices A, B) was supported by the Theoretical Physics and Mathematics Advancement Foundation BASIS.
The  work of S. Kharchev (Section 2) was supported by the Russian Science Foundation (Grant No. 23-41-00049). The work of S. Khoroshkin (Section 4) was supported by the International Laboratory of Cluster Geometry of National Research University Higher School of Economics, Russian Federation Government grant, ag. No. 075-15-2021-608 dated 08.06.2021.

\setcounter{equation}{0}

\section*{Appendix}
\appendix
\section{The double sine function} \label{AppA}
The  double sine  function $S_2(z):=S_2(z|\bo)$, see \cite{Ku} and references therein, is a meromorphic function that satisfies two functional relations
\beq\label{trig3}  \frac{S_2(z)}{S_2(z+\o_1)}=2\sin \frac{\pi z}{\o_2},\qquad \frac{S_2(z)}{S_2(z+\o_2)}=2\sin \frac{\pi z}{\o_1}
\eeq
and inversion relation
\beq \label{trig4} S_2(z)S_2(-z)=-4\sin\frac{\pi z}{\o_1}\sin\frac{\pi z}{\o_2},\eeq
or equivalently
\beq S_2(z)S_2(\o_1+\o_2-z)=1. \eeq
The function $S_2(z)$ has poles at the points
\beq \label{S-poles}
z = m \o_1 + k\o_2, \qquad m,k\geq 1
\eeq
and zeros at
\beq\label{S-zeros}
z=-m\o_1-k\o_2,\qquad m,k\geq 0.
\eeq
For $\o_1 / \o_2 \not\in \mathbb{Q}$ all poles and zeros are simple.
In the analytic region $ \Re z \in ( 0, \Re(\omega_1 + \omega_2) )$ we have the following integral representation for the logarithm of $S_2(z)$
\begin{equation}\label{S2-int}
\ln S_2 (z) = \int_0^\infty \frac{dt}{2t} \left( \frac{\sh \left[ (2z - \omega_1 - \omega_2)t \right]}{ \sh (\omega_1 t) \sh (\omega_2 t) } - \frac{ 2z - \omega_1 - \omega_2 }{ \omega_1 \omega_ 2 t } \right).
\end{equation}
It is clear from this representation that the double sine function is homogeneous
\beq\label{S-hom}
S_2( \gamma z | \gamma\o_1, \gamma \o_2 ) = S_2(z|\o_1, \o_2), \qquad \gamma \in (0, \infty)
\eeq
and invariant under permutation of periods
\beq\label{A6}
S_2(z| \o_1, \o_2) = S_2(z | \o_2, \o_1).
\eeq
The double sine function can be expressed through the Barnes double Gamma function $\Gamma_2(z|\bo)$ \cite{B},
\beq
S_2(z|\bo)=\Gamma_2(\o_1+\o_2-z|\bo)\Gamma_2^{-1}(z|\bo),
\eeq
and its properties follow from the corresponding properties of the double Gamma function.
It is also connected to the Ruijsenaars hyperbolic Gamma function $G(z|\bo)$ \cite{R1}
\beq \label{S-G}
G(z|\bo) = S_2\Bigl(\imath z + \frac{\o_1 + \o_2}{2} \,\Big|\, \bo \Bigr)
\eeq
and to the Faddeev quantum dilogarithm $\gamma(z|\bo)$ \cite{F}
\beq 
\gamma(z|\bo) = S_2\Bigl(-\imath z + \frac{\o_1+\o_2}{2}\, \Big|\, \bo\Bigr) \exp \Bigl( \frac{\imath \pi}{2\o_1 \o_2} \Bigl[z^2 + \frac{\o_1^2+\o_2^2}{12} \Bigr]\Bigr).
\eeq
Both $G(z|\bo)$ and $\gamma(z|\bo)$ were investigated independently.

In the paper we deal only with ratios of double sine functions denoted by $\mu(x)$ \eqref{I5} and $K(x)$ \eqref{I6}
\beq\label{B1}
\begin{split}\mu(x)& =S_2(\imath x)S_2^{-1} (\imath x+g),\\[6pt]
	K(x)& =  S_2\left(\imath x+\frac{\o_1+\o_2}{2}+\frac{g}{2}\right)S_2^{-1}\left(\imath x+\frac{\o_1+\o_2}{2}-\frac{g}{2}\right).
\end{split}
\eeq
Now we will give the key asymptotic formulas and bounds for them, which were derived in \cite[Appendices A, B]{BDKK} from the known results for the double Gamma function. In what follows we assume conditions \eqref{I0a}, \eqref{I0b}
\begin{equation}
\Re \o_j > 0, \qquad 0 < \Re g < \Re\o_1 + \Re \o_2, \qquad \Re \hat{g} > 0,
\end{equation}
where we denoted
\begin{equation}
\hat{g} = \frac{g}{\o_1 \o_2}.
\end{equation}
Let $\sigma_i$ be the arguments of the periods~$\o_i$, $|\sigma_i|<\pi/2$. Because of the symmetry \eqref{A6}, we may assume that $\sigma_1 \geq \sigma_2$. Let $D_+$ and $D_-$ be the cones of poles \eqref{S-poles} and zeros \eqref{S-zeros} of~$S_2(z)$:
\beqq D_+=\{ z\colon \sigma_2< \arg z<\sigma_1\},\qquad D_-=\{ z\colon \pi+\sigma_2< \arg z<\pi+\sigma_1\},\qquad D=D_+\cup D_-.\eeqq
Denote by $d(z,D)$ the distance between a point $z$ and the cones $D$. Using Barnes' Stirling formula for the asymptotic of double Gamma function, for the ratio of double sines one obtains
\beq \label{A14} \frac{S_2(z)}{S_2(z+g)} =  e^{\mp\pi\imath \hat{g}\left(z-\frac{g^\ast}{2}\right)}\Bigl(1+\,O\Bigl(d^{-1}(z,D)\Bigr)\Bigr), \eeq
where $z \in \mathbb{C} \setminus D$ and the sign $-$ (or $+$) is taken for $\Re z > 0$ (or $\Re z < 0$), see \cite[eq.(A.18)]{BDKK}.
Then from \eqref{A14} for the functions $\mu(x)$ and $K(x)$ \eqref{B1} with $x \in \R$ we have
\begin{equation}\label{Kmu-asymp}
\mu(x) \sim e^{\pi \hat{g} | x | \pm \imath \frac{\pi \hat{g} g^*}{2} }, \qquad K(x) \sim e^{- \pi \hat{g} |x|}, \qquad x\rightarrow \pm \infty.
\end{equation}
Denote also
\beq \nu_g= \Re  \hat{g}.\eeq
Under the conditions \eqref{I0a}, \eqref{I0b} by using the same Stirling formula we also have bounds
\beq \label{Kmu-bound}
|\mu(x)| \leq C e^{\pi\nu_g |x|}, \qquad |K(x)| \leq C e^{-\pi\nu_g |x|},  \qquad x \in \R
\eeq
where $C$ is a positive constant uniform for compact subsets of parameters $\bo, g$ preserving the mentioned conditions, see \cite[eq.(B.3)]{BDKK}.

Another key result that we need in the paper is the following Fourier transform formula given in \cite[Proposition C.1]{R3}, which we rewrite in terms of the double sine function using connection formula \eqref{S-G}.  This Fourier transform can be already found in \cite{FKV,PT}.
\begin{proposition*}\cite{R3}
For real positive periods $\o_1, \o_2$ we have
\begin{equation}
\begin{aligned}
&\int_{\mathbb{R}} dx \, e^{\frac{2\pi\imath}{\o_1 \o_2} y x} S_2\Bigl(\imath x - \imath \nu + \frac{\o_1 + \o_2}{2} \Bigr) S_2^{-1} \Bigl( \imath x - \imath \rho + \frac{\o_1 + \o_2}{2} \Bigr) \\[6pt]
&= \sqrt{\o_1 \o_2} \, e^{ \frac{\pi\imath}{\o_1 \o_2} y (\nu + \rho) } S_2(\imath \rho - \imath \nu) \, S_2^{-1}\Bigl(\imath y + \frac{\imath(\rho - \nu)}{2} \Bigr) \, S_2^{-1}\Bigl(-\imath y + \frac{\imath(\rho - \nu)}{2} \Bigr),
\end{aligned}
\end{equation}
while the parameters $\nu, \rho, y$ satisfy the conditions
\begin{equation}\label{A19}
-\frac{\o_1 + \o_2}{2} < \Im \rho < \Im \nu < \frac{\o_1 + \o_2}{2}, \qquad |\Im y | < \Im \frac{\nu - \rho}{2}.
\end{equation}
\end{proposition*}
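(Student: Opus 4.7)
The plan is to recognize both sides of the identity as meromorphic functions of $y$ that satisfy the same pair of first-order difference equations with incommensurate periods $\imath\o_1$ and $\imath\o_2$, then to fix the multiplicative constant by an explicit evaluation. Denote the left- and right-hand sides by $L(y)$ and $R(y)$, holding $\nu,\rho,\o_1,\o_2$ fixed subject to \eqref{A19}.

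For $R(y)$, the functional relation \eqref{trig3} applied to each of the two $S_2^{-1}$-factors produces explicit shift formulae for $R(y+\imath\o_j)/R(y)$ ($j=1,2$) as products of $\sinh$ functions in $y\pm\imath(\rho-\nu)/2$, multiplied by an exponential in $y$ contributed by the prefactor $e^{\pi\imath y(\nu+\rho)/(\o_1\o_2)}$. For $L(y)$ I would derive the same shift equations by contour translation $x\to x\mp\imath\o_j$ in the integral: by \eqref{A19} no poles of the integrand are crossed for small enough shifts, the exponential $e^{2\pi\imath yx/(\o_1\o_2)}$ contributes $e^{\pm 2\pi y/\o_{3-j}}$ in front, and applying \eqref{trig3} to the two $S_2$-factors inside the integrand converts the translated integrand into the same shape as the original one up to $\sin$-factors in $x-\nu$ and $x-\rho$; combining two opposite shifts recovers a closed-form expression for $L(y+\imath\o_j)/L(y-\imath\o_j)$ that matches the one obtained for $R$.

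With both equations in hand, $F(y):=L(y)/R(y)$ extends via them to an entire meromorphic function on $\C$ and is doubly periodic with incommensurate periods $\imath\o_1,\imath\o_2$ in the generic case $\o_1/\o_2\notin\mathbb{Q}$. Together with uniform boundedness inside a fundamental parallelogram — which follows from absolute convergence of the defining integral in the strip \eqref{A19} and the Stirling-type bound \eqref{A14} applied to the double sines on the right — this forces $F$ to be constant, and the degenerate case $\o_1/\o_2\in\Q$ follows by analyticity in $\bm\omega$ (this continuation argument is exactly the one used at the end of the proof of Theorem \ref{thm-duality}).

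The final step, and in my judgement the main technical obstacle, is to identify the constant as $\sqrt{\o_1\o_2}$. The cleanest route is to take the distributional limit $\nu\to\rho$ while keeping $\Im\nu>\Im\rho$: then $L(y)$ degenerates to the Fourier integral of $1$, producing $\sqrt{\o_1\o_2}\,\delta(y)$ (after the natural change of variables $x\to x/\sqrt{\o_1\o_2}$), while on the right-hand side the quotient $S_2(\imath\rho-\imath\nu)/[S_2(\imath y+\imath(\rho-\nu)/2)S_2(-\imath y+\imath(\rho-\nu)/2)]$ develops a concentrated singularity at $y=0$ whose distributional content, computed via the inversion relation \eqref{trig4} expanded near the zero of $S_2$ at the origin, is precisely $\delta(y)$. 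An alternative evaluation at $y=0$ by closing the $x$-contour in the upper half-plane (justified by the asymptotic \eqref{A14}) and summing the residues at $\imath x=\imath\nu+(m-\tfrac12)\o_1+(k-\tfrac12)\o_2$, $m,k\geq 1$, is also available, but reduces the identity to a Ramanujan ${}_1\psi_1$-type bilateral sum and appears strictly heavier than the delta-function approach.
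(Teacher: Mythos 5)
First, a point of comparison: the paper does not prove this proposition at all — it is imported verbatim from Ruijsenaars \cite{R3} (Proposition C.1 there), with the remark that the formula also appears in \cite{FKV,PT}. So there is no in-paper argument to measure your proposal against; it has to stand on its own. As a strategy, ``same difference equations in $y$ plus a Liouville argument plus normalization'' is a recognized route to hyperbolic beta-type integrals, but as written your outline has two genuine gaps. The first is the contour shift. The poles of the integrand lie at $x=\nu-\imath(m-\tfrac12)\o_1-\imath(k-\tfrac12)\o_2$ ($m,k\geq1$) and $x=\rho+\imath(m+\tfrac12)\o_1+\imath(k+\tfrac12)\o_2$ ($m,k\geq0$); under \eqref{A19} these are separated by $\R$, but only by the margins $\tfrac{\o_1+\o_2}{2}\pm\Im\nu$ and $\tfrac{\o_1+\o_2}{2}+\Im\rho$, which are generically smaller than $\o_1$ or $\o_2$. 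The shifts you need are by the \emph{full} periods $\imath\o_j$, not ``small enough'' ones, so poles are crossed for most admissible $(\nu,\rho)$; e.g.\ shifting up by $\o_1$ is pole-free only if $\Im\rho>\tfrac{\o_1-\o_2}{2}$, and the corresponding downward condition forces $\o_1<\o_2$, so you cannot even do both $\o_1$-shifts simultaneously without residue corrections. This is repairable (restrict parameters, track residues, continue analytically), but it is exactly the technical content of the proof and cannot be waved through. A second, smaller issue in the same step: for real positive periods the two periods $\imath\o_1,\imath\o_2$ are \emph{parallel}, so the Liouville step is not the elliptic-function argument but a density-of-translations argument along vertical lines; it works, but it needs $L$ to be pole-free and controlled on a full vertical strip, which the integral representation (valid only for $|\Im y|<\Im\tfrac{\nu-\rho}{2}$) does not by itself provide.

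The second, more serious gap is the normalization. The Liouville argument, even granted, yields only $L(y)=C(\nu,\rho)\,R(y)$ with a ``constant'' that is constant in $y$ but a priori depends on the parameters (translation invariance in $x$ reduces it to a function $C(\nu-\rho)$, no further). Your distributional limit $\nu\to\rho$ determines only $\lim_{\sigma\to0}C(\sigma)$, not $C(\sigma)$ for the general $\sigma=\nu-\rho$ at which the identity is claimed. To close this you either need the evaluation at a single $y$ for \emph{general} $\nu,\rho$ — the residue-sum computation you set aside as too heavy — or a second set of difference equations in $\nu$ (or $\rho$) showing $C(\sigma)$ is itself periodic and bounded in $\sigma$. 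Finally, a small but real slip in the limit itself: $\int_\R e^{2\pi\imath yx/(\o_1\o_2)}\,dx=\o_1\o_2\,\delta(y)$, not $\sqrt{\o_1\o_2}\,\delta(y)$; this is consistent with the right-hand side because $S_2(z)\sim 2\pi z/\sqrt{\o_1\o_2}$ near its zero at the origin, which via \eqref{trig4} turns the product of the three $S_2$-factors into $\o_1\o_2\,\delta(y)$ as well, but the bookkeeping as you stated it does not balance. In short: correct skeleton, but the contour-crossing analysis and the parameter-dependence of the Liouville constant are the actual proof, and both are missing.
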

In the special case
\begin{equation}
\nu = \frac{\imath g}{2}, \qquad \rho = -\frac{\imath g}{2}
\end{equation}
taking $y = \o_1 \o_2 \l$ and using homogeneity of the double sine \eqref{S-hom} (with $\gamma = \o_1 \o_2$) we arrive at the Fourier transform formula for the function $K(x)$ \eqref{B1}
\begin{equation}\label{K-fourier}
\int_{\mathbb{R}} dx \; e^{2 \pi \imath  \lambda x }  \K (x) = \sqrt{\omega_1 \omega_2} \, S_2(g) \, \KK (\lambda),
\end{equation}
where $| \Im \l | < \Re \hat{g}/2$ and conditions \eqref{A19} are satisfied due to the inequalities on the coupling constant $g$ \eqref{I0a}, \eqref{I0b}. Here we recall the notations \eqref{I27a}, \eqref{I6c}
\begin{equation}
\hat{K}(\l) = K_{\hat{g}^*}(\l|\hat{\o}), \qquad \hat{g}^* = \frac{g^*}{\o_1\o_2}, \qquad \hat{\bo} = \Bigl( \frac{1}{\o_2}, \frac{1}{\o_1} \Bigr).
\end{equation}
Note that the right hand side of \eqref{K-fourier} is analytic function of $\o_1, \o_2$ in the domain $\Re \o_j > 0$. The integral from the left is also analytic with respect to periods. Indeed, due to the bound \eqref{Kmu-bound} it is absolutely convergent uniformly on compact sets of parameters $\bo, g$ preserving the conditions \eqref{I0a}, \eqref{I0b}. Hence, the formula \eqref{K-fourier} also holds for complex periods under the mentioned conditions.
\cbk

\section{Inequalities with absolute value}
First, we will prove a little lemma that will be crucial for all other estimates.
\begin{lemma}\label{ineq}
For any $\epsilon \in [0, 2]$, $y_1, y_2, y \in \mathbb{R}$ we have
\begin{equation}
|y_1 - y_2| - |y_1 - y| - |y_2 - y| \leq \epsilon \left( |y_1| + |y_2| - |y| \right).
\end{equation}
\end{lemma}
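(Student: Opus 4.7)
The plan is to reduce the $\epsilon$-dependent inequality to two endpoint estimates that do not involve $\epsilon$, and then interpolate. Write $A = |y_1| + |y_2| - |y|$ so that the right-hand side reads $\epsilon A$. Since the left-hand side is independent of $\epsilon$ and $\epsilon A$ is a linear (hence monotone) function of $\epsilon$, its minimum on $[0,2]$ is attained at $\epsilon = 0$ when $A \geq 0$ and at $\epsilon = 2$ when $A < 0$. Consequently, the lemma follows once I verify the two unconditional estimates
\[
|y_1 - y_2| - |y_1 - y| - |y_2 - y| \leq 0
\quad\text{and}\quad
|y_1 - y_2| - |y_1 - y| - |y_2 - y| \leq 2\bigl(|y_1| + |y_2| - |y|\bigr).
\]

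The first estimate is just the triangle inequality $|y_1 - y_2| \leq |y_1 - y| + |y_2 - y|$. For the second, I would combine the triangle inequality $|y_1 - y_2| \leq |y_1| + |y_2|$ with the reverse triangle inequalities $|y_j - y| \geq |y| - |y_j|$ for $j = 1, 2$. Summing the latter gives $|y_1 - y| + |y_2 - y| \geq 2|y| - |y_1| - |y_2|$, and together with the former this yields
\[
|y_1 - y_2| - |y_1 - y| - |y_2 - y| \leq \bigl(|y_1| + |y_2|\bigr) - \bigl(2|y| - |y_1| - |y_2|\bigr) = 2\bigl(|y_1| + |y_2| - |y|\bigr),
\]
which is precisely the desired bound at $\epsilon = 2$.

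Finally, I assemble the pieces. When $A \geq 0$, the first estimate gives the left-hand side $\leq 0 \leq \epsilon A$. When $A < 0$, multiplying the inequality $\epsilon \leq 2$ by the negative quantity $A$ gives $\epsilon A \geq 2A$, so the second estimate gives the left-hand side $\leq 2A \leq \epsilon A$. There is no real obstacle in this argument — the whole content is elementary — so the only point worth emphasizing is the linearity in $\epsilon$ that reduces the statement to two triangle-inequality computations.
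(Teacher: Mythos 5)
Your proof is correct and follows essentially the same route as the paper: the same case split on the sign of $|y_1|+|y_2|-|y|$, with the plain triangle inequality handling one case and the reverse triangle inequalities $|y_j - y| \geq |y| - |y_j|$ together with $|y_1-y_2| \leq |y_1|+|y_2|$ handling the other. Your framing via linearity in $\epsilon$ and the two endpoints $\epsilon = 0, 2$ is just a repackaging of the same argument.
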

\begin{proof}
Consider two cases. First, assume that $|y_1| + |y_2| \geq |y|$. Then using triangle inequality we obtain
\begin{equation}
|y_1 - y_2| - |y_1 - y| - |y_2 - y| \leq 0 \leq \epsilon \left( |y_1| + |y_2| - |y| \right)
\end{equation}
since $\epsilon \geq 0$.

Second, assume $|y_1| + |y_2| \leq |y|$. Now using $|y_j - y| \geq |y| - |y_j|$ and $|y_1 - y_2| \leq |y_1| + |y_2|$ we arrive at
\begin{equation}
|y_1 - y_2| - |y_1 - y| - |y_2 - y| \leq 2 \left( |y_1| + |y_2| - |y| \right) \leq \epsilon \left( |y_1| + |y_2| - |y| \right)
\end{equation}
because $\epsilon \leq 2$.
\end{proof}

\begin{remark}
Note that Lemma \ref{ineq} holds for any norm and points $y_j$ in the corresponding normed space.
\end{remark}

Denote by $\bm{y}_k$ a vector with $k$ components $\bm{y}_k = \bigl( y^{(k)}_{1}, \dots, y^{(k)}_{k} \bigr)$, $y_j^{(k)}\in \R$. Define a function $S_n$ by a recurrence relation
\begin{equation}\label{Srec2}
\begin{aligned}
S_n ( \bm{y}_1, \dots, \bm{y}_n ) = \sum_{\substack{i, j = 1 \\ i \not= j}}^n \bigl| y^{(n)}_i - y^{(n)}_j \bigr| &- \sum_{i = 1}^{n} \sum_{j = 1}^{n - 1} \, \bigl| y^{(n)}_i - y^{(n - 1)}_j \bigr| \\
&+S_{n - 1} ( \bm{y}_1, \dots, \bm{y}_{n - 1} )
\end{aligned}
\end{equation}
with $S_1 = 0$. Also by $\| \bm{y}_k \|$ denote $L^1$-norm
\begin{equation}
\| \bm{y}_k \| = \sum_{j = 1}^k \bigl| y^{(k)}_j \bigr|.
\end{equation}

\begin{lemma}\label{Slemma}
The inequality
\begin{equation}\label{Sest}
S_n \leq \frac{1}{2}\sum_{\substack{i, j = 1 \\ i \not= j}}^n \bigl| y^{(n)}_i - y^{(n)}_j \bigr| + c_n \epsilon \| \bm{y}_n \|  - \epsilon \sum_{k = 1}^{n - 1} \| \bm{y}_k \|,
\end{equation}
holds for any $\epsilon \in \left[0, \frac{2(n - 1)}{c_n} \right]$, where the number $c_n$ is defined by recurrence relation $c_n = (n - 1) (1 + c_{n - 1})$ with $c_1 = 0$.
\end{lemma}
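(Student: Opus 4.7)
The key observation is that the right-hand side of \eqref{Sest} is affine in $\epsilon$ while the left-hand side $S_n$ does not depend on $\epsilon$. Hence the inequality for every $\epsilon \in [0, 2(n - 1)/c_n]$ is equivalent to its validity at the two endpoints $\epsilon = 0$ and $\epsilon = 2(n - 1)/c_n$; any intermediate value follows by writing it as a convex combination $\epsilon = (1 - t) \cdot 0 + t \cdot 2(n - 1)/c_n$ and using the trivial identity $S_n = (1 - t) S_n + t S_n$. This reduces the proof to two sub-claims, which I would handle separately.

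For the endpoint $\epsilon = 2(n - 1)/c_n$ my plan is to proceed by induction on $n$. Writing $S_n = A_n - B_n + S_{n - 1}$ with $A_n = \sum_{i \neq j \in [n]} |y^{(n)}_i - y^{(n)}_j|$ and $B_n = \sum_{i \in [n], k \in [n - 1]} |y^{(n)}_i - y^{(n - 1)}_k|$, apply Lemma \ref{ineq} with its maximal parameter $\epsilon = 2$ to each triple $(y^{(n)}_i, y^{(n)}_j, y^{(n - 1)}_k)$ with $i \neq j$ and $k \in [n - 1]$, sum over all such triples, and divide by $(n - 1)$; this gives
\begin{equation*}
A_n - B_n \leq \frac{A_n}{2} + 2(n - 1) \|\bm{y}_n\| - n \|\bm{y}_{n - 1}\|.
\end{equation*}
Combining with the inductive hypothesis at level $n - 1$ applied with parameter $\epsilon' = 2(n - 1)/c_n$ (which lies in the valid range $[0, 2(n - 2)/c_{n - 1}]$, as one checks using $c_n = (n - 1)(c_{n - 1} + 1)$), and bounding $A_{n - 1}/2 \leq (n - 2)\|\bm{y}_{n - 1}\|$ by the triangle inequality, the identity $c_{n - 1} + 1 = c_n/(n - 1)$ ensures that the coefficient of $\|\bm{y}_{n - 1}\|$ closes exactly to $-2(n - 1)/c_n$, completing the inductive step.

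For the endpoint $\epsilon = 0$ the bound reduces to $S_n \leq A_n/2$, which I would establish by telescoping from the auxiliary inequality
\begin{equation*}
A_m + A_{m - 1} \leq 2 B_m, \qquad m = 2, \ldots, n.
\end{equation*}
This gives $A_m - B_m \leq (A_m - A_{m - 1})/2$; summing over $m$ with $A_1 = 0$ yields $S_n \leq A_n/2$.

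The main obstacle is the auxiliary inequality $A_m + A_{m - 1} \leq 2 B_m$, which does not follow from a direct summation of Lemma \ref{ineq} (a naive uniform summation yields only the weaker bound $A_m + A_{m - 1} \leq \tfrac{4(m - 1)}{m} B_m$, which exceeds $2 B_m$ for $m \geq 3$). My plan is to use the cumulative-distribution representation: letting $a(s) = \#\{i : y^{(m)}_i \leq s\}$ and $b(s) = \#\{k : y^{(m - 1)}_k \leq s\}$, standard integral representations of sums of pairwise absolute values yield
\begin{equation*}
A_m + A_{m - 1} - 2 B_m = 2 \int_{-\infty}^{\infty} \bigl( a(s) - b(s) \bigr) \bigl( 1 - (a(s) - b(s)) \bigr) \, ds.
\end{equation*}
Since $a(s) - b(s)$ is integer-valued and the polynomial $u \mapsto u(1 - u)$ is non-positive on $\mathbb{Z}$, the integrand is pointwise non-positive, giving the required bound.
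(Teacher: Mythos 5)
Your proposal is correct, and it takes a genuinely different route from the paper's proof. I checked the three ingredients. The reduction to the endpoints is legitimate because the right-hand side of \eqref{Sest} is affine in $\epsilon$ while $S_n$ is not. For the endpoint $\epsilon=2(n-1)/c_n$, summing Lemma \ref{ineq} with parameter $2$ over all pairs $i<j$ and all $k$ and dividing by $n-1$ does give
\begin{equation*}
\tfrac{1}{2}A_n - B_n \leq 2(n-1)\|\bm{y}_n\| - n\|\bm{y}_{n-1}\|,
\end{equation*}
which is your displayed inequality; combined with the inductive hypothesis at $\epsilon'=2(n-1)/c_n=2/(1+c_{n-1})$ and $A_{n-1}/2\leq(n-2)\|\bm{y}_{n-1}\|$, the coefficient of $\|\bm{y}_{n-1}\|$ is $-2+c_{n-1}\epsilon'=-\epsilon'$, so the step closes exactly as you claim (the range check $\epsilon'\leq 2(n-2)/c_{n-1}$ holds for $n\geq3$; for $n=2$ the level-one statement is vacuous since $S_1=A_1=0$, so you should say a word about the base case, which follows from your displayed inequality alone). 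For the endpoint $\epsilon=0$, your identity is right: with $a(s),b(s)$ the counting functions one has $A_m=2\int a(m-a)\,ds$, $A_{m-1}=2\int b(m-1-b)\,ds$, $B_m=\int[a(m-1-b)+b(m-a)]\,ds$, the integrands vanish outside the convex hull of the data, and the combination equals $2\int(a-b)\bigl(1-(a-b)\bigr)ds\leq 0$ because $a-b$ is integer-valued; telescoping then gives $S_n\leq A_n/2$. By contrast, the paper proves \eqref{Sest} for all $\epsilon$ in a single induction: it orders the components, rewrites $\sum_{i\neq j}|y_i^{(n)}-y_j^{(n)}|$ via the rearrangement identity \eqref{ord-comp}, applies Lemma \ref{ineq} only to the extremal pair ($m=1$ terms) and plain triangle inequalities to all other groups, with the parameter choice $\epsilon'=(1+c_{n-1})\epsilon$. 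Your argument needs no ordering of components, isolates the $\epsilon=0$ case as the clean inequality $A_m+A_{m-1}\leq 2B_m$ (an exact ``negative-type'' identity, sharper in spirit than a term-by-term triangle-inequality bound), and treats only the extreme $\epsilon$ inductively through a symmetric averaging of Lemma \ref{ineq}; the paper's argument is more hands-on but entirely elementary and self-contained. Both yield the same constants $c_n$ and the same admissible range of $\epsilon$.
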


\begin{proof}
The proof goes by induction. Consider the case $n = 2$. Using Lemma~\ref{ineq} we have
\begin{equation}
\begin{aligned}
S_2 = 2 \bigl| y_1^{(2)} - y_2^{(2)} \bigr| - \bigl| y_1^{(2)} - & y_1^{(1)} \bigr| - \bigl| y_2^{(2)} - y_1^{(1)} \bigr| \\[6pt]
&\leq \bigl| y_1^{(2)} - y_2^{(2)} \bigr| + \epsilon \left( \bigl| y_1^{(2)} \bigr| + \bigl| y_2^{(2)} \bigr| \right) - \epsilon \bigl| y_1^{(1)} \bigr|
\end{aligned}
\end{equation}
for any $\epsilon \in [0, 2]$. Thus, we proved the base case.

Now suppose we proved the statement for $S_{n - 1}$. Both sides of the stated inequality are symmetric with respect to components of $\bm{y}_k$ for all $k = 1, \dots, n$. Therefore, without loss of generality we can assume the ordering
\begin{equation}\label{ord}
y^{(k)}_1 \geq \ldots \geq y^{(k)}_k, \qquad k = 1, \dots, n.
\end{equation}
For the vector $\bm{y}_n$ with ordered components we can write
\begin{equation}\label{ord-comp}
\sum_{\substack{i, j = 1 \\ i \not= j}}^n \bigl| y^{(n)}_i - y^{(n)}_j \bigr| = 2 \sum_{m = 1}^{\lfloor n/2 \rfloor} (n - 2m + 1) \bigl| y_m^{(n)} - y_{n - m + 1}^{(n)} \bigr|.
\end{equation}
Consequently, due to the recurrence relation \eqref{Srec2} and induction assumption
\begin{equation}\label{Sineq}
\begin{aligned}
S_n &\leq 2 \sum_{m = 1}^{\lfloor n/2 \rfloor} (n - 2m + 1) \bigl| y_m^{(n)} - y_{n - m + 1}^{(n)} \bigr| - \sum_{i = 1}^n \sum_{j = 1}^{n - 1} \, \bigl| y^{(n)}_i - y^{(n - 1)}_j \bigr| \\[5pt]
&+ \sum_{m = 1}^{\lfloor (n - 1)/2 \rfloor} (n - 2m) \bigl| y_m^{(n - 1)} - y_{n - m}^{(n - 1)} \bigr|  + c_{n - 1} \epsilon \| \bm{y}_{n - 1} \| - \epsilon \sum_{k = 1}^{n - 2} \| \bm{y}_k \|.
\end{aligned}
\end{equation}
Next step it to regroup and estimate terms from the first three sums. For the term with $m = 1$ from the first sum we use the Lemma \ref{ineq} and for all other terms we simply use triangle inequalities.

Consider term with $m = 1$ from the first sum and terms with $i = 1, n$ from the second double sum and write the estimate
\begin{equation}\label{ineq1}
\begin{aligned}
(n - 1) \bigl| y_1^{(n)} - y_{n}^{(n)} \bigr| - \sum_{j = 1}^{n - 1} & \left( \bigl| y_1^{(n)} - y^{(n - 1)}_j \bigr| + \bigl| y_{n}^{(n)} - y^{(n - 1)}_j \bigr| \right) \\[4pt]
& \leq (n - 1) \, \epsilon' \left( \bigl| y_1^{(n)} \bigr| + \bigl| y_n^{(n)} \bigr| \right) - \epsilon' \, \| \bm{y}_{n - 1} \|,
\end{aligned}
\end{equation}
where we used Lemma \ref{ineq} with parameter $\epsilon' \in [0, 2]$ multiple times. Similarly let us estimate the term with $m > 1$ from the first sum together with the corresponding terms from the second double sum
\begin{equation}\label{ineq2}
\begin{aligned}
(n - 2m + 1) &\bigl| y_m^{(n)} - y_{n - m + 1}^{(n)} \bigr| \\[4pt]
&- \sum_{j = m}^{n - m} \left( \bigl| y_m^{(n)} - y^{(n - 1)}_j \bigr| + \bigl| y_{n - m + 1}^{(n)} - y^{(n - 1)}_j \bigr| \right) \leq 0,
\end{aligned}
\end{equation}
where we used triangle inequality multiple times. Remaining from the second double sum terms can be grouped with terms from the third sum
\begin{equation}\label{ineq3}
\begin{aligned}
(n - 2m) & \bigl| y_m^{(n - 1)} - y_{n - m}^{(n - 1)} \bigr| \\[4pt]
&- \sum_{i = m + 1}^{n - m} \left( \bigl| y_i^{(n)} - y^{(n - 1)}_m \bigr| + \bigl| y_i^{(n)} - y^{(n - 1)}_{n - m} \bigr| \right) \leq 0,
\end{aligned}
\end{equation}
where we again used triangle inequalities.

So, starting from \eqref{Sineq} and using inequalities \eqref{ineq1}, \eqref{ineq2}, \eqref{ineq3} we obtain
\begin{equation}
\begin{aligned}
S_n \leq \sum_{m = 1}^{\lfloor n/2 \rfloor} (n - 2m + 1) \bigl| y_m^{(n)} - y_{n - m + 1}^{(n)} \bigr| + (n - 1) \epsilon' \left( \bigl| y_1^{(n)} \bigr| + \bigl| y_n^{(n)} \bigr| \right) \\
+ (c_{n - 1} \epsilon - \epsilon') \| \bm{y}_{n - 1} \| - \epsilon \sum_{k = 1}^{n - 2} \| \bm{y}_k \|.
\end{aligned}
\end{equation}
Finally, for the first sum we again use equality \eqref{ord-comp}. For the second sum we have
\begin{equation}
\bigl| y_1^{(n)} \bigr| +  \bigl| y_{n}^{(n)} \bigr| \leq \| \bm{y}_{n} \|.
\end{equation}
Choosing $\epsilon' = (1 + c_{n - 1} )\epsilon$ we get bound stated in the lemma. Note that since $\epsilon' \leq 2$ then $\epsilon \leq 2/(1 + c_{n - 1}) = 2(n - 1)/c_n$.
\end{proof}

\setcounter{corollary}{4}
\begin{corollary}
The inequality
\begin{equation}\label{Sest2}
S_n \leq (n - 1 +  \epsilon) \| \bm{y}_n \| - \frac{\epsilon}{c_n} \sum_{k = 1}^{n - 1} \| \bm{y}_k \|
\end{equation}
holds for any $\epsilon \in \left[ 0, 2(n - 1) \right]$.
\end{corollary}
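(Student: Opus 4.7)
The plan is to deduce this corollary directly from Lemma \ref{Slemma} by a triangle-inequality bound on the double-sum term and a rescaling of the parameter. First, I would bound the symmetric difference sum that appears in \eqref{Sest} by
\begin{equation*}
\frac{1}{2}\sum_{\substack{i,j=1 \\ i\neq j}}^n \bigl|y_i^{(n)}-y_j^{(n)}\bigr| \;\leq\; \frac{1}{2}\sum_{\substack{i,j=1 \\ i\neq j}}^n \bigl(|y_i^{(n)}|+|y_j^{(n)}|\bigr) \;=\; (n-1)\,\|\bm{y}_n\|,
\end{equation*}
where the last equality follows because each index $i$ appears $2(n-1)$ times in the expanded double sum.

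Plugging this into the inequality \eqref{Sest} of Lemma \ref{Slemma} gives
\begin{equation*}
S_n \;\leq\; \bigl[(n-1)+c_n\epsilon\bigr]\,\|\bm{y}_n\| \;-\; \epsilon \sum_{k=1}^{n-1}\|\bm{y}_k\|,
\end{equation*}
which is valid for any $\epsilon \in \bigl[0,\, 2(n-1)/c_n\bigr]$. To put the bound into the form stated in the corollary, I would simply reparameterize by setting $\epsilon' = c_n \epsilon$, so that $\epsilon = \epsilon'/c_n$; as $\epsilon$ ranges over $[0, 2(n-1)/c_n]$, the new parameter $\epsilon'$ ranges precisely over $[0, 2(n-1)]$. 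Substituting yields
\begin{equation*}
S_n \;\leq\; (n-1+\epsilon')\,\|\bm{y}_n\| \;-\; \frac{\epsilon'}{c_n}\sum_{k=1}^{n-1}\|\bm{y}_k\|,
\end{equation*}
which is \eqref{Sest2} (after renaming $\epsilon'$ back to $\epsilon$).

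There is no real obstacle here: the whole content of the corollary lies in Lemma \ref{Slemma}, whose proof via induction and the pointwise inequality of Lemma \ref{ineq} has already been done. The only things to check are that the crude triangle bound $|y_i^{(n)}-y_j^{(n)}|\leq |y_i^{(n)}|+|y_j^{(n)}|$ correctly produces the coefficient $(n-1)$ in front of $\|\bm{y}_n\|$, and that the range for $\epsilon'$ rescales to $[0, 2(n-1)]$ as claimed. Both are immediate, so the corollary is a one-line consequence of the lemma.
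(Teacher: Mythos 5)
Your proof is correct and is essentially the paper's own argument: the paper likewise deduces the corollary from Lemma \ref{Slemma} by bounding $\tfrac12\sum_{i\neq j}\bigl|y_i^{(n)}-y_j^{(n)}\bigr|\leq (n-1)\|\bm{y}_n\|$ via the triangle inequality and rescaling the parameter $\epsilon \rightarrow \epsilon/c_n$. No gaps; your counting of the coefficient $(n-1)$ and of the rescaled range $[0,2(n-1)]$ is exactly right (for $n\geq 2$, where $c_n>0$).
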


\begin{proof}
Use Lemma \ref{Slemma} rescaling the parameter $\ve \rightarrow \ve/c_n$ and
\begin{equation}
\frac{1}{2} \sum_{\substack{i, j = 1 \\ i \not= j}}^n \bigl| y^{(n)}_i - y^{(n)}_j \bigr| \leq (n - 1) \| \bm{y}_n \|
\end{equation}
to arrive at stated inequality.
\end{proof}

In fact, Lemma \ref{Slemma} gives an estimate for the integrand in the integral representation of the wave function $\Psi_{\bl_n}(\bx_n)$ \eqref{I12} and, as a consequence, the exponential bound for the wave function. Denote, as before \eqref{mu'},
\beq \mu'(\bx_n) = \prod_{\substack{i,j=1 \\ i<j}}^n \mu(x_i - x_j).\eeq
\begin{corollary}
The wave function admits the bound
\beq
\bigl| \mu'(\bx_n) \Psi_{\bl_n}(\bx_n) \bigr| \leq C \exp \pi\nu_g \left( \ve \| \bx_n \| -\frac{2}{\nu_g} \bbx_n \Im\l_n   \right)
\eeq
with some $C(g, \bo)$ for any $\ve \in(0, 2(n - 1)]$, assuming $x_j \in \R$ and
\begin{equation}
| \Im(\l_k - \l_j) |< \frac{\nu_g}{2(n - 1)!e} \, \ve, \qquad k,j=1, \dots,n.
\end{equation}
\end{corollary}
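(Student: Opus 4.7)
The approach is to apply Lemma~\ref{Slemma} directly to the integrand of the wave function's defining integral representation \eqref{int3}, treating $\bx_n$ as the top-level variable $\by_n$ in the recursion that defines $S_n$.

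Starting from \eqref{int3}, one takes the modulus of the integrand factor by factor. The oscillating factor $|e^{2\pi\imath\l_n\bbx_n}| = e^{-2\pi\bbx_n\Im\l_n}$ immediately produces the $-(2/\nu_g)\bbx_n\Im\l_n$ contribution inside the claimed exponent. The intermediate oscillating exponentials are bounded by $\exp(\pi\nu_g\delta_\Lambda\|\by_{k-1}\|)$, where $\delta_\Lambda := (2/\nu_g)\max_{k,j}|\Im(\l_k-\l_j)|$. Applying \eqref{Kmu-bound} to every factor of $K(\bx_n,\by_{n-1})$, $K(\by_k,\by_{k-1})$ and $\mu(\by_k)$, and bounding $\mu'(\bx_n)$ by $C\exp\bigl(\pi\nu_g\cdot\tfrac12\sum_{i\ne j}|x_i-x_j|\bigr)$, the combined bound on $|\mu'(\bx_n)\cdot(\text{integrand})|$ takes the form
\begin{equation*}
C\, e^{-2\pi\bbx_n\Im\l_n}\,\exp\pi\nu_g\Bigl(\tilde S_n(\bx_n,\by_{n-1},\ldots,\by_1) + \delta_\Lambda\sum_{k=1}^{n-1}\|\by_k\|\Bigr),
\end{equation*}
where $\tilde S_n$ is the exponent collected from the $K$- and $\mu$-factors together with the $\tfrac12\sum_{i\ne j}|x_i-x_j|$ contribution of $\mu'(\bx_n)$.

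The key observation is that $\tilde S_n$ coincides with $S_n(\by_n\!\mapsto\!\bx_n;\,\by_{n-1},\ldots,\by_1) - \tfrac12\sum_{i\ne j}|x_i-x_j|$, where $S_n$ is the function of Lemma~\ref{Slemma}. Inequality \eqref{Sest} (with $\bx_n$ in place of $\by_n$) therefore yields
\begin{equation*}
\tilde S_n \leq c_n\epsilon\|\bx_n\| - \epsilon\sum_{k=1}^{n-1}\|\by_k\|, \qquad \epsilon\in[0,\,2(n-1)/c_n].
\end{equation*}
Setting $\ve := c_n\epsilon$, the admissible range becomes $\ve\in[0,2(n-1)]$, exactly as in the corollary. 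The hypothesis $|\Im(\l_k-\l_j)| < \nu_g\ve/(2(n-1)!e)$ combined with the inequality $c_n < (n-1)!e$ from \eqref{kh5} gives $\delta_\Lambda < \ve/c_n$, so the coefficient of each $\|\by_k\|$ in the integrand bound is strictly negative.

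Consequently, the iterated integral over $\by_1,\ldots,\by_{n-1}$ converges to a finite constant depending only on $g$ and $\bo$, and the surviving $\bx_n$-dependence is exactly $\exp\pi\nu_g\bigl(\ve\|\bx_n\| - (2/\nu_g)\bbx_n\Im\l_n\bigr)$, which is the asserted bound. No serious obstacle is expected, since Lemma~\ref{Slemma} was tailored precisely for this kind of estimate; the only technical care lies in correctly identifying the exponent of the integrand with $\tilde S_n$ and in matching the factor $\tfrac12$ coming from $\mu'$ (summation over $i<j$) against the $\sum_{i\ne j}$ occurring at the top level of $S_n$.
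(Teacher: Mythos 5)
Your proposal is correct and follows essentially the same route as the paper: bound the integrand of \eqref{int3} factor by factor via \eqref{Kmu-bound}, recognize the resulting exponent (after adjoining the $\mu'(\bx_n)$ contribution) as $S_n$ with $\bx_n$ in the top slot minus the top-level double sum, apply Lemma \ref{Slemma} with $\epsilon=\ve/c_n$, and use $c_n<(n-1)!\,e$ to make the coefficient of each $\|\bm{y}_k\|$ negative so the $\bm{y}$-integrations yield a constant depending only on $g,\bm{\omega}$. The identification $\tilde S_n=S_n-\tfrac12\sum_{i\neq j}|x_i-x_j|$ is exactly the cancellation the paper performs, so no gap remains.
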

\begin{proof}
The function $\mu'(\bx_n)$ is estimated as
\begin{equation}\label{B43}
|\mu'(\bx_n)| \leq C_1 \exp \pi \nu_g \Biggl( \; \sum_{\substack{i,j=1 \\ i<j}}^n |x_i-x_j| \, \Biggr)
\end{equation}
with some $C_1(g, \bo)$, where we used bound from \eqref{Kmu-bound}.
The integral representation \eqref{I12} in full form
\begin{equation}
\begin{aligned}
\Psi_{\bm{\lambda}_n} (\bm{x}_n) = C_\Psi \int_{\mathbb{R}^{n(n-1)/2} } & d\bm{y}_{n - 1} \dots d\bm{y}_1 \; e^{ {2\pi \imath}{} \left[ \lambda_n \bbx_n + (\lambda_{n - 1} - \lambda_{n}) \bby_{n - 1} \right] } \K (\bm{x}_n, \bm{y}_{n - 1}) \\[5pt]
& \hspace{0.6cm} \times  \prod_{k = 2}^{n - 1} e^{{2\pi \imath}{}(\lambda_{k - 1} - \lambda_{k}) \bby_{k - 1}} \, \mu   ( \bm{y}_{k} ) \, \K  (\bm{y}_{k}, \bm{y}_{k - 1} ),
\end{aligned}
\end{equation}
where $C_\Psi$ contains all constants $d_k$. Denote the integrand by $H$. Assuming
\begin{equation}
|\Im(\l_{k - 1} - \l_{k})| \leq \delta_\Lambda \frac{\nu_g}{2}
\end{equation}
and using bounds \eqref{Kmu-bound} we arrive at
\begin{equation}
\begin{aligned}
|H| \leq C_2 \exp \pi\nu_g \Biggl(  -\frac{2}{\nu_g} \bbx_n \Im\l_n  &+\delta_\Lambda \sum_{k = 1}^{n - 1}\|\by_k\| \\
&+ S_n(\by_1, \dots,\by_{n - 1}, \bx_n) - \sum_{\substack{i,j=1\\i\not= j}}^n|x_i - x_j| \Biggr)
\end{aligned}
\end{equation}
with some $C_2(g,\bo)$. Using Lemma \ref{Slemma} with rescaling $\ve \rightarrow \ve/c_n$ we have
\begin{equation}\label{B47}
|H| \leq C_2 \exp \pi\nu_g \Biggl(  -\frac{2}{\nu_g} \bbx_n \Im\l_n  + \Bigl(\delta_\Lambda - \frac{\ve}{c_n} \Bigr) \sum_{k = 1}^{n - 1}\|\by_k\| + \ve \|\bx_n\| - \frac{1}{2} \sum_{\substack{i,j=1\\i\not= j}}^n|x_i - x_j| \Biggr)
\end{equation}
for any $\ve \in [0, 2(n - 1)]$. For
\begin{equation}
\delta_\Lambda < \frac{\ve}{(n - 1)! e} < \frac{\ve}{c_n},
\end{equation}
see \eqref{kh5}, the bound \eqref{B47} represents integrable function. Combining it with \eqref{B43} we arrive at the bound stated in the corollary.
\end{proof}

For the next inequality define another function that depends on vectors $\bm{y}_k$ and additional vector $\bm{t}_n = (t_1, \dots, t_n)$
\begin{equation}\label{Tdef}
T_n(\bm{y}_1, \dots, \bm{y}_n, \bm{t}_n) = \sum_{\substack{i, j = 1 \\ i \not= j}}^n | t_i - t_j | - \sum_{i, j = 1}^n \bigl| t_i - y_j^{(n)} \bigr| + S_n(\bm{y}_1, \dots, \bm{y}_n).
\end{equation}

\begin{lemma}
The inequality
\begin{equation}\label{Test}
\begin{aligned}
T_n \leq ( n + r ) \| \bm{t}_n \| - \frac{1 - r}{2 n \, c_n} \sum_{k = 1}^{n} \| \bm{y}_k \| - r \, \biggl| \sum_{j = 1}^n \bigl( t_j - y_j^{(n)} \bigr) \biggr|
\end{aligned}
\end{equation}
holds for any $r \in [0, 1]$.
\end{lemma}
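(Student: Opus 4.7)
The plan is to exploit the fact that the right-hand side of the claimed inequality is affine in $r\in[0,1]$ while $T_n$ does not depend on $r$. By the convex combination $\mathrm{RHS}(r)=(1-r)\mathrm{RHS}(0)+r\,\mathrm{RHS}(1)$, it therefore suffices to prove the two endpoint inequalities
\begin{equation*}
T_n\le n\|\bm{t}_n\|-\tfrac{1}{2nc_n}\sum_{k=1}^{n}\|\bm{y}_k\|\quad(r=0),
\qquad
T_n\le(n+1)\|\bm{t}_n\|-\bigl|\textstyle\sum_{j=1}^n(t_j-y_j^{(n)})\bigr|\quad(r=1).
\end{equation*}

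For the $r=1$ endpoint I would embed $T_n$ into $S_{n+1}$ by an auxiliary ``$L\to\pm\infty$'' argument: setting $\bm{y}^{(n+1)}:=(\bm{t}_n,L)$ and unwinding the recurrence \eqref{Srec2} one level, a direct computation using $|a-L|=\pm(L-a)$ for $|L|$ sufficiently large gives
\begin{equation*}
T_n=S_{n+1}\bigl(\bm{y}_1,\ldots,\bm{y}_n,(\bm{t}_n,L)\bigr)-nL+2\textstyle\sum_j t_j-\textstyle\sum_j y_j^{(n)}
\end{equation*}
as $L\to+\infty$, and the sign-flipped analogue as $L\to-\infty$. Applying Lemma~\ref{Slemma} to $S_{n+1}$ with parameter $\epsilon=0$ (forced, since the $c_{n+1}\epsilon\|\bm{y}^{(n+1)}\|$ term would otherwise grow linearly in $|L|$) and taking the minimum of the two limits yields the even stronger estimate $T_n\le\tfrac{1}{2}\sum_{i\ne j}|t_i-t_j|-|\sum_j(t_j-y_j^{(n)})|\le(n-1)\|\bm{t}_n\|-|\sum_j(t_j-y_j^{(n)})|$.

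For the $r=0$ endpoint I would first apply Lemma~\ref{Slemma} with a parameter $\epsilon'\in[0,2(n-1)/c_n]$ to replace $S_n$ inside $T_n$, reducing the problem to bounding
\begin{equation*}
V:=\sum_{i\ne j}|t_i-t_j|+\tfrac12\sum_{i\ne j}|y_i^{(n)}-y_j^{(n)}|-\sum_{i,j}|t_i-y_j^{(n)}|.
\end{equation*}
To bound $V$ I would use two complementary averaged forms of Lemma~\ref{ineq}: summing Lemma~\ref{ineq} applied to triples $(t_i,t_j,y_k^{(n)})$ over all $k=1,\ldots,n$ produces
\begin{equation*}
\sum_{i\ne j}|t_i-t_j|\le\tfrac{2(n-1)}{n}\sum_{i,j}|t_i-y_j^{(n)}|+2\epsilon_1(n-1)\|\bm{t}_n\|-\epsilon_1(n-1)\|\bm{y}_n\|,
\end{equation*}
and the symmetric bound on $\sum_{i\ne j}|y_i^{(n)}-y_j^{(n)}|$ comes from triples $(y_i^{(n)},y_j^{(n)},t_k)$. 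Balancing $\epsilon_1,\epsilon_2,\epsilon'$ against each other forces the $\|\bm{y}_n\|$ and $\sum_{k<n}\|\bm{y}_k\|$ coefficients to match the target $-\frac{1}{2nc_n}$; the denominator $c_n$ appears through the same rescaling $\epsilon\to\epsilon/c_n$ that takes Lemma~\ref{Slemma} to Corollary~5.

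The main obstacle will be this balancing step for $n\ge 4$, where the naive triangle inequality $\sum_{i\ne j}|t_i-t_j|\le 2(n-1)\|\bm{t}_n\|$ is too coarse: one must retain the sharper averaged bound throughout, track the residual $\frac{n-2}{n}\sum_{i,j}|t_i-y_j^{(n)}|$ (estimated via $\sum_{i,j}|t_i-y_j^{(n)}|\le n(\|\bm{t}_n\|+\|\bm{y}_n\|)$), and use the recurrence $c_n=(n-1)(1+c_{n-1})$ to propagate the $\frac{1}{2nc_n}$ factor down through all lower levels $\bm{y}_k$.
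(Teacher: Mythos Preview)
Your reduction to the two endpoints via affinity in $r$ is valid, and the $r=1$ argument through the embedding $T_n=\lim_{L\to\pm\infty}\bigl(S_{n+1}(\bm y_1,\ldots,\bm y_n,(\bm t_n,L))\mp nL+\cdots\bigr)$ with Lemma~\ref{Slemma} at $\epsilon=0$ is clean and correct; it even gives the stronger constant $n-1$ in place of $n+1$.

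The gap is in the $r=0$ endpoint. After applying Lemma~\ref{Slemma} with $\epsilon'=\tfrac{1}{2nc_n}$, you must show
\[
V=\sum_{i\ne j}|t_i-t_j|+\tfrac12\sum_{i\ne j}\bigl|y_i^{(n)}-y_j^{(n)}\bigr|-\sum_{i,j}\bigl|t_i-y_j^{(n)}\bigr|\le n\|\bm t_n\|-\Bigl(\tfrac{1}{2n}+\tfrac{1}{2nc_n}\Bigr)\|\bm y_n\|.
\]
Your averaged Lemma~\ref{ineq} bounds leave a positive residual $\tfrac{2n-3}{n}\sum_{i,j}|t_i-y_j^{(n)}|$ (or $\tfrac{n-2}{n}$ if you only average the $t$--sum), and the only available upper estimate $\sum_{i,j}|t_i-y_j^{(n)}|\le n(\|\bm t_n\|+\|\bm y_n\|)$ inserts a \emph{positive} $\|\bm y_n\|$ term of order $n$. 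Cancelling it via the $-\epsilon_1(n-1)\|\bm y_n\|$ slack costs $+2\epsilon_1(n-1)\|\bm t_n\|$, a $2{:}1$ trade that overshoots the $\|\bm t_n\|$ budget. This failure is already present at $n=2$: one needs $\epsilon_1-\epsilon_2\ge\tfrac32$ for the $\|\bm y_2\|$ coefficient but then the $\|\bm t_2\|$ coefficient is at least $4>2$. The balancing you anticipate as difficult for $n\ge4$ in fact has no solution in the parameters $\epsilon_1,\epsilon_2,\epsilon'$ for any $n\ge2$.

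The paper does not split into endpoints. It treats all $r\in[0,1]$ at once by first peeling off the last term through $-\tfrac{r}{n}\sum_{i,j}|t_i-y_j^{(n)}|\le -r\,\bigl|\sum_j(t_j-y_j^{(n)})\bigr|$, then applying Lemma~\ref{Slemma} with $\epsilon=\tfrac{1-r}{2nc_n}$. The crucial difference is that it then \emph{orders} the components of $\bm t_n$ and $\bm y_n$ and decomposes the remaining expression into shells $R_{n,m}$, pairing $|t_m-t_{n-m+1}|$ and $|y_m^{(n)}-y_{n-m}^{(n)}|$ against precisely matched blocks of cross--terms $|t_i-y_j^{(n)}|$ (see \eqref{Rnm1}--\eqref{Rnm4}); Lemma~\ref{ineq} is applied only once per shell, to a single matched pair, not averaged. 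This pairing, rather than averaging, is what prevents any residual $\sum_{i,j}|t_i-y_j^{(n)}|$ from appearing with a positive coefficient and is the ingredient your outline is missing.
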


\begin{proof}
Both sides of the stated inequality are symmetric with respect to components of the vectors $\bm{y}_k, \bm{t}_n$ (separately). Therefore, without loss of generality we assume ordering
\begin{equation}\label{ord2}
y^{(k)}_1 \geq \ldots \geq y_k^{(k)}, \qquad t_1 \geq \ldots \geq t_n, \qquad k = 1, \dots, n.
\end{equation}
Then, as in the proof of the previous lemma, we can write
\begin{equation}\label{tord}
\sum_{\substack{i, j = 1 \\ i \not= j}}^n | t_i - t_j | = 2 \sum_{m = 1}^{\lfloor n / 2 \rfloor} (n - 2m + 1) | t_m - t_{n - m + 1} |.
\end{equation}
Next, take arbitrary $r \in [0, 1]$ and use triangle inequalities for the second double sum in \eqref{Tdef} to obtain
\begin{equation}\label{tr}
- \sum_{i, j = 1}^n \bigl| t_i - y_j^{(n)} \bigr| \leq - \left(1 - \frac{r}{n} \right) \sum_{i, j = 1}^n \bigl| t_i - y_j^{(n)} \bigr| - r \, \biggl| \sum_{j = 1}^n \bigl( t_j - y_j^{(n)} \bigr) \biggr|.
\end{equation}
Combining two previous formulas \eqref{tord}, \eqref{tr} and Lemma \ref{Slemma} with the parameter \vspace{3pt}
\begin{equation}
\epsilon = \frac{1 - r}{2n \, c_n} \, \in \, \left[ 0, \frac{2(n - 1)}{c_n} \right]
\end{equation}
we arrive at inequality
\begin{equation}\label{Tineq1}
\begin{aligned}
T_n &\leq \sum_{m = 1}^{\lfloor n / 2 \rfloor} (n - 2m + 1) \left( 2 \, | t_m - t_{n - m + 1} | + \bigl| y_m^{(n)} - y^{(n)}_{n - m + 1} \bigr| \right) \\[6pt]
&- \left(1 - \frac{r}{n} \right) \sum_{i, j = 1}^n \bigl| t_i - y_j^{(n)} \bigr| + \frac{1 - r}{2n} \| \bm{y}_n \| - \frac{1 - r}{2n \, c_n} \sum_{k = 1}^{n - 1} \| \bm{y}_k \| - r \, \biggl| \sum_{j = 1}^n \bigl( t_j - y_j^{(n)} \bigr) \biggr|.
\end{aligned}
\end{equation}
Notice that two last sums from the right coincide with the corresponding terms in the stated inequality \eqref{Test}, so it is left to bound the rest ones. Denote them by $R_n$
\begin{equation}\label{Rn}
\begin{aligned}
R_n = \sum_{m = 1}^{\lfloor n / 2 \rfloor} (n - 2m & + 1) \left( 2 \, | t_m - t_{n - m + 1} | + \bigl| y_m^{(n)} - y^{(n)}_{n - m + 1} \bigr| \right) \\[6pt]
&- \left(1 - \frac{r}{n} \right) \sum_{i, j = 1}^n \bigl| t_i - y_j^{(n)} \bigr| + \frac{1 - r}{2n} \| \bm{y}_n \|.
\end{aligned}
\end{equation}
We wish to prove the inequality
\begin{equation}\label{Rineq}
R_n \leq (n + r) \| \bm{t}_n \| - \frac{1 - r}{2n \, c_n} \| \bm{y}_n \|.
\end{equation}

Now fix $m \in \{ 1, \dots, \lfloor n / 2 \rfloor \}$ and from all of the remaining terms $R_n$ consider the following ones
\begin{equation}\label{m}
\begin{aligned}
R_{n, m} &= 2(n - 2m + 1) |t_m - t_{n - m + 1}| + (n - 2m + 1) \bigl| y_m^{(n)} - y^{(n)}_{n - m + 1} \bigr| \\[5pt]
& -\left(1 - \frac{r}{n} \right) \sum_{i = m}^{n - m + 1} \left( \bigl| t_i - y_m^{(n)} \bigr| + \bigl| t_i - y_{n - m + 1}^{(n)} \bigr| \right) \\[5pt]
&- \left(1 - \frac{r}{n} \right) \sum_{j = m + 1}^{n - m} \left( \bigl| t_m - y_j^{(n)} \bigr| + \bigl| t_{n - m + 1} - y_j^{(n)} \bigr| \right) + \frac{1 - r}{2n} \left( \bigl| y_m^{(n)} \bigr| + \bigl| y_{n - m + 1}^{(n)} \bigr| \right).
\end{aligned}
\end{equation}
Then for even $n$ we have
\begin{equation}\label{Revenn}
R_n = \sum_{m = 1}^{\lfloor n / 2 \rfloor} R_{n, m}
\end{equation}
and for odd $n$
\begin{equation}\label{Roddn}
R_n = \sum_{m = 1}^{\lfloor n / 2 \rfloor} R_{n, m} - \left( 1 - \frac{r}{n} \right) \bigl| t_{\frac{n + 1}{2}} - y_{\frac{n + 1}{2}}^{(n)} \bigr| + \frac{1 - r}{2 n} \, \bigl| y_{\frac{n + 1}{2}}^{(n)} \bigr|,
\end{equation}
since in this case two variables $t_{(n + 1)/2}$, $y_{(n + 1)/2}^{(n)}$ don't enter the first sum in \eqref{Rn}.

Let us prove the following bound
\begin{equation}\label{Rnm}
R_{n, m} \leq ( n + r ) \left( |t_m| + |t_{n - m + 1}| \right) - \frac{1 - r}{2n \, c_n} \left( \bigl| y_m^{(n)} \bigr| + \bigl| y_{n - m + 1}^{(n)} \bigr| \right).
\end{equation}
Note that for even $n$ the inequality we wish to prove \eqref{Rineq} directly follows from it due to \eqref{Revenn}. For odd $n$ we have two more terms in \eqref{Roddn}, but they can be easily bounded using triangle inequality
\begin{equation}
\begin{aligned}
- \left( 1 - \frac{r}{n} \right) \bigl| t_{\frac{n + 1}{2}} - y_{\frac{n + 1}{2}}^{(n)} \bigr| + \frac{1 - r}{2 n}  \, \bigl| y_{\frac{n + 1}{2}}^{(n)} \bigr|  &\leq \left( 1 - \frac{r}{n} \right) \bigl| t_{\frac{n + 1}{2}} \bigr| + \frac{1 + r - 2n}{2n} \, \bigl| y_{\frac{n + 1}{2}}^{(n)} \bigr| \\[10pt]
& \leq (n + r ) \bigl| t_{\frac{n + 1}{2}} \bigr| - \frac{1 - r}{2 n \, c_n}  \, \bigl| y_{\frac{n + 1}{2}}^{(n)} \bigr|.
\end{aligned}
\end{equation}

The proof of \eqref{Rnm} requires several steps. First, using triangle inequalities we can obtain
\begin{equation}\label{Rnm1}
\begin{aligned}
(n - 2m + 1) & \bigl| y_m^{(n)} - y^{(n)}_{n - m + 1} \bigr| \\[5pt]
&- \frac{n - 2m + 1}{n - 2m + 2} \sum_{i = m}^{n - m + 1} \left( \bigl| t_i - y_m^{(n)} \bigr| + \bigl| t_i - y_{n - m + 1}^{(n)} \bigr| \right) \leq 0.
\end{aligned}
\end{equation}
For the rest of the first sum in \eqref{m} we use obvious inequality
\begin{equation}\label{Rnm2}
\begin{aligned}
-\biggl(  \frac{1}{n - 2m + 2} - \frac{r}{n} \biggr)  & \sum_{i = m}^{n - m + 1} \left( \bigl| t_i - y_m^{(n)} \bigr| + \bigl| t_i - y_{n - m + 1}^{(n)} \bigr| \right)\\[7pt]
&\hspace{-0.5cm} \leq -\frac{1 - r}{n} \, \Bigl( \bigl| t_m - y_m^{(n)} \bigr| + \bigl| t_m - y_{n - m + 1}^{(n)} \bigr|  \\[5pt]
&\hspace{2cm} + \bigl| t_{n - m + 1} - y_m^{(n)} \bigr| + \bigl| t_{n - m + 1} - y_{n - m + 1}^{(n)} \bigr| \Bigr).
\end{aligned}
\end{equation}
Then, we take the right-hand side of this formula and estimate it together with a part of $|t_m - t_{n - m + 1}|$ term using Lemma \ref{ineq} with the parameter $\epsilon$ = 1
\begin{equation}\label{Rnm3}
\begin{aligned}
\frac{2(1 - r)}{n} \, |t_m & - t_{n - m + 1}| - \frac{1 - r}{n} \, \Bigl( \bigl| t_m - y_m^{(n)} \bigr|  \\[7pt]
& + \bigl| t_m - y_{n - m + 1}^{(n)} \bigr|  + \bigl| t_{n - m + 1} - y_m^{(n)} \bigr| + \bigl| t_{n - m + 1} - y_{n - m + 1}^{(n)} \bigr| \Bigr) \\[7pt]
&\hspace{1.5cm} \leq \frac{1 - r}{n} \left( 2 |t_m| + 2 |t_{n - m + 1}| - \bigl| y_{m}^{(n)} \bigr| - \bigl| y_{n - m + 1}^{(n)} \bigr| \right).
\end{aligned}
\end{equation}
To bound the second sum in \eqref{m} we again use triangle inequalities
\begin{equation}\label{Rnm4}
\begin{aligned}
(n - 2m) \biggl(1 & - \frac{r}{n} \biggr) |t_m - t_{n - m + 1}| \\[5pt]
&- \left(1 - \frac{r}{n} \right) \sum_{j = m + 1}^{n - m} \left( \bigl| t_m - y_j^{(n)} \bigr| + \bigl| t_{n - m + 1} - y_j^{(n)} \bigr| \right) \leq 0.
\end{aligned}
\end{equation}
Collecting results of the four inequalities \eqref{Rnm1}, \eqref{Rnm2}, \eqref{Rnm3}, \eqref{Rnm4} we obtain the bound
\begin{equation}
\begin{aligned}
R_{n, m} \leq \biggl( n + r & + 2 - 2m - \frac{2 m r}{n} - \frac{2(1 - r)}{n} \biggr) | t_m - t_{n - m + 1} | \\[7pt]
&+ \frac{2 (1 - r)}{n} \left( |t_m| + |t_{n - m + 1}| \right) - \frac{1 - r}{2 n} \left( \bigl| y_{m}^{(n)} \bigr| + \bigl| y_{n - m + 1}^{(n)} \bigr| \right).
\end{aligned}
\end{equation}
Finally, to arrive at the inequality \eqref{Rnm} we use $|t_m - t_{n - m + 1}| \leq |t_m| + |t_{n - m + 1}|$ together with
\begin{equation}
2 - 2m - \frac{2 m r}{n} \leq 0, \hspace{1.5cm} \frac{1 - r}{2 n} \geq \frac{1 - r}{2 n \, c_n} \, .
\end{equation}
Thus, we proved the key bound \eqref{Rnm} and, consequently, the lemma.
\end{proof}


\begin{thebibliography}{99}

	\bibitem[B]{B} E. W. Barnes, \textit{The theory of the double gamma function}, Philosophical Transactions of the Royal Society of London. Series A, Containing Papers of a Mathematical or Physical Character \textbf{196} (1901), 265--387.	
	
	\bibitem[BDKK]{BDKK} N. Belousov, S. Derkachov, S. Kharchev, S. Khoroshkin, \textit{Baxter operators in Ruijsenaars hyperbolic system I. Commutativity of $Q$-operators}, arXiv:2303.06383 (2023).
	
	\bibitem[F]{F} L. D. Faddeev, \textit{Discrete Heisenberg-Weyl Group and modular group}, Letters in Mathematical Physics \textbf{34} (1995), 249–254.
	
	\bibitem[FKV]{FKV} L. D. Faddeev, R. M. Kashaev, A. Yu. Volkov, \textit{Strongly Coupled Quantum Discrete Liouville Theory. I: Algebraic Approach and Duality}, Communications in Mathematical Physics \textbf{219}:1 (2001), 199--219.
	
	\bibitem[GLO]{GLO} A. Gerasimov, D. Lebedev, S. Oblezin, \textit{Baxter operator and Archimedean Hecke algebra}, Communications in mathematical physics \textbf{284}:3 (2008), 867--896.

	\bibitem[HR1]{HR1} M. Halln\"as, S. Ruijsenaars, \textit{Joint Eigenfunctions for the Relativistic Calogero–Moser Hamiltonians of Hyperbolic Type: I. First Steps}, International Mathematics Research Notices \textbf{2014}:16 (2014), 4400--4456.
	
	\bibitem[HR2]{HR2} M. Halln\"as, S. Ruijsenaars, \textit{Joint Eigenfunctions for the Relativistic Calogero–Moser Hamiltonians of Hyperbolic Type II. The Two-and Three-Variable Cases}, International Mathematics Research Notices \textbf{2018}:14 (2018), 4404--4449.
	
	\bibitem[HR3]{HR3} M. Halln\"as, S. Ruijsenaars, \textit{Joint eigenfunctions for the relativistic Calogero–Moser Hamiltonians of hyperbolic type. III. Factorized asymptotics}, International Mathematics Research Notices \textbf{2021}:6 (2021), 4679--4708.
	
	\bibitem[KK1]{KK1} S. Kharchev, S. Khoroshkin, {\it Wave function for  $GL(n,\R)$ hyperbolic Sutherland model}, arXiv:2108.04895 (2021).

	\bibitem[KK2]{KK2} S. Kharchev, S. Khoroshkin, \textit{Wave function for $ GL (n,\mathbb {R}) $ hyperbolic Sutherland model II. Dual Hamiltonians}, arXiv:2108.05393 (2021).
	
	\bibitem[KN]{KN} Y. Kajihara, M. Noumi, \textit{Multiple elliptic hypergeometric series. An approach from the Cauchy determinant}, Indagationes Mathematicae \textbf{14}:3-4 (2003), 395--421.

	\bibitem[Ku]{Ku} N. Kurokawa, S-Y. Koyama, \textit{Multiple sine functions}, Forum Mathematicum \textbf{15} (2003), 839--876.
	
	\bibitem[M]{M} I. Macdonald, {\it Symmetric function and Hall Polynomials}, Second edition, Oxford, Oxford University Press (1995).
	
	\bibitem[PT]{PT} B. Ponsot, J. Teschner, \textit{Clebsch–Gordan and Racah–Wigner Coefficients for a Continuous Series of Representations of $ U_q (sl(2, \R))$}, Communications in Mathematical Physics \textbf{224}:3 (2001), 613--655.	
	
	\bibitem[R1]{R4} S. N. M. Ruijsenaars, \textit{Complete integrability of relativistic Calogero-Moser systems and elliptic function identities}, Communications in Mathematical Physics \textbf{110} (1987), 191--213.
	
	\bibitem[R2]{R1} S. N. M. Ruijsenaars, \textit{First-order analytic difference equations and integrable quantum systems}, Journal of Mathematical Physics \textbf{38} (1997), 1069--1146.
	
	\bibitem[R3]{R2} S. N. M. Ruijsenaars, \textit{Zero-eigenvalue eigenfunctions for differences of elliptic relativistic Calogero-Moser Hamiltonians}, Theoretical and mathematical physics \textbf{146}:1 (2006), 25--33.	
	
	\bibitem[R4]{R3} S. N. M. Ruijsenaars, \textit{A relativistic conical function and its Whittaker limits}, SIGMA. Symmetry, Integrability and Geometry: Methods and Applications \textbf{7} (2011), 101.

\end{thebibliography}
\end{document}